\let\cite\citep
\def\eps{\epsilon}
\def\C{\mathcal{C}}
\def\V{\mathcal{V}}
\def\I{\mathcal{I}}
\def\cost{\mathrm{cost}}
\def\top{\mathrm{top}}
\def\dist{\mathrm{dist}}
\def\Egal{\mathrm{EC}}
\def\Util{\mathrm{SC}}
\def\weight{\mathrm{wt}}
\def\Cl{\mathrm{Cluster}}
\def\reals{\mathbb{R}}
\def\OPT{\mathrm{OPT}}
\def\A{\mathrm{Alg}}
\def\cres{\text{cr}}
\newcommand*{\xdash}[1][3em]{\rule[0.5ex]{#1}{0.55pt}}
\newtheorem{observation}{Observation}
\begin{document}
\title{On the Distortion of Committee Election with $1$-Euclidean Preferences and Few Distance Queries%
\thanks{This work has been supported by project MIS 5154714 of the National Recovery and Resilience Plan Greece 2.0 funded by the European Union under the NextGenerationEU Program and by the Hellenic Foundation for Research and Innovation (H.F.R.I.) under the ``First Call for H.F.R.I. Research Projects to support Faculty members and Researchers and the procurement of high-cost research equipment grant'',  project BALSAM, HFRI-FM17-1424. Most of this work was done while Panagiotis Patsilinakos was with the National Technical University of Athens.}} 
%

\titlerunning{Distortion of Committee Election with $1$-Euclidean Preferences and Few Distance Queries}

\author{Dimitris Fotakis\inst{1,2}  \and Laurent Gourv{\`e}s\inst{3} \and Panagiotis Patsilinakos\inst{4}}

\authorrunning{D. Fotakis, L. Gourv{\`e}s and P. Patsilinakos}

\institute{School of Electrical and Computer Engineering\\ National Technical University of Athens, 15780 Athens, Greece\\[2pt]
\and
Archimedes Research Unit, Athena RC, 15121 Athens, Greece\\[2pt]
\and 
Universit\'e Paris Dauphine-PSL, CNRS, LAMSADE, 75016, Paris, France\\[2pt]
\and
Department of Informatics, Athens University of Economics and Business, Athens 10434, Greece\\[2pt]
Emails: \email{fotakis@cs.ntua.gr}, \email{laurent.gourves@dauphine.fr}, \email{patsilinak@mail.ntua.gr}}

\maketitle              

\begin{abstract}
We consider committee election of $k \geq 2$ (out of $m \geq k+1$) candidates, where the voters and the candidates are associated with locations on the real line. Each voter's cardinal preferences over candidates correspond to her distance to the candidate locations, and each voter's cardinal preferences over committees is defined as her distance to the nearest candidate elected in the committee. We consider a setting where the true distances and the locations are unknown. We can nevertheless have access to degraded information which consists of an order of candidates for each voter. We investigate the best possible distortion (a worst-case performance criterion) wrt. the social cost achieved by deterministic committee election rules based on ordinal preferences submitted by $n$ voters and few additional distance queries. 
For $k = 2$, we achieve bounded distortion without any distance queries; we show that the distortion is $3$ for $m = 3$, and that the best possible distortion achieved by deterministic algorithms is at least $n-1$ and at most $n+1$, for any $m \geq 4$. 
For any $k \geq 3$, we show that the best possible distortion of any deterministic algorithm that uses at most $k-3$ distance queries cannot be bounded by any function of $n$, $m$ and $k$. We present deterministic algorithms for $k$-committee election with distortion of $O(n)$ with $O(k)$ distance queries and $O(1)$ with $O(k \log n)$ distance queries.
\keywords{$1$-Euclidean preferences; multiwinner voting; metric distortion; $k$-facility location}
\end{abstract}

\pagenumbering{arabic}
\pagestyle{plain}

\section{Introduction}
\label{s:intro}

Electing a set of representatives based on the preferences submitted by a set of voters is a central problem in social choice. Applications span over a wide range of settings, from single and multi-winner elections to recommendation systems and machine learning (see e.g., \cite{enelow1984spatial,elkind2017multiwinner,VolkovsZ14,preflearning,LuB14}). 
%
%
In typical applications, voters express \emph{ordinal} preferences over the set of candidates, which are consistent with their \emph{cardinal} preferences, but do not include any quantitative information about the \emph{strength} of each preference. An important reason for resorting to ordinal information has to do with the cognitive difficulty of quantifying preferences. Arguably, it is much easier for a voter to rank a set of candidates from most to least preferable, than to assign an exact utility to each of them. 
However, crucial information may be lost when voters summarize cardinal to ordinal preferences. 

\citet{PR06} introduced the framework of \emph{utilitarian distortion} as a means to quantify the efficiency loss, due to the fact that elections are based on ordinal information only, and to investigate the sensitivity of voting rules to the absence of cardinal information. 
The distortion of a voting rule is the worst-case approximation ratio of its social welfare to the optimal social welfare achievable when cardinal information is available. 
Previous work has quantified the best possible distortion of single and multi-winner voting rules (often assuming normalized cardinal utilities, see e.g., \cite{BoutilierCHLPS15,CaragiannisP11,caragiannis2017subset}). 

Motivated by the frequent use of spatial preferences in social choice (see e.g., \cite{enelow1984spatial}), \citet{AnshelevichBEPS18} introduced the framework of \emph{metric distortion}, where the voters and the candidates are associated with locations in an underlying metric space. The voters' cardinal preferences over candidates correspond to their distance to the candidate locations. The voters rank the candidates in increasing order of distance and submit this information to the voting rule. Without knowledge of the voter and candidate locations and distances, the voting rule aims to minimize the sum of distances (a.k.a. the \emph{social cost}) of the voters to the candidate elected. Distortion is now defined wrt. the social cost, instead of the social welfare. In the last few years, there has been significant interest in analyzing the metric distortion of prominent voting rules (see e.g.,  \cite{AnshelevichBEPS18,SkowronE17,goel2017metric,kempe2020analysis,AnagnostidesFP22})  and in designing voting rules with optimal metric distortion for single-winner elections \cite{gkatzelis2020resolving,Kempe2023}. 

Interestingly, there has not been much previous work on the metric distortion of multiwinner voting, where we elect a committee of $k \geq 2$ (out of $m \geq k+1$) candidates based on ordinal preferences submitted by $n$ voters. As before, the voters and the candidates are associated with locations in a metric space and the voters' cardinal preferences correspond to their distance to the candidate locations. However, there are many different ways to define the voter cardinal preferences over committees, resulting in different types and desirable properties of multiwinner elections (see e.g., \cite{ElkindFSS17,FalizewksiBook2017}). 

\citet{GoelHK18} and \citet{chen2020favorite} were the first to consider the metric distortion of committee elections, in a setting where the cost of each voter for a committee is defined as the sum of her distances to all committee members. \citet{GoelHK18} proved that the best possible distortion in this setting is equal to the best possible distortion of single-winner voting and can be achieved by repeatedly applying an optimal (wrt. metric distortion) single-winner voting rule. \citet{chen2020favorite} proved that single-vote rules achieve a best possible distortion for the case where $k = m-1$, i.e., when we have to exclude a single candidate from the committee. 

\citet{CSV22} considered the metric distortion of $k$-committee election with the cost of each voter for a committee defined as her distance to the $q$-th nearest member. They proved an interesting trichotomy: the distortion is unbounded, if $q \leq k/3$, $\Theta(n)$, if $q \in (k/3, k/2]$, and equal to the best possible metric distortion of single-winner election, if $q > k/2$. For the most interesting case where $q=1$ and each voter's cost is her distance to the nearest committee member, their results imply that the distortion is $\Theta(n)$ if $k=2$, and unbounded for all $k \geq 3$, with their lower bounds standing even if the voters and the candidates are embedded in the real line. 


Subsequently, \citet{BCFRSS24,Pulya22} considered the metric distortion of classical clustering problems, such as $k$-median (which corresponds to $k$-committee election with $q=1$) and $k$-center, for $k \geq 2$, in a setting where the clustering algorithm receives only ordinal information about demand points' locations and may query few distances. They focused on the case where the voter and the candidate locations coincide (a.k.a. \emph{peer selection}\,), and asked about the minimum number of distance queries required for constant distortion. For $k$-median, \citet{Pulya22} proved that $O(1)$ distortion can be achieved deterministically with $O(n \mathrm{poly(\log n)})$ distance queries and by a randomized algorithm with $O(nk)$ queries. \citet{BCFRSS24} gave a randomized $O(1)$-distortion algorithm with $O(k^4 \log^5 n)$ queries. As for $k$-center, \citet{BCFRSS24} showed how to implement the classical $2$-approximate greedy algorithm with $k(k-1)/2$ queries and presented a deterministic $4$-distortion algorithm with only $2k$ distance queries. \citet{BCFRSS24} also proved lower bounds showing that in general metric spaces, their query bounds are not far from best possible. 

\medskip\noindent\textbf{Motivation and Objective.}
In this work, we study the metric distortion of $k$-committee elections where the cost of each voter for a committee is defined as her distance to the nearest member (i.e., we have $q=1$; it corresponds to the $k$-median setting in \cite{BCFRSS24,Pulya22}). Our setting is conceptually close to (and strongly motivated by) the prominent committee election rules of \cite{CC83,Monroe95}, which aim to elect a diverse committee that best reflects the preferences of the entire population of voters, see also \cite{ElkindFSS17,FalizewksiBook2017}. 

Our approach is rather orthogonal to \cite{BCFRSS24,Pulya22}. We consider the more general (technically more demanding and standard in computational social choice) setting where the sets (and the locations) of voters and candidates may be different, and focus on deterministic rules and on the simplest (but nevertheless interesting and challenging enough) case of \emph{$1$-Euclidean preferences}, where the voters and the candidates are embedded in the real line. As in \cite{BCFRSS24,Pulya22} (and also motivated by the success of \citet{ABFV20,ABFV22,ABFV21} in improving the utilitarian distortion for single-winner elections and one-sided matchings with cardinal queries), we aim to shed light on the following: 

\begin{question}\label{quest:main}
How many distance queries are required for a bounded (or even constant) distortion in $k$-committee election with $1$-Euclidean preferences, for $k \geq 3$? 
\end{question}

The case of $1$-Euclidean preferences is particularly interesting because it allows for a maximum possible exploitation of ordinal preferences towards achieving low distortion with a small number of distance queries. 
Moreover, the lower bounds of \cite{BCFRSS24} are based on tree metrics and do not apply to $1$-Euclidean preferences. 
As for the upper bounds of  \citet{BCFRSS24,Pulya22}, though very strong and informative about the power of distance queries in $k$-median and $k$-center, there are two key difficulties towards applying them to our setting where the voter and candidate locations do not coincide: (i) in general metric spaces, we do not know how to extract (even approximate) information about candidate-to-candidate or voter-to-voter distances from ``regular'' voter-to-candidate distance queries (the latter request information present in voter cardinal preferences); and (ii) to the best of our understanding, the algorithms of \citet{BCFRSS24,Pulya22} require ordinal information about how candidate and/or voter locations are ranked in increasing order of distance to certain candidate locations; we do not know how such ordinal information can be extracted from voter ordinal preferences, if the voter and the candidate locations are different. 

\medskip\noindent\textbf{Contribution and Techniques.}
%
We consider the general metric distortion setting, where the voter and the candidate locations may be different, focus on the simplest case of the line metric (and deal with difficulties (i) and (ii) above), and provide almost best possible answers to 
%
%
Question~\ref{quest:main}. 

For completeness, in Appendix~\ref{s:ordinal}, we start with the case where $k=2$ candidates are chosen out of $m \geq 3$ candidates arranged on the real line based on the ordinal preferences submitted by $n$ voters without using any distance queries. We show that for the case where we have $m \geq 4$ candidates, the distortion of any deterministic rule is at least $n-1$. On the positive side, we show that clustering the voters based on the two extreme candidates and electing the median of each cluster results in a distortion of at most $n+1$. We also present a deterministic rule that elects $k=2$ out of $m = 3$ candidates and achieves a distortion of $3$, which is best possible in our setting. 

Then, in Section~\ref{s:queries}, we review three different query types (voter-to-candidate, candidate-to-candidate and voter-to-voter). We show that the answer to queries of the second and the third types can be obtained from a small constant number of ``regular'' voter-to-candidate distance queries. Thus, we can rely on the more convenient candidate-to-candidate distance queries by loosing a small constant factor in the number of queries.  

In Section~\ref{s:lower-bound}, we lower bound the number of distance queries required for bounded distortion. We show that for any $k \geq 3$, the distortion of any deterministic rule that uses at most $k-3$ distance queries and selects $k$ out of $m \geq 2(k-1)$ candidates on the real line is not bounded by any function of $n$, $m$ and $k$ (Theorem~\ref{thm:query-lower-bound}). Our construction shows that a bounded distortion is not possible if we restrict distance queries to few top candidates of each voter. 

In Section~\ref{s:greedy}, we asymptotically match the lower bound above with a greedy voting rule, which uses at most $6(k-3)+3$ queries and achieves a distortion of at most $5n$ (Theorem~\ref{thm:k-queries}). It is based on the classical greedy algorithm for $k$-center \cite[Section~2.2]{WS10} (as it also happens with Polar-Opposites in \cite{CSV22} and the $k$-center algorithm in \cite[Section~3.1]{BCFRSS24}). 
In our $1$-dimensional setting, the greedy algorithm starts with the leftmost and the rightmost candidates. Then, in each iteration, it includes in the committee the furthest candidate to the set of candidates already elected. Exploiting the $1$-dimensional structure of the instance, we show how to compute the furthest candidate in each interval defined by a pair of elected candidates that are consecutive on the real line, using the voters' ordinal preferences and at most $3$ distance queries. We observe that the next candidate to be elected in the committee is one of these furthest candidates. Interestingly, greedy achieves a distortion of at most $5$ for the egalitarian cost, where we aim to minimize the maximum voter cost (and corresponds to the $k$-center objective, but with different candidate / potential center and voter / demand locations). 

In Section~\ref{s:good}, we show how to achieve low distortion with a small number of distance queries by selecting a small representative set of candidates and focusing on the restricted instance induced by them (Theorem~\ref{thm:factor3}). To demonstrate the usefulness of this reduction, in Section~\ref{s:coresets}, we exploit a generalization of the greedy rule. Our construction for selecting a small representative set of candidates is inspired by the notion of \emph{coresets}, extensively used for $k$-median in computational geometry (see e.g., \cite{FrahlingS05}). Our construction uses $O(k\log n)$ distance queries and computes a set of $O(k\log n)$ representative candidates that allow for a distortion of $5$ (Theorem~\ref{thm:coreset}).
The idea is to maintain a hierarchical partitioning of the candidate axis into a set of intervals, so that we can upper bound the contribution of the voters associated with each interval to the social cost. In each iteration, the most expensive interval, wrt. its contribution to the social cost, is split into two subintervals, defined by the two candidates on the left and on the right of the interval's midpoint. The key step is to show that interval subdivision can be implemented using the voter ordinal preferences and at most $4$ distance queries.

\smallskip\noindent\textbf{Related Work.}
Metric distortion was introduced in \cite{AnshelevichBEPS18}, where the distortion of many popular voting rules for single-winner elections was studied. Subsequent work analyzed the metric distortion of popular voting rules, such as STV \cite{SkowronE17,AnagnostidesFP22}. \citet{MunagalaW19} and \citet{kempe2020analysis} presented deterministic rules with distortion $2+\sqrt{5}$, breaking the barrier of $5$ achieved by Copeland. \citet{gkatzelis2020resolving} introduced Plurality Matching and proved that it achieves an optimal distortion of $3$ in general metric spaces (see also \cite{Kempe2023}). \citet{AnshZ21} studied the distortion of single and multiwinner elections with known candidate locations. 
\citet{Abramowitz2019} resorted to additional information about the strength of voter preferences in order to improve on the best known distortion for single-winner elections. 
The reader is referred to the survey of \citet{Anshelevich2021} for a detailed overview. 

\citet{BoutilierCHLPS15} and \citet{caragiannis2017subset} studied the best possible utilitarian distortion of single and multiwinner elections, respectively. \citet{ABFV20}  significantly improved on the best possible utilitarian distortion for single-winner elections using cardinal information. They introduced a family of single-winner voting rules with distortion $O(m^{1/(\ell+1)})$ using $O(n \ell\log m)$ value queries. Subsequently, \citet{ABFV22,ABFV21} significantly improved on the best possible utilitarian distortion for one-sided matchings using algorithms that resort to a small number of value queries per voter. Interestingly, our query bounds are linear in the size $k$ of the committee and only logarithmic in the number $n$ of voters (instead of linear in $n$ in the query bounds of \cite{ABFV20,ABFV22,ABFV21}).

The main result of \cite{FGM16} implies that the metric distortion for the utilitarian version of $k$-committee election with $1$-Euclidean preferences is $3$, for $k \in \{1, 2\}$, and at most $\frac{2k-1}{2k-3}$, for any $k \geq 3$. Namely, the distortion for the utilitarian version of $k$-committee election on the real line tends to $1$ as the committee size $k$ increases. 

Multiwinner voting is a significant research direction in social choice and has been studied from many different viewpoints, e.g., proportional representation \cite{AzizBCEFW17,PetersS20}, axiomatic justification \cite{ElkindFSS17}, core-stability in restricted domains \cite{PierczynskiS22}.  Selection of a single candidate or a committee of candidates based on $1$-Euclidean preferences submitted by voters (or agents) is a typical setting in social choice and mechanism design and has been the topic of previous work (see e.g., \cite{Miy01,ProcacciaT13,FotakisT14,FeldmanFG16} for representative previous work on mechanism design, and \cite{FGM16,FG22} and few references in \cite{Anshelevich2021} for representative previous work on social choice and distortion).

\section{Model and Notation}
\label{s:notation}

%
%
We consider a set $\C = \{ c_1, \ldots, c_m \}$ of $m$ candidates and a set $\V = \{ v_1, \ldots, v_n \}$ of $n$ voters. We assume that they are all located on the real line $\reals$, i.e., each candidate $c_i$ (resp.
voter $v_j$) is associated with a location $x(c_i) \in \reals$ (resp. $x(v_j) \in \reals$). For brevity, we usually let $c_i$ (resp. $v_j$) denote both the candidate (resp. the voter) and her location $x(c_i)$ (resp. $x(v_j)$). We always index candidates in increasing order of their real coordinates, i.e., $c_1 < c_2 < \cdots < c_m$, which is also the order they appear on the candidate axis from left to right. We let $\C[c, c'] = \C \cap [c, c']$ be the set (or interval) of candidates in $\C$ between $c$ and $c'$ on the candidate axis.

For each voter $v$, we let her $L_1$ distance to the candidate locations quantify her cardinal preferences over $\C$. I.e., $v$'s cost for being represented by a candidate $c$ is
\[ \cost_v(c) = d(v, c) = |v - c| = |x(v) - x(c)|\,. \]
For a voter $v$ and a set $S \subseteq \C$ of candidates, we let 
$d(v, S) = \min_{c \in S} \{ d(v, c) \}
           = \min_{c \in S} \{ |v - c| \}$.
%
Motivated by the \citet{CC83} rule for $k$-committee election, we assume that each voter $v$ is represented by (or is assigned to) her nearest candidate in any given set $S$ of elected candidates. 
Formally, for any $S \subseteq \C$, we let 
%
%
$\cost_v(S) = d(v, S) = \min_{c \in S} \{ d(v, c) \}$ be the cost experienced by $v$ from the set $S$ of elected candidates.

\smallskip\noindent\textbf{Problem Definition.}
The problem of \textbf{$k$-Committee Election} is to select a candidate set (a.k.a. committee) $S \subseteq \C$, with $|S| = k \leq m-1$, that minimizes the (utilitarian) \emph{social cost} $\Util(S) = \sum_{v \in \V} \cost_v(S)$ of the voters. We also consider the \emph{egalitarian cost} $\Egal(S) = \max_{v \in \V} \cost_v(S)$ of the voters for a $k$-committee $S$ of elected candidates. 
We often refer to $(\C, \V)$, where $\C$ is the set of candidates and $\V$ is the set of voters, along with their locations on the real line (which are assumed fixed, but unknown to the voting rule), as an \emph{instance} of $k$-Committee Election. 

 \smallskip\noindent\textbf{Committee Election with 1-Euclidean Preferences and Distance Queries.}
$k$-Committee Election can be solved in $O(n k \log n)$ time, by dynamic programming \cite{kmedian}, if we have access to the voter and the candidate locations on the real line (or to all voter-candidate distances). However, in our setting, every voter $v$ provides only a ranking $\succ_v$ over the set $\C$ of candidates that is consistent with the function $\cost_v : \C \to \reals_{\geq 0}$. Namely, for every two candidates $c$ and $c'$, $c \succ_v c'$ (i.e., $v$ prefers $c$ to $c'$) if and only if $d(v, c) < d(v, c')$. Every voter's ranking constitutes her preference. As usual in relevant literature (see e.g., \cite[Section~2]{Anshelevich2021}), we assume that for every voter $v$, $\succ_v$ is a strict total order, i.e., that for every pair of candidates $c$ and $c'$, $d(v, c) \neq d(v, c')$.

Our committee election rules receive a ranking profile $\vec{\succ} = (\succ_1, \ldots, \succ_n)$ consisting of a strict total order $\succ_j$ over $\C$ for each voter $v_j \in \V$. 
We only consider \emph{$1$-Euclidean} ranking profiles $\vec{\succ}$, in the sense that all $\succ_v$ in $\vec{\succ}$ are  consistent with a cost function $\cost_v$ computed wrt. some fixed (and common) collection of voter and candidate locations on the real line. 
Under the assumption that total orders $\succ_v$ are strict, $1$-Euclidean ranking profiles are \emph{single-peaked} \cite{DB48} and \emph{single-crossing} \cite{K68,M71}, properties that have received significant attention in computational social choice (see e.g., \cite{ELO08,EF14} and the references therein). 
\citet[Section~3]{FG22} present simple examples, where candidates and voters are embedded in the real line and different tie breaking in voter ordinal preferences results in ranking profiles that are not single-peaked or single-crossing.
\citet{EF14} show that given a ranking profile $\vec{\succ}$, we can verify if $\vec{\succ}$ is $1$-Euclidean and compute in polynomial time a strict ordering of the candidates on the real line, from left to right, that is consistent with $\vec{\succ}$. We refer to such an ordering as the \emph{candidate axis}. 


A \emph{deterministic rule} $R$ for $k$-committee election receives a $1$-Euclidean ranking profile $\vec{\succ} = (\succ_1, \ldots, \succ_n)$ over a set $\C$ of $m$ candidates, the desired committee size $k$ and a non-negative integer $q$. Then, using $\vec{\succ}$ and information about the distance of at most $q$ candidate pairs on the real line, $R$ computes a committee $R(\vec{\succ}, k, q) = S \subseteq \C$ with $k$ candidates. Our committee election rules assume availability of the candidate axis corresponding to $\vec{\succ}$ and may ask up to $q$ distance queries \emph{adaptively}%
\footnote{Namely, the queries are not fixed in advance. A query may depend on the answer to previous queries.}.
We assume that the responses to all distance queries are consistent with a fixed collection of voter and candidate locations on $\reals$ that result in $\vec{\succ}$. 

\smallskip\noindent\textbf{Distortion.}
We evaluate the performance of a committee election rule $R$ (often called \emph{rule} or \emph{algorithm}, for brevity) for given ranking profile $\vec{\succ}$, committee size $k$ and query number $q$ in terms of its \emph{distortion} \cite{BoutilierCHLPS15,PR06}, i.e, the worst-case approximation ratio that $R$ achieves wrt. the social cost:
\begin{equation}\label{eq:distortion}
 \dist(R, \vec{\succ}, k, q) = \sup \frac{\Util(R(\vec{\succ}, k, q))}{\min_{S: |S| = k} \Util(S)}\,,
\end{equation}
where the supremum is taken over all collections of voter and candidate locations on the real line that are consistent with $\vec{\succ}$ and with the responses to the $q$ candidate distance queries asked by $R$. 
The distortion of a deterministic $k$-committee rule $R$ is the maximum of $\dist(R, \vec{\succ}, k, q)$ over all linear ranking profiles $\vec{\succ}$ with $n$ voters and $m$ candidates. We sometimes also consider the distortion wrt. the egalitarian cost $\Egal(S)$ by explicitly referring to it.  

\smallskip\noindent\textbf{Notation.}
We let $\top(v)$ be the top candidate of voter $v$ in $\succ_v$. A candidate's $c$ cluster $\Cl(c)$ consists of all voters in $\V$ with $c$ as their top candidate. We say that a candidate $c$ is \emph{active} if $\Cl(c) \neq \emptyset$, i.e., there is some voter $v$ with $c$ as her top choice. We assume \emph{non-degenerate} ranking profiles $\vec{\succ}$, where $n \geq k+1$ and all candidates are located between the leftmost and the rightmost active candidate. We justify this assumption in Appendix~\ref{s:app:top-choice}. We note that the candidate axis, determined from $\vec{\succ}$ by the algorithm of \cite{EF14}, is guaranteed to be unique (up to symmetries) for non-degenerate ranking profiles $\vec{\succ}$. 

We always assume that the candidate set $\C$ given as input to our algorithms  consists of \emph{active candidates only}. An instance is \emph{candidate-restricted}, if all candidates are active and all voters are moved to the location of their top candidate. Hence, a candidate-restricted instance can be represented with $m$ pairs $(c_i,n_i)$ where $n_i$ is the number of voters co-located with candidate $c_i$.
%
%
Assuming that all voters are collocated with their top candidates (and then removing inactive candidates, see Appendix~\ref{s:app:active-candidates}) increases the distortion by a factor of at most $3$ (see Theorem~\ref{thm:factor3} for the social cost and Appendix~\ref{s:app:good-egalitarian} for the egalitarian cost). 

By stating that explicitly, the analysis of our distortion bounds sometimes  uses candidate-restricted instances. We should highlight that our algorithms work without assuming anything about voter and candidate locations (except that candidate and voter locations lie on the real line) and our distortion bounds hold against an optimal solution for the original instance, where some candidates may be inactive, and candidate and voter locations may be different. 

There is a delicate issue that restricts the use of candidate-restricted instances in our algorithms (and with which our algorithms carefully deal): 
When we use a ranking $\succ_v$ in an algorithm, we have to take care of the fact that $\succ_v$ may be different from the ranking $\succ_{\top(v)}$, where the candidates are ranked in increasing order of distance to $\top(v)$ (because the locations of $v$ and $\top(v)$ may be different). The difficulty of deducing useful information about the rankings $\succ_{c}$ at candidate locations $c$ from a voter ranking profile $\vec{\succ}$ imposes a significant difference between our setting and the clustering setting in \cite{BCFRSS24,Pulya22}.


\section{Distance Query Types}
\label{s:queries}

Before moving on to analyze the distortion of committee election rules that can use a small number of distance queries, we discuss three different types of them: 

\begin{description}
\item[Regular queries.] Given a voter $v \in \V$ and a candidate $c \in \C$, we ask for $d(v, c) = |v - c|$. 

\item[Candidate queries.] Given two candidates $c, c' \in \C$, we ask for the distance $d(c, c') = |c - c'|$.

\item[Voter queries.] Given two voters $v, v' \in \V$, we ask for the distance $d(v, v') = |v - v'|$.
\end{description}

Regular queries ask for information available in the voter cost functions $\cost_v : \C \to \reals_{\geq 0}$ quantifying their cardinal utilities. In Appendix~\ref{s:app:queries}, we show how to simulate candidate queries and voter queries with at most $6$ and $2$ regular queries, respectively, for the $1$-Euclidean case.


Therefore, as long as we care about the asymptotics of the number of queries used by a committee election rule, we may use these types of queries interchangeably. Hence, we state and analyze our committee election rules assuming access to candidate queries, with the understanding that they can be implemented using asymptotically the same number of regular queries. The exact number of regular queries can be improved via a careful implementation that uses regular queries directly.


\section{Constant Distortion with $\Theta(m)$ Queries}
\label{s:app:axis}

We next observe that for any $k \geq 2$ and any $m \geq k+1$, we can fully reconstruct the candidate axis using $m-1$ candidate distance queries and find the optimal $k$-committee for the corresponding candidate-restricted instance using dynamic programming. This implies a distortion of $3$ (for both the social and the egalitarian cost) using $m-1$ distance queries.

\begin{observation}[Reconstructing the Candidate Axis]\label{obs:axis}
The distances between all pairs of consecutive active candidates on the real line can be extracted from $m-1$ candidate (or from $m+3$ regular) 
distance queries. 
\end{observation}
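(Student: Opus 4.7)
The candidate axis $c_1 < c_2 < \cdots < c_m$ is already determined by the $1$-Euclidean ranking profile $\vec{\succ}$ through the algorithm of \citet{EF14}, so both bounds reduce to recovering the $m-1$ gaps $c_{i+1}-c_i$.

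For the candidate query bound, the plan is immediate: the $m-1$ consecutive pairs $(c_i, c_{i+1})$ are explicitly named by the axis, so one candidate query per pair suffices.

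For the regular query bound, the plan is to fix a single voter $v$ and query $d(v, c_i)$ for every $i \in \{1,\ldots,m\}$, at a cost of $m$ regular queries. If the pair $[c_\ell, c_{\ell+1}]$ of consecutive candidates between which $v$ lies is known, then the triangle inequality on the line forces
\[ c_{i+1}-c_i = \begin{cases} d(v, c_i) - d(v, c_{i+1}), & i < \ell, \\ d(v, c_\ell) + d(v, c_{\ell+1}), & i = \ell, \\ d(v, c_{i+1}) - d(v, c_i), & i > \ell, \end{cases} \]
so the $m$ regular queries from $v$ recover every gap. To pin down $\ell$, I would choose $v$ to be an \emph{interior} voter, i.e., one whose top candidate is some $c_j$ with $2 \leq j \leq m-1$; such a voter exists because $m \geq k+1 \geq 3$ and the input contains only active candidates. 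The second-ranked candidate in $\succ_v$ is then forced to be either $c_{j-1}$ or $c_{j+1}$, which uniquely distinguishes $\ell = j-1$ from $\ell = j$.

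The main obstacle is the residual ambiguity at the two ends of the axis. If $\top(v) \in \{c_1, c_m\}$, the ranking alone cannot tell whether $v$ lies beyond the extreme candidate or strictly inside the first (resp.\ last) interval, and these two situations demand different formulas for the corresponding gap; picking an interior voter avoids this. The three extra regular queries in the bound serve as a slack allowance to (i) handle the degenerate borderline case $v = c_j$, where the second-ranked candidate may not unambiguously identify the half of the Voronoi cell containing $v$, and (ii) cross-validate the inferred value of $\ell$ against a constant number of distances from a second voter, e.g., one in the cluster of a neighboring candidate, in order to rule out tie-breaking pathologies in $\succ_v$.
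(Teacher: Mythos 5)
Your candidate-query bound matches the paper exactly. Your regular-query plan is also structurally close to the paper's (fix one voter, query its distance to every candidate, and reconstruct the gaps from those), but it breaks at the step where you determine the interval $[c_\ell, c_{\ell+1}]$ containing the chosen voter.

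The claim that the second-ranked candidate in $\succ_v$ ``uniquely distinguishes $\ell = j-1$ from $\ell = j$'' is false. With $\top(v) = c_j$, whether $c_{j-1}$ or $c_{j+1}$ is ranked second depends on the sign of $d(v, c_{j-1}) - d(v, c_{j+1})$, i.e.\ on which side of $(c_{j-1}+c_{j+1})/2$ the voter $v$ lies; but whether $\ell = j-1$ or $\ell = j$ depends on which side of $c_j$ itself the voter lies, and these two midpoints generally do not coincide. Concretely, take $c_{j-1} = 0$, $c_j = 10$, $c_{j+1} = 12$, and $v = 9$: then $\top(v) = c_j$ and $v < c_j$ (so $\ell = j-1$), yet $d(v, c_{j+1}) = 3 < d(v, c_{j-1}) = 9$, so $c_{j+1}$ is ranked second and your rule would incorrectly infer $\ell = j$. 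Feeding the wrong $\ell$ into your case formula produces a \emph{consistent-looking} but incorrect reconstruction (both gaps near $c_j$ come out positive), so the error cannot be detected from the $m$ distances you queried. Your proposed patch---burning the three extra queries on ``cross-validation'' against a neighboring cluster---acknowledges this hole but does not fill it: you give no concrete algorithm that turns three more numbers into the correct value of $\ell$, and without that the count $m+3$ is not justified. (Your subcase (i), $v$ coinciding with $c_j$, is actually harmless: there both choices of $\ell$ yield the same gap values, so it needs no extra queries.)

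The paper sidesteps this ambiguity by choosing an \emph{extreme} voter $v_1 \in \Cl(c_1)$ rather than an interior one. Because $v_1 < c_2 < c_3 < \cdots$ with certainty, the recurrence $d(c_{i-1}, c_i) = d(v_1, c_i) - d(v_1, c_{i-1})$ is unambiguous for every $i \ge 3$, so only the single gap $d(c_1, c_2)$ depends on whether $v_1$ sits to the left or to the right of $c_1$. That one bit is pinned down by reusing the machinery of Proposition~\ref{pr:candidate_queries} with a second voter $v_m \in \Cl(c_m)$, at a cost of $6$ regular queries that also cover $d(v_1, c_2)$ and $d(v_1, c_m)$; the remaining $m-3$ queries $d(v_1, c_3), \ldots, d(v_1, c_{m-1})$ then give all other gaps, for a total of $m+3$. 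In short: picking an interior voter doubles the number of ambiguous boundary cases and does not give you a clean way to resolve them from ordinal data, whereas picking an extreme voter leaves exactly one ambiguity, which the budget covers explicitly.
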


\begin{proof}
Let $c_1 < c_2 < \cdots < c_m$ be the active candidates as they appear on the real line from left to right. 
If we have access to candidate queries, we query the distances $d(c_1, c_2)$, $d(c_2, c_3)$, $\ldots$, $d(c_{m-1}, c_m)$. 
%
%
Otherwise, we let $v_1 \in \Cl(c_1)$ (resp. $v_m \in \Cl(c_m)$) be any voter with $\top(v_1) = c_1$ (resp. $\top(v_m) = c_m$). Working as in the proof of 
Proposition~\ref{pr:candidate_queries}, we use $6$ regular queries in order to determine how $v_1$ (resp. $v_m$) is located relatively to $c_1$ (resp. $c_m$) and the distance $d(c_1, c_m)$. Then, we query the distances $d(v_1, c_3), \ldots, d(v_1, c_{m-1})$ (we have already queried the distances $d(v_1, c_2)$ and $d(v_1, c_m)$ in the first $6$ queries). We let either $d(c_1, c_2) = d(v_1, c_2)-d(v_1, c_1)$ or $d(c_1, c_2) = d(v_1, c_1)+d(v_1, c_2)$ depending on whether $v_1 < c_1$ or $c_1 \leq v_1$. For each $c_i$, $3 \leq i \leq m$, we let $d(c_{i-1}, c_i) = d(v_1, c_i) - d(v_1, c_{i-1})$. Hence, we conclude that these $m+3$ distances provide all the information required to compute the distances $d(c_{i-1}, c_{i})$, $2 \leq i \leq m$, for all pairs $c_{i-1}$ and $c_{i}$ of consecutive candidates on the real line. 
\qed\end{proof}

We note that as soon as we have the distances between all consecutive active candidates, an optimal $k$-committee for the candidate-restricted instance $\C_{\cres}$ induced by $\C$ can be computed in polynomial time by dynamic programming \cite{kmedian}. Then, Proposition~\ref{pr:active} in Appendix~\ref{s:app:active-candidates} and Theorem~\ref{thm:factor3} in Section~\ref{s:good} (resp. Theorem~\ref{thm:factor3_egal} in Appendix~\ref{s:app:good-egalitarian}) imply that $m-1$ candidate (or $m+3$ regular) distance queries are sufficient for a distortion of at most $3$, for any $k \geq 2$ and any $m \geq k+1$ for both the utilitarian cost and the egalitarian cost.



\section{Lower Bound on the Number of Queries Required for Bounded Distortion}
\label{s:lower-bound}

We next show that when we select $k \geq 3$ out of $m \geq 2(k-1)$ candidates, achieving a bounded distortion requires at least $k-2$ distance queries. 

\begin{theorem}\label{thm:query-lower-bound}
For any $k \geq 3$, the distortion of any deterministic $k$-committee election rule that uses at most $k-3$ distance queries and selects $k$ out of at least $2(k-1)$ candidates on the real line cannot be bounded by any function of $n$, $m$ and $k$ (for both the social cost and the egalitarian cost). 
\end{theorem}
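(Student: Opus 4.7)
The plan is to construct an adversarial family of $1$-Euclidean instances, parametrized by a hidden ``loose'' pair index $j^* \in \{1, \ldots, k-1\}$, such that identifying $j^*$ is essential for bounded distortion but cannot be done with $k-3$ distance queries. I would place $m = 2(k-1)$ candidates on the real line as $k-1$ pairs $P_1, \ldots, P_{k-1}$, where $P_i = \{c_i^L, c_i^R\}$ has intra-pair gap $g_i$ and consecutive pairs are separated by a large fixed inter-pair distance $D$. Voters are distributed symmetrically inside each pair (equal numbers at each of two positions within $P_i$) so that (a) the induced ordinal profile is invariant across the scenarios, and (b) the cost contribution of $P_i$ to a committee is $\Theta(g_i)$ when only one member is elected and negligible when both are elected. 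Since the committee has exactly $k$ slots and there are $k-1$ pairs, any committee that leaves some pair with no elected member incurs a ruinous cost of $\Omega(D)$, forcing the algorithm to cover every pair and double exactly one.

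The scenario family is $\{S_{j^*}\}_{j^* = 1}^{k-1}$ where $S_{j^*}$ sets $g_{j^*} = G$ and $g_i = \epsilon$ for $i \neq j^*$, with $\epsilon \ll G \ll D$. On $S_{j^*}$, the optimal committee doubles $P_{j^*}$ and elects one member from each other pair, achieving cost $O((k-2)\epsilon) \to 0$ as $\epsilon \to 0$, whereas any committee that doubles a pair $P_{j_0}$ with $j_0 \neq j^*$ pays $\Omega(G)$ at the unmatched loose pair, giving distortion $\Omega(G/\epsilon)$ that is unbounded as $\epsilon \to 0$. Achieving ordinal invariance across all $S_{j^*}$ is a delicate technical step: I would place voters slightly off the midpoint of each pair, with offset exceeding $G/2$ but well below $D/2$, so that comparisons between candidates of different pairs are dominated by the inter-pair offsets and not by the intra-pair gaps. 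Under this placement the left-to-right ranking beyond each voter's home pair is determined entirely by the candidate axis and is independent of $j^*$.

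The technical core is an adversarial argument showing that $k-3$ queries cannot identify $j^*$. Each distance query in this instance, once the algorithm-known inter-pair contribution $D$ is subtracted, reduces to a linear functional $V_b - V_a$ of the intra-pair gap prefix sums $V_i = g_1 + \cdots + g_i$, where $a,b \in \{0, 1, \ldots, k-1\}$ are the prefix indices corresponding to the two queried candidates. Thus $k-3$ queries impose $k-3$ linear constraints on the $(k-1)$-dimensional gap vector, yielding a solution space of dimension at least $2$. Modelling queries as edges on the vertex set $\{0, 1, \ldots, k-1\}$, the $k-3$ edges leave at least three connected components and hence at least two consecutive-index boundaries at which the adversary can consistently place the ``jump'' from $\epsilon$ to $G$ induced by a single-loose scenario $S_{j^*}$, thereby keeping at least two distinct scenarios $S_{j_1}$ and $S_{j_2}$ simultaneously consistent with the entire transcript and with the ordinal profile. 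Since the algorithm's committee doubles exactly one pair $P_{j_0}$, the adversary commits to whichever of $S_{j_1}, S_{j_2}$ has $j^* \neq j_0$, delivering the claimed unbounded distortion.

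The main obstacle I expect is not the cost analysis but rather guaranteeing that the two surviving boundary indices truly correspond to single-loose scenarios compatible with both the ordinal profile and the adaptively chosen query responses. Handling arbitrary adaptive query sequences (in particular wide queries spanning many pairs) requires using the interval structure of the revealed linear combinations $V_b - V_a$ and showing that no matter what the algorithm asks, the adversary's response strategy preserves two such boundaries; the $1$-Euclidean voter-placement trick is what allows the ordinal profile to remain fixed while the adversary retains this freedom in the cardinal data.
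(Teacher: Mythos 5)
Your high-level idea (a family of hidden ``loose pair'' scenarios that share a fixed ordinal profile) is the same as the paper's, but the technical core of your query-counting argument has a genuine gap, and it stems from a construction choice that makes your adversary strictly weaker than the paper's.

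In your construction, the $k-1$ scenarios $S_{j^*}$ differ by which intra-pair gap is $G$ rather than $\epsilon$; since the inter-pair separation $D$ is held fixed, changing $g_{j^*}$ shifts \emph{every} candidate to the right of pair $P_{j^*}$. Consequently a single candidate-distance query $d(c_a^R, c_b^L)$ (say) reveals $V_{b-1}-V_a$ plus a known multiple of $D$, and this quantity distinguishes the scenarios with $j^*\in\{a+1,\dots,b-1\}$ from those with $j^*$ outside that range. Each query is therefore a membership test for an \emph{interval} of $\{1,\dots,k-1\}$, and $k-3$ such tests can create up to $2(k-3)+1 = 2k-5 \ge k-1$ distinct signatures for $k\ge 4$. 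Concretely, take $k=5$: the two queries $V_2 - V_0 = g_1+g_2$ and $V_3-V_1 = g_2+g_3$ yield the four distinct response pairs $(G{+}\epsilon,2\epsilon)$, $(G{+}\epsilon,G{+}\epsilon)$, $(2\epsilon,G{+}\epsilon)$, $(2\epsilon,2\epsilon)$ on $S_1,\dots,S_4$, so two queries already identify $j^*$. Your connected-components count (``$k-3$ edges on $k$ vertices leave $\ge 3$ components, hence two free boundaries'') is numerically true but does not imply two indistinguishable scenarios: the intra-component equalities and the revealed differences across components jointly pin down $j^*$, because the feasible set is the discrete one-parameter family $\{S_{j^*}\}$, not the full two-dimensional affine subspace.

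The paper avoids this by constructing $2(k-1)$ variants rather than $k-1$, one for each \emph{candidate} rather than each pair: in the $j$-th variant only candidate $c_j$ moves (by $D$) while all other candidate positions are frozen, which is geometrically consistent because the inter-pair separations are $\approx D^2 \gg D$. The crucial structural consequence is locality: a query $d(c_a,c_b)$ deviates from its basic-instance value \emph{only} in variants $a$ and $b$, so each query eliminates at most two of the $2(k-1)$ variants no matter how cleverly it is chosen. After $k-3$ queries at least four variants survive, and since each ``pair to be doubled'' accounts for exactly two variants, at least two distinct pairs remain plausible; the algorithm's single doubled pair must be wrong on one of them. If you want to repair your proof, replacing the ``one loose gap with cascading shifts'' scenarios by ``one moved candidate with all others frozen'' is exactly the fix you need, after which the per-query-eliminates-two counting goes through directly and the linear-algebra/graph machinery becomes unnecessary.
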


\begin{proof}
For every $k \geq 3$, we construct a family of $2(k-1)$ instances on the real line with $k-1$ candidate pairs each that cannot be distinguished with less than $k-2$ distance queries. The distances are chosen so that in any committee with a bounded distortion, both candidates of a particular pair must be chosen, along with one candidate from each of the remaining pairs. Any deterministic rule cannot tell that particular pair, unless it asks at least $k-2$ distance queries in the worst case. 

For the construction, we consider $m = 2(k-1)$ candidates, $c_1 < c_2 < \cdots < c_{2k-3} < c_{2k-2}$. We let $D$ sufficiently large, so that $D^2 \gg \max\{2D+1, k\}$, and an $\epsilon \in (0, 1/k)$ sufficiently small used for tie breaking. In the basic instance, we let $d(c_{2i-1}, c_{2i}) = 1$, for all $i \in [k-1]$, and let $d(c_{2i}, c_{2i+1}) = D^2 + (i-1)\epsilon$, for all $i \in [k-2]$ (see also Figure~\ref{fig:basic_lb}). There are $n = m$ voters, each with a different top candidate. The voters are collocated with their top candidate in the basic instance and its variants presented below.

\begin{figure*}[t]
\begin{align*}
&\hspace*{-1.2em}%
   c_1 \stackrel{1}{\xdash[1em]} c_2 \stackrel{D^2}{\xdash[6em]} 
   c_3 \stackrel{1}{\xdash[1em]} c_4 \stackrel{D^2+\eps}{\xdash[6em]} 
   c_5 \stackrel{1}{\xdash[1em]} c_6 \stackrel{D^2+2\eps}{\xdash[6em]}
   c_7 \stackrel{1}{\xdash[1em]} c_8 \stackrel{D^2+3\eps}{\xdash[6em]} 
   c_9 \stackrel{1}{\xdash[1em]} c_{10}
\end{align*}
\caption{The basic instance used in the lower bound of Theorem~\ref{thm:query-lower-bound} for $k=6$.}\label{fig:basic_lb}
\end{figure*}

\begin{figure*}[th]
\begin{align*}
&\hspace*{-1.2em}%
   c_1 \stackrel{D+1}{\xdash[2.2em]} c_2 \stackrel{D^2}{\xdash[6em]} 
   c_3 \stackrel{1}{\xdash[1em]} c_4 \stackrel{D^2+\eps}{\xdash[6em]} 
   c_5 \stackrel{1}{\xdash[1em]} c_6 \stackrel{D^2+2\eps}{\xdash[6em]}
   c_7 \stackrel{1}{\xdash[1em]} c_8 \stackrel{D^2+3\eps}{\xdash[6em]} 
   c_9 \stackrel{1}{\xdash[1em]} c_{10} \\
&  c_1 \stackrel{D+1}{\xdash[2.2em]} c_2 \stackrel{D^2-D}{\xdash[5em]} 
   c_3 \stackrel{1}{\xdash[1em]} c_4 \stackrel{D^2+\eps}{\xdash[6em]} 
   c_5 \stackrel{1}{\xdash[1em]} c_6 \stackrel{D^2+2\eps}{\xdash[6em]}
   c_7 \stackrel{1}{\xdash[1em]} c_8 \stackrel{D^2+3\eps}{\xdash[6em]} 
   c_9 \stackrel{1}{\xdash[1em]} c_{10} \\
&  c_1 \stackrel{1}{\xdash[1em]} c_2 \stackrel{D^2-D}{\xdash[5em]} 
   c_3 \stackrel{D+1}{\xdash[2.2em]} c_4 \stackrel{D^2+\eps}{\xdash[6em]} 
   c_5 \stackrel{1}{\xdash[1em]} c_6 \stackrel{D^2+2\eps}{\xdash[6em]}
   c_7 \stackrel{1}{\xdash[1em]} c_8 \stackrel{D^2+3\eps}{\xdash[6em]} 
   c_9 \stackrel{1}{\xdash[1em]} c_{10} \\
&  c_1 \stackrel{1}{\xdash[1em]} c_2 \stackrel{D^2}{\xdash[6em]} 
   c_3 \stackrel{D+1}{\xdash[2.2em]} c_4 \stackrel{D^2-D+\eps}{\xdash[5em]} 
   c_5 \stackrel{1}{\xdash[1em]} c_6 \stackrel{D^2+2\eps}{\xdash[6em]}
   c_7 \stackrel{1}{\xdash[1em]} c_8 \stackrel{D^2+3\eps}{\xdash[6em]} 
   c_9 \stackrel{1}{\xdash[1em]} c_{10} \\
&  c_1 \stackrel{1}{\xdash[1em]} c_2 \stackrel{D^2}{\xdash[6em]} 
   c_3 \stackrel{1}{\xdash[1em]} c_4 \stackrel{D^2-D+\eps}{\xdash[5em]} 
   c_5 \stackrel{D+1}{\xdash[2.2em]} c_6 \stackrel{D^2+2\eps}{\xdash[6em]}
   c_7 \stackrel{1}{\xdash[1em]} c_8 \stackrel{D^2+3\eps}{\xdash[6em]} 
   c_9 \stackrel{1}{\xdash[1em]} c_{10} \\
&  c_1 \stackrel{1}{\xdash[1em]} c_2 \stackrel{D^2}{\xdash[6em]} 
   c_3 \stackrel{1}{\xdash[1em]} c_4 \stackrel{D^2+\eps}{\xdash[6em]} 
   c_5 \stackrel{D+1}{\xdash[2.2em]} c_6 \stackrel{D^2-D+2\eps}{\xdash[5em]}
   c_7 \stackrel{1}{\xdash[1em]} c_8 \stackrel{D^2+3\eps}{\xdash[6em]} 
   c_9 \stackrel{1}{\xdash[1em]} c_{10} \\
&  c_1 \stackrel{1}{\xdash[1em]} c_2 \stackrel{D^2}{\xdash[6em]} 
   c_3 \stackrel{1}{\xdash[1em]} c_4 \stackrel{D^2+\eps}{\xdash[6em]} 
   c_5 \stackrel{1}{\xdash[1em]} c_6 \stackrel{D^2-D+2\eps}{\xdash[5em]}
   c_7 \stackrel{D+1}{\xdash[2.2em]} c_8 \stackrel{D^2+3\eps}{\xdash[6em]} 
   c_9 \stackrel{1}{\xdash[1em]} c_{10} \\
&  c_1 \stackrel{1}{\xdash[1em]} c_2 \stackrel{D^2}{\xdash[6em]} 
   c_3 \stackrel{1}{\xdash[1em]} c_4 \stackrel{D^2+\eps}{\xdash[6em]} 
   c_5 \stackrel{1}{\xdash[1em]} c_6 \stackrel{D^2+2\eps}{\xdash[6em]}
   c_7 \stackrel{D+1}{\xdash[2.2em]} c_8 \stackrel{D^2-D+3\eps}{\xdash[5em]} 
   c_9 \stackrel{1}{\xdash[1em]} c_{10} \\
&  c_1 \stackrel{1}{\xdash[1em]} c_2 \stackrel{D^2}{\xdash[6em]} 
   c_3 \stackrel{1}{\xdash[1em]} c_4 \stackrel{D^2+\eps}{\xdash[6em]} 
   c_5 \stackrel{1}{\xdash[1em]} c_6 \stackrel{D^2+2\eps}{\xdash[6em]}
   c_7 \stackrel{1}{\xdash[1em]} c_8 \stackrel{D^2-D+3\eps}{\xdash[5em]} 
   c_9 \stackrel{D+1}{\xdash[2.2em]} c_{10} \\
&  c_1 \stackrel{1}{\xdash[1em]} c_2 \stackrel{D^2}{\xdash[6em]} 
   c_3 \stackrel{1}{\xdash[1em]} c_4 \stackrel{D^2+\eps}{\xdash[6em]} 
   c_5 \stackrel{1}{\xdash[1em]} c_6 \stackrel{D^2+2\eps}{\xdash[6em]}
   c_7 \stackrel{1}{\xdash[1em]} c_8 \stackrel{D^2+3\eps}{\xdash[6em]} 
   c_9 \stackrel{D+1}{\xdash[2.2em]} c_{10}
\end{align*}
\caption{The $2(k-1) = 10$ variants obtained from the basic instance used in the lower bound of Theorem~\ref{thm:query-lower-bound} for $k=6$.}\label{fig:variants_lb}
\end{figure*}

We construct a family of $2(k-1)$ different variants of the basic instance, by moving candidate $c_j$, $j = 1, \ldots, 2(k-1)$, by $D$, while keeping every other candidate at her original location. Specifically, in the $j$-th variant, if $j$ is odd, we increase the distance $d(c_{j}, c_{j+1})$ from $1$, in the basic instance, to $D+1$, by moving candidate $c_{j}$ by $D$ on the left. In the $j$-th variant, if $j$ is even, we increase the distance $d(c_{j-1}, c_{j})$ from $1$, in the basic instance, to $D+1$, by moving candidate $c_{j}$ by $D$ on the right. All other candidates maintain the locations that they have in the basic instance (see also Figure~\ref{fig:variants_lb}). 
As a result, in the $j$-th variant, if $j$ is odd, all distances $d(c_i, c_{j})$, for $i = 1, \ldots, j-1$, decrease by $D$, while all distances $d(c_{j}, c_i)$, for $i = j+1, \ldots, 2(k-1)$, increase by $D$. If $j$ is even, all distances $d(c_i, c_{j})$, for $i = 1, \ldots, j-1$, increase by $D$, while all distances $d(c_{j}, c_i)$, for $i = j+1, \ldots, 2(k-1)$, decrease by $D$. These changes affect the distance of candidate $c_{j}$ to all other candidates, but do not affect the distances between other candidate pairs, which remain as in the basic instance (see also Example~\ref{ex:lower_bound}).

\begin{example}\label{ex:lower_bound}
For $k = 6$, the basic instance used in our construction is shown in Figure~\ref{fig:basic_lb} and the $2(k-1) = 10$ variants obtained from the basic instance are shown in Figure~\ref{fig:variants_lb}.

In the basic instance and its variants, the preference list (the same for all of them) of each of the $n=10$ voters collocated with the $m=10$ candidates is:
\begin{enumerate}
\item $(c_1, c_2, c_3, c_4, c_5, c_6, c_7, c_8, c_9, c_{10})$
\item $(c_2, c_1, c_3, c_4, c_5, c_6, c_7, c_8, c_9, c_{10})$

\item $(c_3, c_4, c_2, c_1, c_5, c_6, c_7, c_8, c_9, c_{10})$
\item $(c_4, c_3, c_5, c_2, c_6, c_1, c_7, c_8, c_9, c_{10})$

\item $(c_5, c_6, c_4, c_3, c_7, c_8, c_2, c_1, c_9, c_{10})$
\item $(c_6, c_5, c_7, c_4, c_8, c_3, c_9, c_2, c_{10}, c_1)$

\item $(c_7, c_8, c_6, c_5, c_9, c_{10}, c_4, c_3, c_2, c_1)$ 
\item $(c_8, c_7, c_9, c_6, c_{10}, c_5, c_4, c_3, c_2, c_1)$

\item $(c_9, c_{10}, c_8, c_7, c_6, c_5, c_4, c_3, c_2, c_1)$
\item $(c_{10}, c_9, c_8, c_7, c_6, c_5, c_4, c_3, c_2, c_1)$
\end{enumerate}
We note that the ranking of each voter in each variant is identical to her ranking over the candidates in the basic instance. Therefore, a voting rule cannot identify the particular variant by only considering the rankings over candidates submitted by the voters. 

Moreover, the distance change in the $j$-th variant only affects the distance of candidate $c_j$ to the remaining candidates. E.g., for $j=1$, all distances $d(c_1, c_i)$, $i = 2, \ldots, 10$ increase by $D$. For $j=2$, the distance $d(c_1, c_2)$ increases by $D$, while the distances $d(c_2, c_i)$, $i =3, \ldots, 10$ decrease by $D$. In general, in the $j$-th variant, for $j = 1, \ldots, 10$, if $j$ is odd, the distances $d(c_i, c_{j})$, for $i = 1, \ldots, j-1$, decrease by $D$, while all distances $d(c_{j}, c_i)$, for $i = j+1, \ldots, 10$, increase by $D$. Symmetrically, if $j$ is even, all distances $d(c_i, c_{j})$, for $i = 1, \ldots, j-1$, increase by $D$, while all distances $d(c_{j}, c_i)$, for $i = j+1, \ldots, 2(k-1)$, decrease by $D$. 
\qed\end{example}

Intuitively, the basic instance (and each variant) consists of $k-1$ essentially isolated candidate pairs. In each variant, the candidates of exactly one pair are far away from each other (so for a bounded distortion, we need to identify this pair and elect both candidates), while the candidates of the remaining pairs are quite close to each other (so we may elect any of them). Since the $2(k-1)$ variants are symmetric otherwise, any distance query that discovers that the distance of a candidate pair is as in the basic instance can exclude at most two variants (from the list of all possible instances used in this proof). Therefore, any deterministic rule requires at least $k-2$ distance queries in the worst case, before it is able to identify the candidate pair at distance $D$ to each other. 


More formally, the optimal committee (for both the social cost and the egalitarian cost) for the $j$-th variant of the basic instance is to select the candidates $c_{j}$ and $c_{j+1}$, if $j$ is odd, and $c_{j-1}$ and $c_j$, if $j$ is even, which candidates are at distance $D$ to each other, and any candidate from each of the remaining candidate pairs $(c_{2i-1}, c_{2i})$. The social cost of the optimal committee is $k-2$ (and the egalitarian cost of the optimal committee is $1$). Any other committee for the $j$-th variant has social (resp. egalitarian) cost at least $D$, which can become arbitrarily larger than $k-2$ (resp. than $1$). Therefore, a deterministic committee election rule with a bounded distortion must be able to identify the candidate pair at distance $D$ (or equivalently, to identify the right variant) and to add both these candidates to the remaining $k-2$ candidates elected by the rule. 

We next show that this is not possible, unless the algorithm asks for at least $k-2$ distance queries. We first observe that the ranking of any voter in the basic instance is identical to her ranking over candidates in each of the $2(k-1)$ variants obtained from the basic instance (see also Example~\ref{ex:lower_bound}). Therefore, a voting rule cannot tell the right variant by only looking at the rankings submitted by the voters. 

We also observe that in the $j$-th variant, the distances of candidate $c_{j}$ to all other candidates differ from the corresponding distances in the basic instance, but the distances between all other candidate pairs are as in the basic instance. Therefore, every time we query the distance between a pair of candidates and find it unchanged with respect to the basic instance, we can exclude at most two variants from the family of $2(k-1)$ variants defined above. Consequently, for any $k \geq 3$, any deterministic committee election rule needs at least $k-2$ distance queries in the worst case, before it is able to identify the pair of candidates $(c_{2i-1}, c_{2i})$ that are at distance $D$ in the input variant (and we need to elect both). 
\qed\end{proof}

An interesting question is if bounded distortion is possible with distance queries restricted to the few top candidates of each voter. In the proof of Theorem~\ref{thm:query-lower-bound}, we can embed a large group of candidates, located extremely close to each other, in each location $c_i$ in the basic instance and its variants (and have a voter collocated with each of them). The candidates of each group are ranked first by all voters in the same group. Hence, unless we are allowed to query distances to candidates further down the voters' rankings, we cannot get any useful information about these instances.

\section{Bounded Distortion with $\Theta(k)$ Queries}
\label{s:greedy}

In this section, we present a simple greedy rule for $k$-committee election that achieves bounded distortion with $\Theta(k)$ distance queries, thus asymptotically matching the lower bound of Theorem~\ref{thm:query-lower-bound} wrt. the number of queries required for bounded distortion. 

\begin{algorithm}[t]
\caption{\label{alg:k-center}The greedy algorithm for $k$-committee election}
\textbf{Input}: Candidates $\C = \{ c_1, \ldots, c_m \}$, $k \in \{2, \ldots, m-1\}$, distance function $d:\C \times \C \to \reals_{\geq 0}$. \\
\textbf{Output}: Set $S \subseteq \C$ of $k$ candidates.

\begin{algorithmic}[1]
\STATE $S \leftarrow \{ c_1, c_m \}$ \COMMENT{pick leftmost and rightmost candidates}
\WHILE{$|S| < k$}
    \STATE $\hat c \leftarrow \arg\max_{c \in \C} \{ d(c, S) \}$
    \STATE $S \leftarrow S \cup \{ \hat c \}$
\ENDWHILE
\RETURN $S$
\end{algorithmic}
\end{algorithm}

We show that the classical $2$-approximate greedy algorithm for $k$-center can be implemented with few distance queries. The Greedy algorithm \cite[Section~2.2]{WS10} iteratively maintains a set $S$ of candidates, starting with any candidate, and adding the candidate $c$ with maximum distance $d(c, S)$ to the current set $S$ in each iteration. When applied to $1$-Euclidean instances, Algorithm~\ref{alg:k-center} starts with the leftmost candidate $c_1$ and the rightmost candidate $c_m$. Then, for the next $k-2$ iterations, it adds to $S$ the candidate $c\in\C$ with maximum $d(c, S)$.



To implement Algorithm~\ref{alg:k-center} with distance queries (see Algorithm~\ref{alg:greedy}), we need to compute the most distant candidate in $\C$ to current candidate set $S = \{ c_1, \ldots, c_\ell \}$, while $\ell < k$. For convenience, we let the candidates in $S$ be indexed as they appear on the candidate axis, from left to right, i.e., $c_1 < c_2 < \cdots < c_\ell$, and $c_1$ (resp. $c_\ell$) is the leftmost (resp. rightmost) candidate in $S$. 


Algorithm~\ref{alg:greedy} maintains a set $\hat{C}$ with $\ell-1$ candidate-distance pairs $(\hat{c}_i, \delta_i)$, where for each $i \in [\ell-1]$, $\hat{c}_i$ is the most distant candidate in the interval $\C[c_i, c_{i+1}]$ to its endpoints $c_i, c_{i+1} \in S$ and $\delta_i = d(\hat{c}_i, \{ c_i, c_{i+1} \})$ is its distance to the endpoints of $\C[c_i, c_{i+1}]$. More formally, for each $i \in [\ell-1]$, we let 
\begin{equation}\label{eq:most-distant}
\hat{c}_i = \arg\max_{c \in \C[c_i, c_{i+1}]} \big\{ d(c, \{ c_i, c_{i+1} \} ) \big\} \mbox{\ \ \ and\ \ \ } \delta_i = d(\hat{c}_i, \{ c_i, c_{i+1} \})\,.
\end{equation}

This information is provided by the Distant-Candidate algorithm (Algorithm~\ref{alg:dist-cand}). Every time a new candidate $c$, lying between $c_i$ and $c_{i+1} \in S$ on the axis, is added to $S$, the most distant candidates $\hat{c_i}$, together with its distance $\delta_i$ to $\{c_i, c\}$, and $\hat{c}_{i+1}$, together with its distance $\delta_{i+1}$ to $\{c, c_{i+1}\}$, are computed by two calls to the Distant-Candidate algorithm and are added to $\hat{C}$ (Algorithm~\ref{alg:greedy}, step~10), and the pair $(c, \delta)$ corresponding to $c$ is removed from $\hat{C}$ (step~6). 

In the next iteration, the most distant candidate in $\C$ to the current set $S$ is computed (step~4) and added to $S$ (step~5). For step~4, we observe that the most distant candidate to $S$ corresponds to the pair $(c, \delta) \in \hat{C}$, where $c$ maximizes the distance $\delta$ to its neighbor candidates in $S$ among all $(c', \delta') \in \hat{C}$. This observation is formalized by the following: 

\begin{algorithm}[t]
\caption{\label{alg:greedy}Query-efficient implementation of the greedy algorithm}
\textbf{Input}: Candidates $\C = \{ c_1, \ldots, c_m \}$, $k \in \{2, \ldots, m-1\}$, voter ranking profile $\vec{\succ} = (\succ_1, \ldots, \succ_n)$\\
\textbf{Output}: Set $S \subseteq \C$ of $k$ candidates

\begin{algorithmic}[1]
\STATE $S \leftarrow \{ c_1, c_m \}$ \COMMENT{pick leftmost and rightmost candidates}
\STATE $\hat{C} \leftarrow \big\{ \text{Distant-Candidate}(\C[c_1, c_m]) \big\}$
\WHILE{$|S| < k$}
	\STATE Let $c$ be s.t. $(c,\delta)\in \hat{C}$ and $\delta \ge \delta'$ for all $(c',\delta') \in \hat{C}$
	\STATE $S \leftarrow S \cup \{ c \}$
    \STATE $\hat{C} \leftarrow \hat{C} \setminus \{ (c, \delta) \}$
    \IF{$|S| < k$}
		\STATE Let $c_i$ be the rightmost candidate in $S$ on $c$'s left
		\STATE Let $c_{i+1}$ be the leftmost candidate in $S$ on $c$'s right 
		\STATE $\hat{C} \leftarrow \hat{C} \cup 
		        \big\{ \text{Distant-Candidate}(\C[c_i, c])\} \cup \{ \text{Distant-Candidate}(\C[c, c_{i+1}]) \big\}$
	\ENDIF
\ENDWHILE
\RETURN $S$
\end{algorithmic}
\end{algorithm}

\begin{proposition}\label{pr:distant-candidate}
Let $S = \{ c_1, \ldots, c_\ell \}$ be the set of currently elected candidates in Algorithm~\ref{alg:greedy}, and let $(\hat{c}_1, \delta_1), \ldots, (\hat{c}_{\ell-1}, \delta_{\ell-1})$ be the candidate-distance pairs maintained in $\hat{C}$, as defined in \eqref{eq:most-distant}. Then, 
\[ \max_{c \in \C} \{ d(c, S) \} = 
   \max_{i \in [\ell-1]} \big\{ d(\hat{c}_i, \{ c_i, c_{i+1} \}) \big\} =
   \max_{i \in [\ell-1]} \big\{ \delta_i \big\}  
   \]
\end{proposition}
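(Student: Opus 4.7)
The plan is to exploit the linear geometry directly: because candidates in $S$ are indexed in the order they appear on the real line and the initial set $\{c_1, c_m\}$ contains the leftmost and rightmost candidates of $\C$, every candidate of $\C$ lies in some interval $\C[c_i, c_{i+1}]$ with $i \in [\ell-1]$. The argument is then an easy decomposition of $\max_{c \in \C} d(c, S)$ over these intervals.

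First, I would observe the invariant that $S$ always contains the two extreme candidates $c_1$ (leftmost) and $c_m$ (rightmost) of the entire candidate set $\C$, since these are chosen at initialization and are never removed. Thus, writing the current $S$ in left-to-right order as $c_1 < c_2 < \cdots < c_\ell$, the intervals $\C[c_1, c_2], \C[c_2, c_3], \ldots, \C[c_{\ell-1}, c_\ell]$ cover $\C$ entirely, and any $c \in \C$ belongs to at least one such interval.

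Next, I would establish the local claim: for any $c \in \C[c_i, c_{i+1}]$, $d(c, S) = d(c, \{c_i, c_{i+1}\})$. This is immediate on the real line: since $c_i \le c \le c_{i+1}$, for any other $c_j \in S$ with $j < i$ we have $|c - c_j| = (c - c_j) \ge (c - c_i) = |c - c_i|$, and symmetrically for $j > i+1$. Hence the minimum distance from $c$ to $S$ is attained at one of $c_i, c_{i+1}$.

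Combining the two observations,
\begin{equation*}
\max_{c \in \C} d(c, S)
= \max_{i \in [\ell-1]} \max_{c \in \C[c_i, c_{i+1}]} d(c, \{c_i, c_{i+1}\})
= \max_{i \in [\ell-1]} d(\hat{c}_i, \{c_i, c_{i+1}\})
= \max_{i \in [\ell-1]} \delta_i,
\end{equation*}
where the middle equality uses the definition of $(\hat{c}_i, \delta_i)$ in \eqref{eq:most-distant}. There is no substantial obstacle here; the only subtlety worth stating explicitly is the left-to-right reindexing of $S$ and the invariant that the extremes of $S$ remain the extremes of $\C$, which is precisely what guarantees that every candidate falls inside some maintained interval and thus contributes to exactly one pair in $\hat{C}$.
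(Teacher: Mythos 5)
Your proof is correct and follows essentially the same route as the paper's: both rely on the fact that $c_1$ and $c_\ell$ are the extremes of $\C$, so every candidate lies in some $\C[c_i,c_{i+1}]$, and that on the real line $d(c,S)=d(c,\{c_i,c_{i+1}\})$ for $c$ in that interval. You merely spell out the local geometric claim a bit more explicitly than the paper does.
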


\begin{proof}
Since $c_1$ (resp. $c_\ell$) is the leftmost (resp. rightmost) candidate in $\C$, every candidate in $\C \setminus S$ belongs to one of the intervals $\C[c_1, c_2], \ldots, \C[c_{\ell-1}, c_\ell]$. We let the furthest candidate $c$ to $S$ lie in the interval $\C[c_i, c_{i+1}]$. Then, $c$ has to be the candidate $\hat{c}_i \in \C[c_i, c_{i+1}]$ with maximum distance to the endpoints $\{ c_i, c_{i+1}\}$ and $d(\hat{c}_i, S) = d(\hat{c}_i, \{ c_i, c_{i+1} \}) =\delta_i$.
\qed\end{proof}


\begin{algorithm}[t]
\caption{\label{alg:dist-cand}The Distant-Candidate algorithm}
\textbf{Input}: Candidate interval $\C[c, c']$,
      a voter $v \in \Cl(c'')$ for every $c'' \in \C[c, c']$. \\
\textbf{Output}: Candidate $\hat{c} \in \C[c, c']$ with maximum $d(\hat{c}, \{ c, c'\})$

\smallskip
\begin{algorithmic}[1]
\IF{$\big|\C[c, c']\big| = 3$}
	\STATE  $c'' \leftarrow \C[c, c'] \setminus \{ c, c'\}$
	\RETURN $(c'', \min\{ d(c'', c), d(c'', c')\})$
\ENDIF
\STATE Let $c''$ be the leftmost candidate in $\C[c, c'] \setminus \{ c \}$ \COMMENT{henceforth, $\big|\C[c, c']\big| > 3$}
\WHILE{$c'' \in C[c, c']$}
	\STATE Let $\succ_{c''}$ be the ranking $\succ_v$ of any $v \in \Cl(c'')$ \COMMENT{true $\succ_{c''}$ is not known; the algorithm works as if $\succ_{c''}=\succ_v$}
	\IF{$c' \succ_{c''} c$}
		\STATE \label{line9ofalgo3} Let $c_r$ be $c''$ and $c_l$ be next candidate on $c''$'s left
		\COMMENT{$c_l$ and $c_r$ found, while-loop terminates}
		\STATE \textbf{break-while-loop} 
    \ELSE 
   		\STATE $c'' \leftarrow$\, the next candidate on $c''$'s right 
   		\COMMENT{proceed to the next candidate on the right}
   \ENDIF
\ENDWHILE

\IF{$d(c, c_l) \geq d(c_r, c')$}
	\RETURN $(c_l, \min\{d(c, c_l), d(c', c_l)\})$
\ELSE
	\RETURN $(c_r, \min\{d(c_r, c), d(c_r, c')\})$
\ENDIF
\end{algorithmic}
\end{algorithm}

\subsection{The Distant-Candidate Algorithm}

The Distant-Candidate algorithm, formally described in Algorithm~\ref{alg:dist-cand}, receives as input two candidates $c, c'$, makes at most $3$ distance queries, and returns a pair $(\hat{c}, \delta)$, where $\hat{c} \in \C[c, c']$ is the most distant candidate to $\C[c, c']$'s endpoints and $\delta = d(\hat{c}_i, \{ c, c' \})$. 

For the intuition, we initially assume access to the rankings $\succ_c$ and $\succ_{c'}$, where all candidates in $\C$ are listed in increasing order of distance to $c$ and $c'$, respectively. 
Due to the $1$-dimensional structure of $\C$, for every $c, c' \in \C$, with $c < c'$ and $|\C[c, c']| \geq 4$, the most distant candidate $\hat{c} = \arg\max_{c'' \in \C[c, c']} \{ d(c'', \{c, c'\}) \}$ can be computed as follows:
Starting with $c$ and moving from left to right on the candidate interval $\C[c, c']$, we find the rightmost candidate $c_l \in \C[c, c']$ that prefers $c$ to $c'$ and the leftmost candidate $c_r \in \C[c, c']$ that prefers $c'$ to $c$. We note that $c_l$ and $c_r$ can be found using only ordinal information, that they are next to each other on the candidate axis, and that $\hat{c}$ must be either $c_l$ or $c_r$. Then, $d(c_l, \{c, c'\}) = d(c_l, c)$ and $d(c_r, \{c, c'\}) = d(c_r, c')$. Hence, $\hat{c}$ is $c_l$, if $d(c_l, c) > d(c_r, c')$, and $c_r$ otherwise, which can be determined by $2$ distance queries $d(c_l, c)$ and $d(c_r, c')$. 

The implementation of Distant-Candidate in Algorithm~\ref{alg:dist-cand}, computes $(\hat{c}, d(\hat{c}, \{c, c'\})$ based on the ordinal information submitted by the voters. More precisely, Algorithm~\ref{alg:dist-cand} uses a ranking $\succ_v$, submitted by a voter $v \in \Cl(c)$, instead of the ranking $\succ_c$ above (to which the algorithm does not have access). Since $\succ_v$ and $\succ_c$ may differ (because the locations of $v$ and $c$ may be different), the algorithm needs to use some additional information provided by a 3rd distance query. The correctness of the Distant-Candidate algorithm is formally established by the following: 


\begin{lemma}\label{l:most-distant}
For any $c, c' \in \C$, with $c < c'$ and $|\C[c, c']| \geq 3$, Algorithm~\ref{alg:dist-cand} correctly returns the candidate $\hat{c} \in \C[c, c']$ with maximum distance to the interval's endpoints $\{c, c'\}$, i.e., the candidate
\[ \hat{c} = \arg\max_{c'' \in \C[c, c']}\big\{ d(c'', \{c, c'\}) \big\}\,, \]
along with its distance $d(\hat{c}, \{ c, c'\}) = d(\hat{c}, S)$ to $S$.
\end{lemma}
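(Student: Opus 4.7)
The plan is first to reduce the lemma to a clean geometric observation, and then to show that even though a voter's ranking $\succ_v$ may disagree with that of her top candidate, the ordinal information is enough to locate the correct adjacent pair of candidates bracketing the midpoint $\mu := (c+c')/2$. The geometric observation is that $d(c'', \{c,c'\}) = \min(c''-c,\, c'-c'')$ is a tent function peaking at $\mu$, so $\hat{c}$ is precisely the candidate of $\C[c,c']$ closest to $\mu$. The base case $|\C[c,c']|=3$ is immediate; henceforth I assume $|\C[c,c']|\ge 4$ and (modulo trivial edge cases when one side is empty) let $c_L^*$ be the rightmost candidate of $\C[c,c']$ with position $<\mu$ and $c_R^*$ the leftmost with position $>\mu$, so that $\hat{c}\in\{c_L^*,c_R^*\}$ and $\hat{c}=c_L^*$ iff $c_L^*+c_R^*>c+c'$.

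The key ordinal lemma I would prove is: if $c''\in\C[c,c']$ with $c''<\mu$ is not the rightmost such candidate, then every $v\in\Cl(c'')$ satisfies $v<\mu$ and therefore reports $c\succ_v c'$. Indeed, if $c'''\in\C[c,c']$ has $c''<c'''\le\mu$, then $v\in\Cl(c'')$ must be closer to $c''$ than to $c'''$, forcing $v<(c''+c''')/2\le\mu$. The symmetric statement holds on the right. Consequently the left-to-right scan skips past every candidate strictly between $c$ and $c_L^*$ and every candidate strictly between $c_R^*$ and $c'$, and can only halt at $c_L^*$, at $c_R^*$, or at the first candidate to the right of $c_R^*$ (possibly $c'$ itself, whose voter trivially prefers $c'$).

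The main obstacle is that at the boundary candidates $c_L^*$ and $c_R^*$ the chosen voter's preference is not forced by the candidate's side of $\mu$, because the Voronoi cell of such a candidate may straddle $\mu$. I would handle this via a case split on where the scan halts, checking each time that the final test (return $c_l$ iff $d(c,c_l)\ge d(c_r,c')$) selects the candidate closer to $\mu$. (i) If the scan halts at $c_L^*$, then some $v\in\Cl(c_L^*)$ has $v>\mu$, so $(c_L^*+c_R^*)/2>\mu$, hence $c_L^*+c_R^*>c+c'$ and $\hat{c}=c_L^*$; since $c_l<\mu$ gives $d(c,c_l)<(c'-c)/2<d(c_r,c')$, the test returns $c_r=c_L^*$. (ii) If the scan halts at $c_R^*$, then $c_l=c_L^*$ and $c_r=c_R^*$, and the test $d(c,c_l)\ge d(c_r,c')\iff c_L^*+c_R^*\ge c+c'$ directly picks the one closer to $\mu$. (iii) If the scan moves past $c_R^*$ and halts later, then the symmetric Voronoi argument at $c_R^*$ gives $c_L^*+c_R^*<c+c'$, hence $\hat{c}=c_R^*$; here $c_l=c_R^*>\mu$ yields $d(c,c_l)>(c'-c)/2>d(c_r,c')$, so the test returns $c_l=c_R^*$. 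In every case the returned candidate is $\hat{c}$, and the reported distance $\min\{d(\hat{c},c),d(\hat{c},c')\}$ equals $d(\hat{c},\{c,c'\})=d(\hat{c},S)$ since $\hat{c}\in[c,c']$ and $c,c'\in S$. The cost is $2$ candidate queries for the comparison plus $1$ more for the reported min, so at most $3$ queries in total.
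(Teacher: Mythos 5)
Your proof is correct and follows essentially the same route as the paper's: both rest on the geometric fact that $d(\cdot,\{c,c'\})$ peaks at $\mu=(c+c')/2$, and on the ordinal observation that a voter in $\Cl(c'')$ for a candidate $c''$ that is not one of the two candidates bracketing $\mu$ must itself lie on $c''$'s side of $\mu$ (your "key ordinal lemma" is the same claim the paper proves about $v_a$ and $v_b$, the neighbors of $\hat c$). The only organizational difference is which parameter drives the case split: you condition on where the scan halts ($c_L^*$, $c_R^*$, or one past $c_R^*$), whereas the paper fixes the true $\hat c$ (w.l.o.g.\ $\hat c\le\mu$) and then conditions on whether the voter $\hat v$ is left or right of $\mu$, dispatching the mirror case by symmetry. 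Your version is a touch more explicit in handling the case $\hat c>\mu$ (your case~(iii)) and in stating the ordinal lemma uniformly for all non-bracketing candidates, but the two arguments are the same in substance.
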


\begin{proof}
If $|\C[c, c']| = 3$, then there is a single $c'' \in \C[c, c'] \setminus \{c, c'\}$ that is necessarily the most distant candidate. Algorithm~\ref{alg:dist-cand} returns $c''$ and its distance $d(c'', \{c, c'\})$ (computed using $2$ distance queries). 

To give the intuition for the most interesting case where $|\C[c, c']| \geq 4$, we first consider candidate-restricted instances, where for any candidate $c'' \in \C[c, c']$, all voters $v \in \Cl(c'')$ are collocated with $c''$. Hence, the ranking $\succ_v$ submitted by some voter $v \in \Cl(c'')$ and used in step~7 of Algorithm~\ref{alg:dist-cand} is identical to the true ranking $\succ_{c''}$, where all candidates in $\C$ appear in increasing order of their distance to $c''$. 

We consider the midpoint $\mu = (c+c')/2 \in \reals$, which is the point in the real interval $[c, c']$ with maximum distance $d(\mu, \{c, c'\})$ to the endpoints $\{ c, c'\}$. In Algorithm~\ref{alg:dist-cand}, $c_r$ is the leftmost candidate in $\C[c, c']$ that is closer to the right endpoint $c'$ than to the left endpoint $c$ (and thus  $d(c_r, \{c, c'\}) = d(c_r, c')$). By the definition of $c_r$, $c_l$ is the rightmost candidate in $\C[c, c']$ that is closer to the left endpoint $c$ than to the right endpoint $c'$ (and thus $d(c_l, \{c, c'\}) = d(c_l, c)$). Therefore, $c_l \leq \mu \leq c_r$, with at least one inequality strict, and with no other candidate between $c_l$ and $c_r$. Hence, $c_l$ and $c_r$ are the candidates in $\C[c, c']$ closest to $\mu$. Then the maximum of $d(c_l, c) = d(c_l, S)$ and $d(c_r, c') = d(c_r, S)$ determines the candidate in $\C[c, c']$ with maximum distance to its endpoints. 

We next remove the assumption that Algorithm~\ref{alg:dist-cand} has access to the true ranking $\succ_{c''}$ but use, as a substitute, the ranking $\succ_v$ of any $v \in \Cl(c'')$. Let $\hat{c}$ be the candidate in $\C[c, c']$ with maximum distance to the endpoints $\{ c, c'\}$. $\hat{c}$ is the closest candidate to the midpoint $\mu$. We show that $\hat{c} \in \{ c_l, c_r \}$ (cf. line \ref{line9ofalgo3} of Algorithm~\ref{alg:dist-cand} for the definition of $\{ c_l, c_r \}$) and that Algorithm~\ref{alg:dist-cand} correctly returns $\hat{c}$. 

Without loss of generality, we assume that $\hat{c} \leq \mu$ (the case where $\hat{c} > \mu$ is symmetric). 
Let $\hat{v} \in \Cl(\hat{c})$ be the voter whose preference list $\succ_{\hat{v}}$ is used in place of $\succ_{\hat{c}}$ in Algorithm~\ref{alg:dist-cand} ($\hat{v}$ can be any voter in $\Cl(\hat{c})$). Moreover, let $c_a$ (resp. $c_b$) be the next candidate on the left (resp. right) of $\hat{c}$ in $\C[c, c']$ (we note that $c_a$ can be $c$ and $c_b$ can be $c'$, but not both). Let $v_a$ (resp. $v_b$) be the voter in $\Cl(c_a)$ (resp. $\Cl(c_b)$) whose preference list $\succ_{v_a}$ (resp. $\succ_{v_b}$) is used in place of $\succ_{c_a}$ (resp. $\succ_{c_b}$). 

Since $\hat{c} \leq \mu$, we have that $c \succ_{v_a} c'$, because $c_a < \hat{c} \leq \mu$ and $d(v_a, c_a) < d(v_a, \hat{c})$. Therefore, $v_a < \hat{c} \leq \mu$. Moreover, we have that $c' \succ_{v_b} c$, which holds because $\hat{c} \leq \mu < c_b$, $\hat{c}$ is the closest candidate to $\mu$ and $d(v_b, c_b) < d(v_b, \hat{c})$ (also recall that we do not allow any ties). Hence $\mu < v_b$ (due to the fact that $\mu$ is closer to $\hat{c}$; if $v_b \leq \mu$, it would be $d(\hat{c}, v_b) < d(c_b, v_b)$ and $v_b$ would be in $\Cl(\hat{c})$). 

Because $c \succ_{v_a} c'$ (and thus $c_a$ is not $c_r$) and $c' \succ_{v_b} c$ (and thus $c_r$ is either $\hat{c}$ or $c_b$), $\hat{c}$ is either $c_l$ or $c_r$. We distinguish two cases depending on the placement of $\hat{v}$ with respect to $\mu$. 

If $\hat{v} \leq \mu$, $c \succ_{\hat{v}} c'$ (recall that we do not allow any ties in the rankings profile), which implies that $c_r = c_b$ and $c_l = \hat{c}$. Moreover, $d(\hat{c}, c) \geq d(c_b, c')$, because $\hat{c}$ is the furthest candidate to $\{c, c'\}$. Therefore, Algorithm~\ref{alg:dist-cand} returns $\hat{c}$ as the furthest candidate and the distance $d(\hat{c}, c) = d(\hat{c}, S)$. 

If $\mu < \hat{v}$ and $c' \succ_{\hat{v}} c$, then  $c_r = \hat{c}$ and $c_l = c_a$. Furthermore, $d(\hat{c}, c') > d(c_a, c)$, due to the fact that $c \leq c_a < \hat{c} \leq \mu < c'$. Therefore, Algorithm~\ref{alg:dist-cand} returns $\hat{c}$ as the furthest candidate and the distance $\min\{ d(\hat{c}, c), d(\hat{c}, c') \} = d(\hat{c}, S)$. 
\qed\end{proof}

\subsection{Putting Everything Together: The Distortion of Algorithm~\ref{alg:greedy}}

The following formally establishes the distortion achieved by the query-efficient implementation of the greedy algorithm for $k$-center (Algorithm~\ref{alg:k-center}) presented in Algorithm~\ref{alg:greedy}.

\begin{theorem}\label{thm:k-queries}
For any $k \geq 3$, Algorithm~\ref{alg:greedy} achieves a distortion of at most $5n$ for the social cost (and at most $5$ for the egalitarian cost) for $k$-Committee Election with $1$-Euclidean preferences using at most $6k-15$ candidate distance queries. 
\end{theorem}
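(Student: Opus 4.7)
The plan is to prove the query bound by direct inspection of Algorithm~\ref{alg:greedy}, and then prove the egalitarian bound of $5$, from which the social cost bound of $5n$ follows almost immediately.

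\textbf{Query count.} Algorithm~\ref{alg:greedy} calls \textsc{Distant-Candidate} once initially on $\C[c_1, c_m]$, and then runs for $k-2$ iterations to enlarge $|S|$ from $2$ to $k$. In each iteration it executes step~10 (two calls) \emph{only if} $|S| < k$ after the addition, hence the final iteration performs no calls. This yields $1 + 2(k-3) = 2k-5$ calls to \textsc{Distant-Candidate}, each asking at most $3$ candidate queries by inspection of Algorithm~\ref{alg:dist-cand}, for a total of $3(2k-5) = 6k-15$.

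\textbf{Egalitarian distortion $\leq 5$.} Let $S^*$ be an optimal committee for the egalitarian cost and set $\OPT = \Egal(S^*)$. By Proposition~\ref{pr:distant-candidate} and Lemma~\ref{l:most-distant}, Algorithm~\ref{alg:greedy} faithfully realizes the Gonzalez-style greedy in Algorithm~\ref{alg:k-center} on the active candidate set $\C$, starting from $\{c_1,c_m\}$. The first step is to establish that this greedy is a $2$-approximation for the $k$-center objective on $\C$ (with $\C$ as both demands and potential centers). Writing $\OPT_{\C}$ for that $k$-center optimum and $r$ for the greedy cost, I consider the witness $\hat c \in \C$ with $d(\hat c, S_{\mathrm{greedy}}) = r$ and apply pigeonhole to the $k+1$ points $\{\hat c\} \cup S_{\mathrm{greedy}}$ against the $k$ optimal centers: two of them share an optimal center at distance $\le \OPT_{\C}$. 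In every case except when the shared pair is exactly $\{c_1, c_m\}$, the greedy invariant forces pairwise distance $\geq r$ on that pair, giving $r \leq 2\,\OPT_{\C}$. The remaining case implies $d(c_1,c_m) \leq 2\,\OPT_{\C}$, but since $c_1, c_m$ are the extreme active candidates, this bounds the diameter of $\C$, so $d(c, \{c_1,c_m\}) \leq 2\,\OPT_{\C}$ for every $c \in \C$ and again $r \leq 2\,\OPT_{\C}$.

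The second step relates $\OPT_{\C}$ to $\OPT$. For every active $c \in \C$, fix any $v \in \Cl(c)$ and let $c^* \in S^*$ be $v$'s nearest optimal center. Since $c = \top(v)$ is $v$'s top choice, $d(c,v) \le d(c^*,v) \le \OPT$, so the triangle inequality gives $d(c, S^*) \le d(c, c^*) \le 2\,\OPT$. Thus $S^* \subseteq \C$ is a feasible $k$-center solution on $\C$ of cost $\le 2\,\OPT$, whence $\OPT_{\C} \le 2\,\OPT$ and $\max_{c \in \C} d(c, S_{\mathrm{greedy}}) \le 4\,\OPT$. Finally, for every voter $v$,
\[
 d(v, S_{\mathrm{greedy}}) \leq d(v, \top(v)) + d(\top(v), S_{\mathrm{greedy}}) \leq \OPT + 4\,\OPT = 5\,\OPT,
\]
which proves $\Egal(S_{\mathrm{greedy}}) \leq 5\,\OPT$.

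\textbf{Social cost distortion $\leq 5n$.} Let $S^{**}$ be an optimal committee for the social cost with $\Util(S^{**}) = \OPT_{\Util}$. For any committee $S$, $\Egal(S) \leq \Util(S)$, hence $\min_S \Egal(S) \leq \Util(S^{**}) = \OPT_{\Util}$, so the egalitarian optimum is $\le \OPT_{\Util}$. Combining with $\Util(S) \le n\cdot\Egal(S)$ and the egalitarian bound above, $\Util(S_{\mathrm{greedy}}) \leq n\cdot \Egal(S_{\mathrm{greedy}}) \leq 5n \cdot \min_S \Egal(S) \leq 5n\cdot \OPT_{\Util}$.

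The main technical subtlety is the adaptation of the Gonzalez $2$-approximation argument to the two-center initialization $\{c_1, c_m\}$: the standard pigeonhole argument gives the bound immediately unless both initial centers are assigned to the same optimal center, and handling that case requires the extra observation that $c_1, c_m$ are the extreme points of $\C$, which turns the problematic sub-case into a diameter bound. Everything else is routine triangle inequality plus the top-choice inequality $d(v, \top(v)) \le d(v, c)$ for any $c \in \C$.
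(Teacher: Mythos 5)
Your proof is correct in its conclusions, and the query count argument is identical to the paper's, but the distortion argument takes a more self-contained route. The paper cites \cite[Theorem~2.3]{WS10} for the Gonzalez $2$-approximation and then applies Theorem~\ref{thm:factor3_egal} (the generic candidate-restricted-to-original reduction, yielding $1+2\beta$ with $\beta=2$). You instead re-derive the $2$-approximation and do the transfer directly via two triangle inequalities. Your rederivation also surfaces a point the paper glosses over: Algorithm~\ref{alg:k-center} is initialized with two fixed centers $\{c_1, c_m\}$ rather than a single arbitrary one, so the standard pigeonhole argument needs to handle the case where $c_1$ and $c_m$ are the pair sharing an optimal center. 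Your diameter argument is a valid fix; a slightly shorter alternative is to observe that on the line $c_m = \arg\max_{c\in\C} d(c, c_1)$, so $\{c_1, c_m\}$ is exactly what one step of Gonzalez from $c_1$ would produce, and the textbook analysis then applies verbatim.

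One imprecision should be flagged. You assert ``$S^* \subseteq \C$ is a feasible $k$-center solution on $\C$,'' where $\C$ is the algorithm's input candidate set. But by the paper's conventions $\C$ consists of active candidates only, while the benchmark $S^*$ is an optimal egalitarian committee for the \emph{original} instance and may contain inactive candidates. So $S^*$ need not be a feasible solution for the restricted problem defining $\OPT_\C$, and the chain $\OPT_\C \le 2\,\OPT$ is not justified as written. The fix is small and actually tightens the argument: the pigeonhole proof of the $2$-approximation never uses that the comparison centers lie in $\C$, so what it really shows is $r \le 2\max_{c\in\C} d(c,T)$ for \emph{any} $k$-point set $T$. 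Taking $T = S^*$ and using your bound $\max_{c\in\C} d(c, S^*) \le 2\,\OPT$ gives $r \le 4\,\OPT$ directly, with no need for the intermediary $\OPT_\C$. This is precisely the point the paper emphasizes when it notes that the $2$-approximation ratio ``holds against an optimal solution that may also include inactive candidates.''
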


\begin{proof}
Algorithm~\ref{alg:greedy} adapts Algorithm~\ref{alg:k-center} to unknown candidate locations. It calls Distant-Candidate, which  computes the most distant candidate $\hat{c}_i$ in each interval $\C[c_i, c_{i+1}]$ defined by the candidates already elected in $S$, using at most three candidate distance queries. Proposition~\ref{pr:distant-candidate} shows that the candidate $\hat{c}_i$ with maximum distance $\delta_i$ among them is the most distant candidate to $S$, which is added to $S$ in step~5. 

The Distant-Candidate algorithm is called once in step~2 and $2(k-3)$ times in step~10 (twice in each while-loop iteration, for $|S| = 3, \ldots, k-1$). So, the total number of candidate distance queries is at most $6(k-3)+3$. The correctness of Algorithm~\ref{alg:greedy} (i.e., the fact that in each iteration, the candidate $c$ with maximum $d(c, S)$ is added to $S$) follows from Lemma~\ref{l:most-distant} and Proposition~\ref{pr:distant-candidate}. The distortion bound for the egalitarian cost uses that Algorithm~\ref{alg:k-center} is $2$-approximate for the egalitarian cost in candidate-restricted instances \cite[Theorem~2.3]{WS10}. Then,  Theorem~\ref{thm:factor3_egal}, in Suppl.~\ref{s:app:good-egalitarian}, implies an upper bound of $5$ on the distortion for the egalitarian cost in the original instances. The bound of $5n$ on the distortion for the social cost holds because for any $S \subseteq \C$, $\Egal(S) \leq \Util(S) \leq n\,\Egal(S)$. 
\qed\end{proof}

\begin{remark}
We next show that the distortion of Algorithm~\ref{alg:greedy} (in fact, the distortion of the voter clustering computed by Algorithm~\ref{alg:greedy}) is $\Omega(n)$ for the social cost, for any $k \geq 3$. 

We consider an instance with a committee of size $k = 3$, $m = 5$ candidates, $a$ located at $0$, $b$ located at $1$, $c$ located at $2+\eps$, $d$ located at $4+3\eps$ and $e$ located at $8+4\eps$, where $\eps > 0$ is small and used for tie-breaking, and $n$ voters, the leftmost voter is collocated with candidate $a$, $n/2-1$ voters are collocated with candidate $b$, $n/2-2$ other voters are collocated with candidate $c$, another voter is collocated with candidate $d$ and the rightmost voter is collocated with candidate $e$. 

The optimal $3$-committee is $\{ b, c, e \}$ with social cost $3+2\eps \approx 3$. The $3$-committee elected by Algorithm~\ref{alg:greedy} is $\{ a, d, e \}$ with social cost $3n/2-5 + (n/2-2)\eps \approx 3n/2$ and distortion that tends to $n/2$ as $n$ tends to infinity and $\eps$ tends to $0$. 


In an attempt to improve the distortion, it is reasonable to try the following: after Algorithm~\ref{alg:greedy} has elected its committee, which imposes a clustering on the set of voters, we consider the corresponding clusters and elect the median candidate of each cluster. In our example, we consider the voters collocated with $a$, $b$ and $c$, who are all assigned to candidate $a$ in the committee elected by Algorithm~\ref{alg:greedy}, and replace candidate $a$ with their median candidate $b$. Then, we obtain the committee $\{ b, d, e \}$ with social cost $1+(n/2-2)(1+\eps) \approx n/2$ and distortion that tends to $n/6$ as $n$ tends to infinity and $\eps$ tends to $0$. 
\end{remark}

\section{Low Distortion via Good Candidate Subsets}
\label{s:good}

We next analyze the distortion achieved by the optimal $k$-committee of the candidate-restricted instance induced by a small representative set of candidates. 

We say that a candidate subset $\C' \subseteq \C$ is \textbf{$(\ell, \beta)$-good}, for some  $\ell \geq k$ and some $\beta \geq 1$, if $|\C'| = \ell$ and 
$\Util(\C') \leq \beta\,\Util(S^\ast)$, where $S^\ast$ is an optimal $k$-committee for the original instance. 
%
%
Namely, $\C'$ is \emph{$\ell$-sparse}, in the sense that $\C'$ includes $\ell \leq m$ candidates (ideally $\ell \ll m$), and is \emph{$\beta$-good}, in the sense that representing each voter by her top candidate in $\C'$ imposes a social cost at most $\beta$ times the optimal social cost. The original set $\C$ of candidates is $(m, 1)$-good, while any $k$-committee with distortion $\beta$ is $(k, \beta)$-good. 

Given an $(\ell, \beta)$-good set of candidates $\C'$, we let $\C'_{\cres} = \{ (c_1, n_1), \ldots, (c_\ell, n_\ell) \}$ denote the candidate-restricted instance induced by $\C'$. Then, $c_1 < \cdots < c_\ell$ denote the locations of candidates in $\C'$ on the line, and $n_i = |\Cl(c_i)|$  is the number of voters with $c_i$ as their top candidate in $\C'$. We maintain that $n_1+\cdots+n_\ell = n$ and that each $n_i > 0$ (the latter by removing inactive candidates from $\C'$). 

The following shows that an optimal $k$-committee for the candidate-restricted instance $\C'_{\cres}$ induced by an $(\ell, \beta)$-good set $\C'$ achieves a distortion of $1+2\beta$ for the original instance.  

\begin{theorem}\label{thm:factor3}
Let $(\C, \V)$ be an instance of the $k$-Committee Election, let $\C' \subseteq \C$ be an $(\ell, \beta)$-good set, let $\C'_{\cres}$ be the candidate-restricted instance induced by $\C'$ and let $S$ (resp. $S^\ast$) be an optimal $k$-committee for $\C'_{\cres}$ (resp. for $(\C, \V)$). Then, $\Util(S) \leq (1+2\beta)\Util(S^\ast)$.
\end{theorem}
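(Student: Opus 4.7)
The plan is to bracket $\Util(S)$ between two triangle-inequality estimates joined by the optimality of $S$ on the restricted instance. Let $\pi(v)$ denote the top candidate of voter $v$ in $\C'$, so that $\Util(\C') = \sum_{v \in \V} d(v, \pi(v))$, and, for any candidate set $T$, write $f(T) = \sum_{v \in \V} d(\pi(v), T)$ for the social cost of $T$ on $\C'_{\cres}$. By the goodness assumption, $\Util(\C') \leq \beta \, \Util(S^\ast)$.

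Applying the triangle inequality $d(v, S) \leq d(v, \pi(v)) + d(\pi(v), S)$ and summing over voters gives
\[ \Util(S) \;\leq\; \Util(\C') + f(S). \]
Symmetrically, applying $d(\pi(v), S^\ast) \leq d(\pi(v), v) + d(v, S^\ast)$ and summing yields
\[ f(S^\ast) \;\leq\; \Util(\C') + \Util(S^\ast). \]

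The key middle step is to show $f(S) \leq f(S^\ast)$, even though $S$ is selected from $\C'$ while $S^\ast \subseteq \C$ need not lie in $\C'$. This is where the $1$-Euclidean structure is essential: the classical characterization of $1$-dimensional weighted $k$-median says that an optimal placement of $k$ centres on the real line can always be realised by choosing each centre at a weighted median of its cluster of voters. In $\C'_{\cres}$ the voters sit precisely at points of $\C'$, so the unconstrained optimum over $k$-subsets of $\reals$ is attained by some $k$-subset of $\C'$. Consequently, $S$, being optimal among $k$-subsets of $\C'$, is also optimal over all $k$-subsets of $\reals$, and in particular $f(S) \leq f(S^\ast)$.

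Chaining the three estimates gives $\Util(S) \leq 2\,\Util(\C') + \Util(S^\ast) \leq (1+2\beta)\,\Util(S^\ast)$, as required. I expect the middle step to be the main delicate point: without invoking the $1$-Euclidean property, merely projecting each $c \in S^\ast$ onto its nearest candidate in $\C'$ would require a second pass of the triangle inequality (paying $d(c, \C') \leq d(c, \pi(v)) \leq d(v, \pi(v)) + d(v, S^\ast)$) and weaken the bound to $(3\beta+2)\Util(S^\ast)$. The $1$-dimensional median property is precisely what lets us compare $S$ directly to $S^\ast$ without that additional loss.
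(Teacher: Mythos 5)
Your proof is correct and follows essentially the same route as the paper: the same two triangle-inequality passes ($\Util(S) \leq \Util(\C') + f(S)$ and $f(S^\ast) \leq \Util(\C') + \Util(S^\ast)$), with the middle inequality $f(S) \leq f(S^\ast)$ supplied by the $1$-dimensional weighted-median argument, which is precisely what the paper's Proposition~\ref{pr:active} (replacing a centre outside $\C'$ by the nearest demand point on the majority side of its cluster) encapsulates. Your closing remark about the weaker $(3\beta+2)$ bound one would obtain by naively projecting $S^\ast$ onto $\C'$ is also accurate and matches the slight loss the paper incurs in its egalitarian analogue (Theorem~\ref{thm:factor5_egal}).
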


\begin{proof}
For each voter $v$ (with her location $v$ as in the original instance), we let $\top'(v) \in \C'$ be $v$'s top candidate in $\C'$. Then, by the triangle inequality, $d(v, S) \leq d(v, \top'(v)) + d(\top'(v), S)$. Summing up over all voters $v \in \V$, we obtain:
\begin{equation}\label{eq:beta_good}
  \Util(S) \leq \Util(\C') + \Util(\C'_{\cres}, S)\,, 
\end{equation} 
where $\Util(\C'_{\cres}, S) = \sum_{v \in V} d(\top'(v), S) = \sum_{i=1}^\ell n_i \,d(c_i, S)$ is the social cost of $S$ for the candidate-restricted instance $\C'_{\cres}$ induced by $\C'$, and $\Util(\C') = \sum_{v\in \V} d(v, \top'(v)) = \sum_{v\in \V} d(v, \C')$. 

We observe that $\Util(\C'_{\cres}, S) \leq \Util(\C'_{\cres}, S^\ast)$, because, as shown in Proposition~\ref{pr:active}, Suppl.~\ref{s:app:active-candidates} (and since voters are collocated with their top candidate in $\C'_{\cres}$), in the candidate-restricted instance $\C'_{\cres}$, we can replace candidates in $S^\ast \setminus \C'$ with candidates in $S$ without increasing the social cost. Moreover, since $d(\top'(v), S^\ast) \leq d(\top'(v), v) + d(v, S^\ast)$, 
we obtain that $\Util(\C'_{\cres}, S^\ast) \leq \Util(\C')+\Util(S^\ast)$. 

Combined with the observations above, \eqref{eq:beta_good} implies that:
\[ \Util(S) \leq 2\Util(\C') + \Util(S^\ast) \leq (1+2\beta)\Util(S^\ast)\,,\]
where the second inequality holds because $\C'$ is a $(\ell, \beta)$-good set of candidates.
\qed\end{proof} 

As noted in Suppl.~\ref{s:app:axis}, as soon as we have the distances between all active candidates in an $(\ell, \beta)$-good set $\C'$, which requires $\ell-1$ distance queries, an optimal $k$-committee for the candidate-restricted instance $\C'_{\cres}$ induced by $\C'$ can be computed in polynomial time by dynamic programming.

\section{Hierarchical Partitioning for Good Candidate Subsets}
\label{s:coresets}

Next, we use hierarchical partitioning of the candidate axis (see Algorithm~\ref{alg:coreset}) and compute a $(O(k\log n), O(1))$-good set of candidates with $O(k\log n)$ distance queries. 

\begin{algorithm}[t]
\caption{\label{alg:coreset}Hierarchical partitioning of $\C[c_1, c_m]$}
\textbf{Input}: Candidates $\C = \{ c_1, \ldots, c_m \}$, $k \in \{2, \ldots, m-1\}$, voter ranking profile $\vec{\succ} = (\succ_1, \ldots, \succ_n)$\\
\textbf{Output}: Partitioning $\I$ of $\C$ into $O(k\log n)$ intervals. 

\begin{algorithmic}[1]
\STATE Let $S = \{c^1, \ldots, c^k\}$ be the result of Algorithm~\ref{alg:greedy}

\STATE $\I \leftarrow \{\, (\C[c_a^1, c_b^1], n_1, d(c_a^1, c_b^1)), \ldots, (\C[c_a^k, c_b^k], n_k, d(c_a^k, c_b^k))\,\}$ 
\COMMENT{Start with the partitioning of $(\C, \V)$ induced by $S$}
\STATE $\delta^\ast \leftarrow \max_{i\in [k]} \{ d(c^i_a, c^i_b) \}$
\WHILE{$|\I| \leq 7k (\log_2(5nk) + 2)$}
	\STATE Let $(\C[c_a, c_b], n_{ab}, d(c_a, c_b)) \in \I$ with $|\C[c_a, c_b]| \geq 4$ and maximum weight $\weight(c_a, c_b) = n_{ab} d(c_a, c_b)$ \\
%
%
    \STATE \textbf{if} \,$\weight(c_a, c_b) \leq \delta^\ast / (5k)$\, \textbf{then}\, \textbf{break-while-loop} 
	\STATE $\I \leftarrow \big(\I \setminus \big\{ (\C[c_a, c_b], n_{ab}, d(c_a, c_b)) \big\}\big) \cup \mathrm{Partitioning}(\C[c_a, c_b])$
\ENDWHILE
\RETURN $\I$
\end{algorithmic}
\end{algorithm}

In Algorithm~\ref{alg:coreset}, each triple (or \emph{interval}) in $\I$ consists of a candidate interval $\C[c_a, c_b]$, the number $n_{ab}$ of voters $v$ with $\top(v) \in \C[c_a, c_b]$, and the interval length $d(c_a, c_b)$. We refer to $\weight(c_a, c_b) = n_{ab}d(c_a, c_b)$ as the \emph{weight} of interval $\C[c_a, c_b]$. Algorithm~\ref{alg:coreset} computes a partitioning $\I$ of the candidate axis $\C[c_1, c_m]$ into at most $7k (\log_2 (5nk) + 2)$ intervals by repeatedly splitting the interval in $\I$ with largest weight (and at least $4$ candidates) into two intervals.

Algorithm~\ref{alg:coreset} starts with the partitioning induced by the $k$-committee $S = \{ c^1, \ldots, c^k\}$ computed by Algorithm~\ref{alg:greedy} (we let $c^1 < c^2 < \cdots < c^k$). In step~2, for each $i \in [k]$, $\C[c^i_a, c^i_b]$ is the interval that includes all candidates in $\C$ closer to $c^i$ than to any other candidate in $S$ ($c_a^i$, resp. $c_b^i$, is the leftmost, resp. the rightmost, such candidate), $n_i$ is the number of voters $v$ with $\top(v) \in \C[c^i_a, c^i_b]$, and $d(c_a^i, c_b^i)$ is the length of $\C[c^i_a, c^i_b]$. We let $\delta^\ast = \max_{i\in [k]} \{ d(c^i_a, c^i_b) \}$. Then, $\Util(S^\ast) \geq \delta^\ast / 5$, where $S^\ast$ is an optimal $k$-committee for the original instance, because $S$ has distortion $5$ for the egalitarian cost and we assume that all candidates in $\C$ are active. 
%

For the interval split, in step~4, we use the Partitioning algorithm (Algorithm~\ref{alg:partitioning}), which is a modification of the algorithm Distant-Candidate used in Section~\ref{s:greedy}. The Partitioning algorithm uses at most $4$ distance queries and splits each interval $\C[c_a, c_b]$ into two intervals $(\C[c_a, c_l], n_{al}, d(c_a, c_l))$ and $(\C[c_r, c_b], n_{rb}, d(c_r, c_b))$, where $c_l$ (resp. $c_r$) is the rightmost (resp. leftmost) candidate in $\C[c_a, c_b]$ on the left (resp. right) of the interval's midpoint $(c_a+c_b)/2$. We defer the detailed description and the analysis of the Partitioning algorithm to Section~\ref{sec:partitioning}, at the end of this section. 

We note that $\C[c_a, c_l] \cup \C[c_r, c_b] = \C[c_a, c_b]$. Therefore, since initially, in step~2, $\I$ is a partitioning of $(\C, \V)$, the collection of intervals $\I$ maintained by Algorithm~\ref{alg:coreset} always remains a partitioning of $\C$. 
To obtain an $(O(k\log n), O(1))$-good set $\C'(\I) \subseteq \C$ from $\I$, we include in $\C'(\I)$ the endpoints $c_a$ and $c_b$ of (resp. all candidates in) each interval $\C[c_a, c_b]$ in $\I$ with more than (resp. at most) $3$ candidates. Since $|\I| = O(k\log n)$, $\C'(\I)$ consists of $O(k\log n)$ candidates. The following shows that $\Util(\C'(\I)) \leq 2\,\Util(S^\ast)$, where $S^\ast$ is an optimal $k$-committee for the original instance. 

\begin{theorem}\label{thm:coreset}
Let $\I$ be the partitioning of $\C$ computed by Algorithm~\ref{alg:coreset}. Then, the resulting set $\C'(\I) \subseteq \C$ is a $(O(k\log n), 2)$-good set of candidates.
\end{theorem}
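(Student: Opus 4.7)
\medskip
\noindent\textbf{Proof plan.} The argument has two independent ingredients: bounding $|\C'(\I)|$ and bounding $\Util(\C'(\I))$. For the size, the while-loop in Algorithm~\ref{alg:coreset} exits only when $|\I|$ exceeds $7k(\log_2(5nk)+2)$, so $|\I|=O(k\log n)$ by the non-degeneracy assumption $n\geq k+1$; each interval contributes at most three candidates to $\C'(\I)$ (its two endpoints, or all of its at most three candidates), hence $|\C'(\I)|=O(k\log n)$.

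For the goodness, I apply the triangle inequality
\[
\Util(\C'(\I))
\;\leq\; \sum_{v\in\V} d(v,\top(v)) \;+\; \sum_{v\in\V} d(\top(v),\C'(\I)).
\]
The first sum is at most $\Util(S^\ast)$ because $\top(v)$ minimizes $d(v,\cdot)$ over $\C\supseteq S^\ast$. The second sum vanishes on voters whose top candidate lies in a leaf interval with at most three candidates, since every such candidate belongs to $\C'(\I)$; for a leaf $\C[c_a,c_b]$ with at least four candidates, $\{c_a,c_b\}\subseteq \C'(\I)$ and every $c_i\in[c_a,c_b]$ is within $(c_b-c_a)/2$ of $\{c_a,c_b\}$, so
\[
\sum_{v\in\V} d(\top(v),\C'(\I))
\;\leq\; \tfrac12 \sum_{\substack{I=\C[c_a,c_b]\in\I\\ |\C[c_a,c_b]|\geq 4}} \weight(c_a,c_b).
\]
It therefore suffices to establish $\sum_{I:|I|\geq 4}\weight(c_a,c_b)\leq 2\,\Util(S^\ast)$.

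To prove this inequality, I will classify each non-trivial leaf as \emph{special} if $I\cap S^\ast\neq\emptyset$ and \emph{non-special} otherwise. At most $k$ leaves are special, and by Algorithm~\ref{alg:coreset}'s termination rule every non-trivial leaf satisfies $\weight(c_a,c_b)\leq \delta^\ast/(5k)$; combined with the bound $\Util(S^\ast)\geq \delta^\ast/5$ established just before the theorem statement, this yields total special weight at most $\Util(S^\ast)$. For a non-special leaf $\C[c_a,c_b]$ the nearest $S^\ast$-candidate $c^\ast$ to any voter $v$ with $\top(v)\in[c_a,c_b]$ lies outside $[c_a,c_b]$; using $d(v,\top(v))\leq d(v,S^\ast)$ together with the triangle inequality gives $d(v,S^\ast)\geq \tfrac12\,d(\top(v),c^\ast)$, which I will charge back to $\weight(c_a,c_b)$.

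The \textbf{main obstacle} is precisely this last charging, because a non-special leaf can sit right next to a special one, making the pointwise lower bound on $d(v,S^\ast)$ too weak to absorb $\weight(c_a,c_b)$ on its own. My plan is to amortize across each maximal run of consecutive non-special leaves lying between two neighbouring candidates $c^\ast_j,c^\ast_{j+1}\in S^\ast$ on the axis: the total length of the leaves in such a run is at most $c^\ast_{j+1}-c^\ast_j$, while the voters with $\top$-candidate inside the run collectively pay a constant fraction of that distance to the two flanking $S^\ast$-candidates. Summing these run-wise charges over all gaps between consecutive $S^\ast$-candidates, and adding the contribution of special leaves, gives $\sum_{I:|I|\geq 4}\weight(c_a,c_b)\leq 2\,\Util(S^\ast)$, which combined with the decomposition above yields $\Util(\C'(\I))\leq 2\,\Util(S^\ast)$ and completes the proof.
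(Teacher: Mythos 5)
Your proof of the size bound $|\C'(\I)|=O(k\log n)$ is fine, but your route to the goodness bound $\Util(\C'(\I))\leq 2\,\Util(S^\ast)$ has two genuine gaps.

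First, your decomposition $d(v,\C'(\I))\leq d(v,\top(v))+d(\top(v),\C'(\I))$ forces you to charge half the weight of \emph{every} non-trivial leaf, which is strictly more than the paper needs to bound. You therefore have to prove $\sum_{I:\,|I|\geq 4}\weight(c_a,c_b)\leq 2\,\Util(S^\ast)$, but the algorithm only guarantees that each leaf is \emph{light}, i.e.\ $\weight(c_a,c_b)\leq \Util(S^\ast)/k$, and there are $\Theta(k\log n)$ of them, so the obvious bound is $O(\log n)\cdot\Util(S^\ast)$, not $O(1)\cdot\Util(S^\ast)$. The paper avoids this entirely by bounding $d(v,\C')$ directly without passing through $\top(v)$: for a voter whose top candidate lies in a \emph{cheap} (non-special) leaf $\C[c_a,c_b]$, the nearest $S^\ast$-candidate lies outside the leaf while $c_a,c_b\in\C'$, which gives the pointwise inequality $d(v,\C')\leq d(v,S^\ast)$ with no interval-length term at all. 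Only the at most $k$ \emph{expensive} (special) leaves ever pay a $\weight(c_a,c_b)$ surcharge, and $k\cdot(\Util(S^\ast)/k)=\Util(S^\ast)$ then closes the argument. Your proposed amortization over runs of non-special leaves does not recover this: inside a run, the voters whose top candidates populate those leaves can sit arbitrarily close to one of the flanking $S^\ast$-candidates (while still having their top candidates spread over the leaves, since all candidates are active), so they contribute almost nothing to $\Util(S^\ast)$ while the run's total weight stays $\Theta(n_I\,d_I)$ --- the charge you need simply isn't there pointwise, and the run-length bound $\leq c^\ast_{j+1}-c^\ast_j$ does not multiply against a matching lower bound on voter cost.

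Second, your weight bound for special leaves is over-claimed. You write that ``by the termination rule every non-trivial leaf satisfies $\weight(c_a,c_b)\leq\delta^\ast/(5k)$,'' but the while-loop in Algorithm~\ref{alg:coreset} can exit because $|\I|$ exceeds $7k(\log_2(5nk)+2)$ rather than through the explicit \textbf{break}, in which case the break threshold was never met. To establish that at termination every leaf is light, the paper runs a level-based counting argument (in the spirit of coresets for $k$-median): it shows the total number of \emph{far heavy} and \emph{close heavy} intervals ever encountered across all $O(\log n)$ dyadic length-levels is $O(k\log n)$, so light intervals must start being split before $|\I|$ reaches the cap, after which all intervals remain light. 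Your proof skips this entirely, and without it neither exit branch gives you a usable per-leaf weight bound.
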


\begin{proof}
We first upper bound $\Util(\C'(\I))$. We call an interval $\C[c_a, c_b] \in \I$ \emph{expensive}, if $\C[c_a, c_b] \cap S^\ast \neq \emptyset$ (i.e., $\C[c_a, c_b]$ includes an optimal candidate), and \emph{cheap} otherwise. 
For each voter $v$ with $\top(v)$ in a cheap interval $\C[c_a, c_b]$, $\cost_v(\C') \leq \cost_v(S^\ast)$, because the interval's endpoints $c_a, c_b \in \C'$ and $v$'s nearest candidate in $S^\ast$ is outside $\C[c_a, c_b]$. Therefore, the total contribution to $\Util(\C'(\I))$ of all voters associated with cheap intervals in $\I$ is at most their contribution to $\Util(S^\ast)$. 
The contribution to $\Util(\C'(\I))$ of the voters $v$ with $\top(v)$ in an expensive interval $\C[c_a, c_b]$ is at most their contribution to $\Util(S^\ast)$ plus $\weight(c_a, c_b)$. Moreover, since each expensive interval includes a candidate of $S^\ast$, there are at most $k$ expensive intervals in $\I$. 
We next show, by adapting an argument of \cite{FrahlingS05},
that as soon as $|\I| > 7k (\log_2(5nk) + 1)$, each interval in $\I$ has weight at most $\Util(S^\ast) /k$. 
Therefore, at this point, the additional cost due to the total weight of expensive intervals is at most $\Util(S^\ast)$. 

We refer to an interval $\C[c_a, c_b]$ as \emph{light}, if $\weight(c_a, c_b) \leq \Util(S^\ast) / k$, and as \emph{heavy}, otherwise. We observe that splitting an interval $\C[c_a, c_b]$ of weight $\weight(c_a, c_b)$ results in two intervals $\C[c_a, c_l]$ and $\C[c_r, c_b]$ each of length (resp. weight) at most $d(c_a, c_b)/2$ (resp. $\weight(c_a, c_b)/2$). 
Therefore, splitting a light interval replaces it with two light intervals in $\I$. Moreover, since Algorithm~\ref{alg:coreset} always splits the heaviest interval in $\I$, from the first iteration that Algorithm~\ref{alg:coreset} splits a light interval and on, all intervals in $\I$ are light. 

A heavy interval $\C[c_a, c_b]$ is \emph{close}, if $d(\{c_a,  c_b\}, S^\ast) < d(c_a, c_b)$, and is \emph{far} otherwise. 
We say that an interval $\C[c_a, c_b]$ is \emph{encountered} by Algorithm~\ref{alg:coreset}, if there is a point during its execution where $\C[c_a, c_b] \in \I$. 
%
%
We next show that (i) the total number of far heavy intervals encountered by Algorithm~\ref{alg:coreset} is at most $2k(\log_2(5nk) + 1)$\,; and (ii) that the total number of close heavy intervals encountered by Algorithm~\ref{alg:coreset} is at most $5k (\log_2(5nk) + 1)$.

To upper bound the number of heavy intervals, we partition the intervals encountered by Algorithm~\ref{alg:coreset} in levels according to their length. An interval $\C[c_a, c_b]$ is \emph{level-$i$}, if $2^{i-1} \delta^\ast < d(c_a, c_b) \leq 2^{i} \delta^\ast$. Algorithm~\ref{alg:coreset} starts with intervals at a level $i \leq 0$, because initially all intervals created in step~2 have $d(c^i_a, c^i_b) \leq \delta^\ast$. Algorithm~\ref{alg:coreset} can encounter intervals until levels down to $i = -\log_2 (5nk)$. A level-$(-\log_2 (5nk))$ interval $\C[c_a, c_b]$ has length $d(c_a, c_b) \leq \delta^\ast / (5nk)$ and weight $\weight(c_a, c_b) \leq n d(c_a, c_b) \leq \delta^\ast / (5k)$. Since $\Util(S^\ast) \geq \delta^\ast/5$, if the weight of the heaviest interval in $\I$ is at most $\delta^\ast / (5k) \leq \Util(S^\ast) / k$, all intervals in $\I$ are light and Algorithm~\ref{alg:coreset} terminates through step~6. We note that we treat intervals with at most $3$ candidates as light, because all candidates in such intervals are included in $\C'(\I)$. 

Thus, level-$(-\log_2 (5nk))$ intervals never split and stay in $\I$. Moreover, the set of level-$i$ intervals encountered by Algorithm~\ref{alg:coreset}, for any $i = -\log_2 (5nk)), \ldots, 0$, 
form a partitioning of a subset of $\C$ 
(assuming that this set is non-empty). 

The voters associated with a far heavy interval $\C[c_a, c_b]$ contribute to $\Util(S^\ast)$ a total cost no less than $\weight(c_a, c_b)/2$. This holds because for any voter $v$ with $\top(v)$ in a far heavy interval $\C[c_a, c_b]$, $d(c_a, c_b)$, which is $v$'s contribution to $\weight(c_a, c_b)$, is at most $2d(v, S^\ast)$. 

Specifically, let $c^\ast_v \in S^\ast$ be $v$'s closest candidate in $S^\ast$. If $c^\ast_v < c_a < v$ (or symmetrically, $v > c_b > c^\ast_v$), then
\[ d(S^\ast, v) = d(c^\ast_v, v) \geq d(S^\ast, \{c_a,  c_b\}) \geq d(c_a, c_b) \,,\]
due to definition of far heavy intervals. 
If $c^\ast_v < v < c_a$ (or symmetrically, $c_b > v > c^\ast_v$), then $d(c^\ast_v, v) \geq d(v, \{ c_a, c_b \})$, because $\top(v) \in \C[c_a, c_b]$. Hence, 
\[ d(c^\ast_v, \{c_a, c_b\}) = d(c^\ast_v, v) + d(v, \{c_a, c_b\}) \geq 2d(c^\ast_v, v)\,. \]
Therefore, 
\begin{align*} 
  d(S^\ast, v) & = d(c^\ast_v, v) \geq d(c^\ast_v, \{c_a, c_b\})/2 \\
  & \geq d(S^\ast, \{c_a,  c_b\}) /2 \geq d(c_a, c_b)/2 \,,
\end{align*}
where the last inequality follows from the definition of far heavy intervals.

The far heavy level-$i$ intervals encountered by Algorithm~\ref{alg:coreset}, at any fixed level $i$, induce a partitioning of a subset of voters. Since each far heavy interval has $\weight(c_a, c_b) \geq \Util(S^\ast) / k$, the total number of far heavy intervals encountered by Algorithm~\ref{alg:coreset} at any fixed level-$i$ is at most $2k$, and at most $2k(\log_2(5nk)+1)$ in total, which concludes the proof of (i) above. 

Furthermore, each candidate in the optimal $k$-committee $S^\ast$ can be associated with at most $5$ close heavy level-$i$ intervals, for any level $i$. This holds due to the definition of close heavy intervals as $d(\{c_a, c_b\}, S^\ast) < d(c_a, c_b)$, the fact that the lengths of level-$i$ intervals are within a factor of at most $2$ from each other, and the fact that the close heavy level-$i$ intervals encountered by Algorithm~\ref{alg:coreset}, at any fixed level $i$, induce a partitioning of a subset of $\C$. Therefore, the total number of close heavy intervals encountered by Algorithm~\ref{alg:coreset} is at most $5k(\log_2(5nk)+1)$, which concludes the proof of (ii) above.  

Algorithm~\ref{alg:coreset} keeps splitting heavy intervals, as long as they exist in $\I$. Light intervals created by such splits are accumulated in $\I$ and are not split, as long as heavy intervals exist in $\I$. Since the total number of heavy intervals encountered by Algorithm~\ref{alg:coreset} is at most $7k(\log_2 (5nk)+1)$, the first split of a light interval happens no later than iteration $7k(\log_2 (5nk)+1)+1$. At that point all intervals in $\I$ are light. 
%
%
Moreover, since we start with $|\I| = k$, the total number of intervals in $\I$ at that point is at most $7k(\log_2 (5nk)+1)+k+1 < 7k(\log_2 (5nk)+2) = O(k\log n)$ (since $n \geq k$). 
%
%
%
\qed\end{proof}


Algorithm~\ref{alg:coreset} performs $O(k\log n)$ splits during its execution. Since the Partitioning algorithm uses at most $4$ distance queries per interval split, the total number of distance queries used by Algorithm~\ref{alg:coreset} is at most $O(k\log n)$. Combining Theorem~\ref{thm:factor3} (and the discussion below it) with the analysis of Algorithm~\ref{alg:coreset} in Theorem~\ref{thm:coreset}, we obtain that:


\begin{theorem}\label{thm:constant-distortion}
There is a polynomial-time deterministic rule for $k$-Committee Election that uses $O(k\log n)$ distance queries and achieves a distortion of at most $5$. 
\end{theorem}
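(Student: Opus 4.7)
The plan is to combine the two preceding results, namely Theorem~\ref{thm:factor3} and Theorem~\ref{thm:coreset}, and then argue that the overall procedure can be implemented with $O(k\log n)$ distance queries and in polynomial time. The proposed rule proceeds in three phases: (i) run Algorithm~\ref{alg:coreset} to build the partitioning $\I$ of the candidate axis and extract the associated set $\C'(\I) \subseteq \C$; (ii) query the $|\C'(\I)|-1$ candidate-distances between consecutive elements of $\C'(\I)$ that are not yet known from phase (i), so that the candidate-restricted instance $\C'_{\cres}$ induced by $\C'(\I)$ is fully specified; (iii) compute an optimal $k$-committee $S$ for $\C'_{\cres}$ by the $O(\ell k \log \ell)$ dynamic programming algorithm of \cite{kmedian} on the line (with $\ell = |\C'(\I)|$), and return $S$.

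For the distortion bound, Theorem~\ref{thm:coreset} yields that $\C'(\I)$ is $(O(k\log n), 2)$-good. Plugging $\beta = 2$ into Theorem~\ref{thm:factor3}, the optimal $k$-committee $S$ for the candidate-restricted instance $\C'_{\cres}$ satisfies
\[
\Util(S) \,\leq\, (1 + 2\beta)\,\Util(S^\ast) \,=\, 5\,\Util(S^\ast),
\]
where $S^\ast$ is the optimal $k$-committee for the original instance $(\C, \V)$. This gives the claimed distortion of at most $5$.

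For the query complexity, Algorithm~\ref{alg:coreset} first invokes Algorithm~\ref{alg:greedy}, which uses at most $6k-15$ candidate queries by Theorem~\ref{thm:k-queries}, and then performs at most $7k(\log_2(5nk)+2)$ interval splits, each implemented by the Partitioning algorithm with at most $4$ queries. Extracting $\C'(\I)$ from $\I$ is free of additional queries. Phase (ii) requires at most $|\C'(\I)|-1 = O(k\log n)$ further candidate queries to determine the distances between consecutive elements of $\C'(\I)$ (those not already queried during phase (i)), as in Observation~\ref{obs:axis}. Summing all phases yields $O(k)+O(k\log n)+O(k\log n)=O(k\log n)$ candidate queries, which by Section~\ref{s:queries} is the same asymptotic bound in terms of regular queries.

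The main obstacle, insofar as there is one, is a bookkeeping one: making sure that in phase (ii) we genuinely have access to the data needed to run the dynamic program on $\C'_{\cres}$, namely the locations (or consecutive-pair distances) of the candidates in $\C'(\I)$ and the multiplicities $n_i$ associated with each of them. The multiplicities are produced directly by the partitioning (voters are assigned to the interval containing their top choice, and the two endpoints of a thick interval inherit these voters as in the definition of $\C'(\I)$), and the distances are either already known from the splits in Algorithm~\ref{alg:coreset} or obtained by the $O(k \log n)$ additional queries of phase (ii). Everything else is an immediate assembly of Theorem~\ref{thm:coreset} (for the good-set guarantee and the query count of the partitioning) and Theorem~\ref{thm:factor3} (for the final distortion), with dynamic programming giving polynomial running time.
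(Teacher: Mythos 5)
Your proposal is correct and follows essentially the same route as the paper: apply Theorem~\ref{thm:coreset} to obtain the $(O(k\log n),2)$-good set $\C'(\I)$, plug $\beta=2$ into Theorem~\ref{thm:factor3} to get the distortion bound $1+2\cdot 2=5$, and query the $O(k\log n)$ consecutive-pair distances in $\C'(\I)$ so the dynamic program of \cite{kmedian} can be run on $\C'_{\cres}$. The query accounting (Algorithm~\ref{alg:greedy}'s $O(k)$ queries, $O(k\log n)$ splits at $4$ queries each, plus $|\C'(\I)|-1 = O(k\log n)$ queries to finish reconstructing the restricted axis) also matches the paper's argument.
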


\subsection{Description and Analysis of the Partitioning Algorithm}
\label{sec:partitioning}

\begin{algorithm}[!b]
\caption{\label{alg:partitioning}The Partitioning algorithm}
\textbf{Input}: Candidate interval $\C[c, c']$, 
rankings $\succ_v$ for all voters $v \in \bigcup_{c'' \in \C[c, c']}\Cl(c'')$\\ 
\textbf{Output}: Intervals $(\C[c, c_l], n_l, d(c, c_l))$ and $(\C[c_r, c'], n_r, d(c_r, c'))$ subdividing interval $(\C[c, c'], n, d(c, c'))$

\smallskip
\begin{algorithmic}[1]
\STATE Let $c''$ be the leftmost candidate in $\C[c, c'] \setminus \{ c \}$
\WHILE{$c'' \in C[c, c']$}
	\STATE Let $\succ_{c''}$ be the ranking $\succ_v$ of any $v \in \Cl(c'')$ 
	\IF{$c' \succ_{c''} c$}
		\STATE Let $c_r$ be $c''$ and $c_l$ be next candidate on $c''$'s left 
		\COMMENT{$c_l$ and $c_r$ found, while-loop terminates}
		\STATE \textbf{break-while-loop} 
    \ELSE 
   		\STATE $c'' \leftarrow$\, the next candidate on $c''$'s right 
   		\COMMENT{proceed to the next candidate on the right}
   \ENDIF
\ENDWHILE

\IF{$d(c, c_l) \geq d(c_r, c')$}
	\STATE \COMMENT{$c_l$ is the most distant candidate to $\{c, c'\}$}
	\IF{$d(c, c_l) > d(c_l, c')$}
		\STATE $c_r \leftarrow c_l$ 
		\COMMENT{$c_l$ is the first candidate on the right of $(c+c')/2$}
		\STATE Let $c_l$ be the first candidate on $c_r$'s left
	\ENDIF
\ELSE 
	\STATE \COMMENT{$c_r$ is the most distant candidate to $\{c, c'\}$}
	\IF{$d(c, c_r) < d(c_r, c')$}
	\STATE $c_l \leftarrow c_r$ 
		\COMMENT{$c_r$ is the first candidate on the left of $(c+c')/2$}
		\STATE Let $c_r$ be the first candidate on $c_l$'s right
	\ENDIF
\ENDIF
\STATE $n_l \leftarrow \sum_{\tilde{c} \in \C[c, c_l]} |\Cl(\tilde{c})|$ 
\COMMENT{$c_l$ is the first candidate on the left of $(c+c')/2$}
\STATE $n_r \leftarrow \sum_{\tilde{c} \in \C[c_r, c']} |\Cl(\tilde{c})|$ 
\COMMENT{$c_r$ is the first candidate on the right of $(c+c')/2$}
\RETURN $\big\{ (\C[c, c_l], n_l, d(c, c_l)), (\C[c_r, c'], n_r, d(c_r, c')) \big\}$
\end{algorithmic}
\end{algorithm}

We conclude this section by describing the Partitioning algorithm (Algorithm~\ref{alg:partitioning}) and verifying its main properties used by Algorithm~\ref{alg:coreset}. Lemma~\ref{l:partitioning} below shows that for any candidate interval $\C[c, c']$ with $|\C[c, c']| \geq 4$, Algorithm~\ref{alg:partitioning} correctly computes the candidate $c_l$, which is the rightmost candidate on the left of the midpoint $(c+c')/2$, and the candidate $c_r$, which is the leftmost candidate on the right of the midpoint $(c+c')/2$. Therefore, the two intervals $\C[c, c_l]$ and $\C[c_r, c']$ form a partitioning of the input interval $\C[c, c']$ and have length $\max\{d(c, c_l), d(c_r, c')\} \leq d(c, c')/2$. 

\begin{lemma}\label{l:partitioning}
For any $c, c' \in \C$, with $c < c'$ and $|\C[c, c']| \geq 4$. Algorithm~\ref{alg:partitioning}  computes the rightmost candidate $c_l$ on the left of the midpoint $(c+c')/2$ and the leftmost candidate $c_r$ on the right of the midpoint $(c+c')/2$ of interval $\C[c, c']$. 
\end{lemma}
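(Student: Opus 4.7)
The plan is to show that the algorithm's final pair $(c_l, c_r)$ coincides with the \emph{true} midpoint-bracketing pair. Let $\mu = (c+c')/2$ and let $c_L$ (resp.\ $c_R$) denote the rightmost candidate in $\C[c,c']$ on the left of $\mu$ (resp.\ the leftmost on the right of $\mu$); since $c < \mu < c'$, both are well defined, and they are adjacent on the candidate axis. The preliminary observation is that $\hat{c} := \arg\max_{c'' \in \C[c,c']} d(c'', \{c,c'\})$, i.e.\ the candidate closest to $\mu$, always lies in $\{c_L, c_R\}$. So the task reduces to showing that Algorithm~\ref{alg:partitioning} outputs $\{c_l, c_r\} = \{c_L, c_R\}$ with $c_l < c_r$.

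The while-loop of Algorithm~\ref{alg:partitioning} is identical to the first phase of Distant-Candidate (Algorithm~\ref{alg:dist-cand}), and its behaviour can be analyzed along the lines of Lemma~\ref{l:most-distant}. A short Voronoi-cell argument first shows that every voter in $\Cl(c'')$ with $c'' < c_L$ lies strictly to the left of $\mu$ and so satisfies $c \succ_v c'$; hence the scan cannot stop before reaching $c_L$. A case analysis on (i) whether $\hat{c} = c_L$ or $\hat{c} = c_R$ and (ii) on which side of $\mu$ the voter actually used at $\hat{c}$ lies then yields exactly four possible outcomes. In two of them (the ``good'' ones) the loop returns $(c_l, c_r) = (c_L, c_R)$ directly. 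In the other two it is shifted by one candidate: either $(c_a, c_L)$, where $c_a$ is the immediate left neighbour of $c_L$, or $(c_R, c_b)$, where $c_b$ is the immediate right neighbour of $c_R$. In every case, the returned pair is adjacent on the axis and contains $\hat{c}$.

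Stage~2 corrects the ``off'' cases using at most two additional distance queries. Querying $d(c, c_l)$ and $d(c_r, c')$ identifies which of $c_l, c_r$ equals $\hat{c}$: since $x \mapsto d(x, \{c, c'\})$ is monotone increasing on $[c, \mu]$ and monotone decreasing on $[\mu, c']$, and since $c_l$ lies to the left of $c_r$, the larger of these two queried values pinpoints $\hat{c}$. Entering the ``if'' branch WLOG, so $c_l = \hat{c}$, a third query returns $d(c_l, c')$; the test $d(c, c_l)$ vs.\ $d(c_l, c')$ then decides whether $c_l \leq \mu$ (so $c_l = c_L$) or $c_l > \mu$ (so $c_l = c_R$). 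In the first sub-case, the stage~1 adjacency already guarantees $c_r = c_R$ and no update is needed. In the second sub-case the algorithm sets $c_r \leftarrow c_l$ and resets $c_l$ to its immediate left neighbour, which by adjacency equals $c_L$; a fourth query retrieves the new interval length. The ``else'' branch is entirely symmetric.

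The main subtlety, and what the stage~2 queries are designed to absorb, is the mismatch between the voter ranking $\succ_v$ used during the scan and the true ranking $\succ_{c''}$ at the candidate $c'' = \top(v)$: this mismatch is exactly what produces the ``off'' outcomes of stage~1. The correction hinges on two simple structural facts that stage~1 does supply: the returned pair $(c_l, c_r)$ is adjacent on the candidate axis and contains $\hat{c}$; and the position of $\hat{c}$ relative to $\mu$ can be decided by a single sign test $d(c, \hat{c})$ vs.\ $d(\hat{c}, c')$. Together these yield $(c_l, c_r) = (c_L, c_R)$ in all cases, as required.
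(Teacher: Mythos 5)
Your proposal is correct and follows essentially the same route as the paper's proof: both invoke the analysis of Lemma~\ref{l:most-distant} to conclude that after the scan the pair $(c_l,c_r)$ is adjacent on the candidate axis and contains $\hat{c}$, and both then verify that the subsequent distance comparisons correctly decide which of $c_l,c_r$ equals $\hat{c}$ and on which side of $\mu$ it lies, with the adjacency of $c_L,c_R$ finishing the identification. Two small imprecisions worth noting: (i) your justification that ``the larger of $d(c,c_l)$ and $d(c_r,c')$ pinpoints $\hat{c}$'' via monotonicity of $x\mapsto d(x,\{c,c'\})$ is loose, because in the off cases the queried quantities $d(c,c_l)$ and $d(c_r,c')$ are not equal to $d(c_l,\{c,c'\})$ and $d(c_r,\{c,c'\})$ respectively --- the cleaner argument is that $d(c,c_l)\geq d(c_r,c')$ iff $(c_l+c_r)/2\geq\mu$ iff $c_l$ is the closer of the adjacent pair to $\mu$, which combined with $\hat{c}\in\{c_l,c_r\}$ gives the conclusion; and (ii) your count of ``at most two additional distance queries'' for stage~2 is inconsistent with the three or four you later describe (the correct bound, as the paper notes, is at most four per call). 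Neither affects the correctness of the argument.
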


\begin{proof}
We note that if $|\C[c, c']| \geq 4$, the first ten steps of Algorithm~\ref{alg:partitioning} are identical to the first ten steps of Algorithm~\ref{alg:dist-cand} (i.e., steps 5 to 14, applied to this case). Therefore, by the proof of Lemma~\ref{l:most-distant}, when Algorithm~\ref{alg:partitioning} reaches step~10, either $c_l$ or $c_r$ is the candidate $\hat{c} \in \C[c, c']$ with largest distance to $\{c, c'\}$. Then, by the proof of Lemma~\ref{l:most-distant}, if $d(c, c_l) \geq d(c_r, c')$, $\hat{c}$ is $c_l$, otherwise, $\hat{c}$ is $c_l$. In both cases, $\hat{c}$ is the candidate in $\C[c,c']$ closest to the midpoint $\mu = (c+c')/2$. In each case (i.e., either if $\hat{c} = c_l$, where steps 13 - 16 are executed, or if $\hat{c} = c_r$, where steps 19 - 22 are executed), Algorithm~\ref{alg:partitioning} distinguishes two subcases depending on whether $\hat{c}$ is on the left or on the right of $\mu$. 

In case where $d(c, c_l) \geq d(c_r, c')$ and $\hat{c} = c_l$, if $d(c, c_l) > d(c_l, c')$, $c_l$ is on the right of the midpoint $\mu$. Then, $c_l$ is in fact $c_r$ (i.e., the leftmost candidate on the right of $\mu$; so the value of the algorithm's variable $c_r$ is set to $c_l$ in step~14), and $c_l$ (i.e., the rightmost candidate on the left of $\mu$) is the first candidate on the left of $c_r$ on the candidate axis (step~15). Otherwise (i.e., if $d(c, c_l) \leq d(c_l, c')$), since $\hat{c} = c_l$ and $c_l$ and $c_r$ are consecutive on the candidate axis, $c_l$ is indeed the rightmost candidate on the left of $\mu$ and $c_r$ is the leftmost candidate on the right of $\mu$ (so the values of the corresponding algorithm's variables are set correctly). 

In case where $d(c, c_l) < d(c_r, c')$ and $\hat{c} = c_r$, if $d(c, c_r) < d(c_r, c')$, $c_r$ is on the left of the midpoint $\mu$. Then, $c_r$ is in fact $c_l$ (i.e., the rightmost candidate on the left of $\mu$; so the value of the algorithm's variable $c_l$ is set to $c_r$ in step~20), and $c_r$ (i.e., the leftmost candidate on the right of $\mu$) is the first candidate on the right of $c_l$ on the candidate axis (step~15). Otherwise (i.e., if $d(c, c_r) \geq d(c_r, c')$), since $\hat{c} = c_r$ and $c_l$ and $c_r$ are consecutive on the candidate axis, $c_r$ is indeed the leftmost candidate on the right of $\mu$ and $c_l$ is the rightmost candidate on the left of $\mu$ (so the values of the corresponding algorithm's variables are set correctly). 

Therefore, when Algorithm~\ref{alg:partitioning} reaches step~23, the value of the variable $c_l$ corresponds to the rightmost candidate on the left of the midpoint $(c+c')/2$ and the value of the variable $c_r$ corresponds to the leftmost candidate on the right of the midpoint $(c+c')/2$ of the interval $\C[c, c']$. 
\qed\end{proof}

The numbers of voters $n_l$ and $n_r$ associated with the two subintervals $\C[c,c_l]$ and $\C[c_r, c']$ are correctly computed in steps 24 and 25 using information from the voters' rankings profile $\vec{\succ}$. As for the lengths $d(c, c_l)$ and $d(c_r, c')$ of the two subintervals $\C[c, c_l]$ and $\C[c_r, c']$, if either the steps 14-15 or the steps 20-21 are executed, we need an additional distance query to get them right. If the steps 14-15 are executed, the distance $d(c_r, c')$ is equal to the distance $d(c_l, c')$ in step~13, and we need an additional query for the distance $d(c, c_l)$. If the steps 20-21 are executed, the distance $d(c, c_l)$ is equal to the distance $d(c, c_r)$ in step~19, and we need an additional query for $d(c_r, c')$. 

In all cases, Algorithm~\ref{alg:partitioning} correctly partitions $\C[c, c']$ into $\C[c, c_l]$ and $\C[c_r, c']$, where $c_l$ (resp. $c_r$) is the rightmost (resp. leftmost) candidate on the left (resp. right) of $\C[c, c']$'s midpoint $\mu = (c+c')/2$ and correctly determines $n_l$, $d(c, c_l)$, $n_r$ and $d(c_r, c')$ with at most $4$ distance queries.


\section{Directions for Further Research}
\label{s:conclusions}

Our work opens several interesting directions for further research. 
First, in the proof of Theorem~\ref{thm:coreset}, dependence on $\log n$ seems necessary in order to obtain enough information about the locations of the optimal candidates. Hence, we conjecture a lower bound of $\Omega(k\log n)$ on the number of distance queries required for constant distortion. 

Our results crucially exploit the linear structure of the instance. It would be interesting if bounded distortion can be achieved with a reasonable number of distance queries for the case where the voters and the candidates are embedded in $\reals^d$ and the voters provide a ranking of the candidates in each dimension (since otherwise it is hard to recognize multidimensional Euclidean preferences \cite{Peters17}). 

For general metric spaces, it would be interesting if constant distortion can be achieved with $O(k\log n)$ queries for \emph{perturbation-stable} instances (e.g., \cite{MM21_stability}), where the different clusters of voters are somewhat easier to identify (see also \cite{FotakisP21} for  applications of perturbation stability to mechanism design for $k$-facility location). 


\medskip\medskip\medskip\medskip\medskip\medskip
\appendix
\parbox{\textwidth}{\centering{\LARGE\bf Appendix}}

\medskip
\section{Assuming Non-Degenerate Profiles}
\label{s:app:top-choice}

We next justify why assuming non-degenerate ranking profiles $\vec{\succ}$ is without loss of generality. Using the polynomial-time algorithm of \cite{EF14}, we can deduce from a linear ranking profile $\vec{\succ}$ the linear ordering of the set of candidates $\hat{\C} = \C[c_l, c_r]$ whose location is between the locations of the leftmost active candidate $c_l$ and the rightmost active candidate $c_r$. The next proposition indicates that the set $\hat{\C}$ includes the most interesting candidates to elect, in the sense that electing a candidate that is not in $\hat{\C}$ is not beneficial for the social cost. 

\begin{proposition}\label{prop:tops-only} 
For every $k$-committee $S$ such that $S \not\subseteq \hat{\C}$, there exists a $k$-committee $S' \subseteq \hat{\C}$ with $\Util(S') \le \Util(S)$.  
\end{proposition}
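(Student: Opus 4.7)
The plan is to reduce $S$ to $S' \subseteq \hat{\C}$ by iteratively replacing candidates outside $\hat{\C}$ with candidates inside it, showing that each replacement does not increase the social cost. By the symmetry of left/right, it suffices to handle candidates $c \in S$ with $c < c_l$.

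The key technical step will be the following lemma: for any voter $v$ and any candidate $c \in \C$ with $c < c_l$, we have $|v - c_l| < |v - c|$. To prove it, I would first note that since $c_l$ is the leftmost active candidate, every $c < c_l$ in $\C$ is inactive, so $\top(v) \neq c$ and, because preferences are strict, $|v - \top(v)| < |v - c|$ for every voter $v$. Moreover, $\top(v)$ is active, hence $\top(v) \geq c_l$. Suppose for contradiction that $|v - c_l| \geq |v - c|$; strictness rules out equality, so $|v - c_l| > |v - c|$, forcing $v < (c+c_l)/2 < c_l$. I would then split into the cases $v \leq c$ and $c < v < (c+c_l)/2$: in the first case $|v - \top(v)| \geq c_l - v > c - v = |v-c|$, a contradiction; in the second, $|v - \top(v)| \geq c_l - v > v - c = |v - c|$, again a contradiction. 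Hence $|v - c_l| < |v - c|$.

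Given the lemma, I would carry out the iterative replacement as follows. Pick some $c \in S \setminus \hat{\C}$, say $c < c_l$. If $c_l \notin S$, define $S' = (S \setminus \{c\}) \cup \{c_l\}$: for any voter $v$ whose nearest member in $S$ was $c$, the lemma gives $\cost_v(S') \leq |v - c_l| < |v - c| = \cost_v(S)$; for any voter whose nearest was some $c' \neq c$, we have $c' \in S'$, so $\cost_v(S') \leq \cost_v(S)$. Summing, $\Util(S') \leq \Util(S)$. If instead $c_l \in S$, then by the lemma $c$ is strictly dominated by $c_l$ for every voter, so $c$ is nobody's nearest member of $S$ and $\Util(S \setminus \{c\}) = \Util(S)$; I then add any $c^\star \in \hat{\C} \setminus S$ (which exists since $|\hat{\C}| \geq k$ while $|S \cap \hat{\C}| \leq k-1$), and adding a candidate can only decrease each voter's cost. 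Repeating this replacement strictly reduces $|S \setminus \hat{\C}|$ at each step, terminating with $S' \subseteq \hat{\C}$ and $\Util(S') \leq \Util(S)$.

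The main obstacle I expect is the clean case analysis in the key lemma, where one must use both the inactivity of $c$ (to rule out $\top(v) = c$) and the location bound $\top(v) \geq c_l$ to derive the contradiction; the outer iteration is routine once the lemma is in hand.
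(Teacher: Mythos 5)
Your proof is correct and takes essentially the same approach as the paper: iteratively replace each candidate of $S$ outside $\hat{\C}$ with $c_l$ (or $c_r$), or discard it and add a fresh candidate from $\hat{\C}$, using the fact that $c_l$ dominates every candidate to its left for every voter. The paper asserts this dominance in one line from the observation that every voter's top candidate lies at or to the right of $c_l$, whereas you spell it out with a careful case analysis on the voter's position; otherwise the arguments coincide.
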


\begin{proof}
Consider a committee $S$ with a candidate $c \not\in \hat{\C}$. We first assume without loss of generality that $c$ is on the left of the leftmost active candidate $c_l$ (the case where $c$ is on the right of the rightmost active candidate $c_r$ is symmetric). No voter prefers $c$ over $c_l$, because by the definition of $c_l$, the top choice of that voter is either $c_l$ or a candidate on the right of $c_l$. If $c_l \in S$, then no voter is assigned to $c$, and $S\setminus \{c\}$ has social cost $\Util(S\setminus \{c\}) = \Util(S)$. We may add any candidate of $\hat{\C} \setminus S$ to $S \setminus \{c\}$, so that we obtain a $k$-committee, without increasing the social cost. If $c_l \not\in S$, then $(S \setminus \{ c \}) \cup \{ c_l \}$ is a $k$-committee with social cost at most $\Util(S)$. We can apply the argument above repeatedly, until we obtain a $k$-committee $S' \subseteq \hat{\C}$ with $\Util(S') \le \Util(S)$. 
\qed\end{proof}

We note that the equivalent of Proposition~\ref{prop:tops-only} for the egalitarian cost is also true and can be proven similarly. 

\section{Low Distortion via Good Subsets of Candidates for the Egalitarian Cost}
\label{s:app:good-egalitarian}


We next show that a $\beta$-approximate $k$-committee $S$ wrt. the egalitarian cost for the candidate-restricted instance $\C_{\cres}$ induced by the original set of candidates $\C$ (i.e., $\C_{\cres}$ is the modified instance where each voter is collocated with her top candidate and all inactive candidates are removed) achieves a distortion of at most $1+2\beta$ wrt. the egalitarian cost for the original instance. 
We note that in the proof of Theorem~\ref{thm:factor3_egal}, the optimal solution $S^\sharp$ for $\C_{\cres}$ wrt. the egalitarian cost ($S^\sharp$ is used as benchmark for the definition of the $\beta$-approximation ratio of $S$) may include inactive candidates from $\C$ (that are not present in $\C_{\cres}$).

\begin{theorem}\label{thm:factor3_egal}
Let $(\C, \V)$ be an instance of the $k$-Committee Election, let $S \subseteq \C$ (resp. $S^\ast \subseteq \C$) be a $\beta$-approximate (resp. an optimal) $k$-committee wrt. the egalitarian cost for the candidate-restricted instance $\C_{\cres}$ (resp. original instance). Then, $\Egal(S) \leq (1+2\beta)\Egal(S^\ast)$.
\end{theorem}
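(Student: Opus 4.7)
The plan is to mirror the proof of Theorem~\ref{thm:factor3}, but to replace the additive triangle-inequality argument used for the social cost with its $\max$-variant for the egalitarian cost, and to invoke the egalitarian analog of Proposition~\ref{pr:active}. Throughout, let $\Egal_{\cres}(T)$ denote the egalitarian cost of a committee $T$ evaluated on the candidate-restricted instance $\C_{\cres}$, in which every voter sits at her top candidate. The first step is the voter-wise triangle inequality $d(v, S) \le d(v, \top(v)) + d(\top(v), S)$; taking the maximum over $v\in \V$ and using $\max_v (a_v+b_v)\le \max_v a_v + \max_v b_v$ gives $\Egal(S) \le \max_v d(v,\top(v)) + \Egal_{\cres}(S)$. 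Because $\top(v)$ is $v$'s closest candidate in $\C$, $d(v,\top(v)) \le d(v,S^\ast) \le \Egal(S^\ast)$, so the first term is already bounded by $\Egal(S^\ast)$.

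The remaining task is to show that $\Egal_{\cres}(S) \le 2\beta\,\Egal(S^\ast)$. The $\beta$-approximation assumption gives $\Egal_{\cres}(S) \le \beta\,\Egal_{\cres}(S^\sharp)$, where $S^\sharp$ is an optimal $k$-committee for $\C_{\cres}$. To bound $\Egal_{\cres}(S^\sharp)$ by $2\,\Egal(S^\ast)$, the plan is to apply the triangle inequality a second time through the top-candidate surrogate, giving, for every voter $v$, $d(\top(v), S^\ast) \le d(\top(v), v) + d(v, S^\ast) \le 2\,\Egal(S^\ast)$, which yields $\Egal_{\cres}(S^\ast) \le 2\,\Egal(S^\ast)$. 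Then, appealing to the egalitarian analog of Proposition~\ref{pr:active}, any committee in $\C$ can be transformed into one consisting only of active candidates without increasing its egalitarian cost on $\C_{\cres}$, so $\Egal_{\cres}(S^\sharp) \le \Egal_{\cres}(S^\ast) \le 2\,\Egal(S^\ast)$.

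Combining the two bounds yields $\Egal(S) \le \Egal(S^\ast) + 2\beta\,\Egal(S^\ast) = (1+2\beta)\,\Egal(S^\ast)$, as required. The main obstacle is verifying the egalitarian analog of Proposition~\ref{pr:active}, namely that an optimal $k$-committee for the candidate-restricted instance can be assumed to consist of active candidates only; this should follow from a replacement argument exploiting the fact that in $\C_{\cres}$ every voter is located at an active candidate, so any inactive committee member can be slid to a neighbouring active candidate without widening the maximum voter-to-committee distance in $\C_{\cres}$. The remaining inequalities are single-line triangle-inequality applications, essentially parallel to the corresponding steps of Theorem~\ref{thm:factor3}.
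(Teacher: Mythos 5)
Your overall scaffolding mirrors the paper's proof closely: the two $\max$-variant triangle inequalities, the bound $\Egal(\C) \leq \Egal(S^\ast)$ via the top-candidate property, and the final arithmetic $\Egal(S) \leq \Egal(S^\ast) + 2\beta\,\Egal(S^\ast)$ are all correct and essentially identical to what the paper does. However, the step you single out as ``the main obstacle'' is where the proof breaks: the egalitarian analog of Proposition~\ref{pr:active}, as you state it, is \emph{false}. Sliding an inactive committee member to a neighbouring active candidate can \emph{double} the maximum voter-to-committee distance in $\C_{\cres}$. Concretely, take $\C = \{c_1, c_2, c_3\}$ located at $0, 5, 10$, with one voter at $0$ and one at $10$, so only $c_1$ and $c_3$ are active, and let $k = 1$. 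The committee $\{c_2\}$ has egalitarian cost $5$ on $\C_{\cres}$, whereas the best active committee ($\{c_1\}$ or $\{c_3\}$) has egalitarian cost $10$. So an optimal $k$-committee for $\C_{\cres}$ cannot, in general, be assumed to consist of active candidates only, and your chain $\Egal_{\cres}(S^\sharp) \leq \Egal_{\cres}(S^\ast)$ does not follow by the replacement argument if $S^\sharp$ is taken to be restricted to active candidates.

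The paper sidesteps this entirely by \emph{defining} $S^\sharp$ to be the unrestricted optimal for $\C_{\cres}$: the benchmark against which the $\beta$-approximation of $S$ is measured ranges over all $k$-committees in $\C$, inactive candidates included. (This is explicitly flagged in the paragraph preceding Theorem~\ref{thm:factor3_egal}, and it is the reason the $2$-approximation of greedy $k$-center — which holds against arbitrary optimal centers — can be plugged in.) With that definition, $\Egal(\C_{\cres}, S^\sharp) \leq \Egal(\C_{\cres}, S^\ast)$ is immediate: $S^\ast$ is just one of the $k$-committees over which $S^\sharp$ is optimal, no replacement argument needed. Once you make that one correction — drop the appeal to the false active-candidate replacement and instead observe that $S^\sharp$ minimizes $\Egal_{\cres}(\cdot)$ over \emph{all} $k$-committees in $\C$ — the rest of your argument goes through verbatim and matches the paper's.
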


\begin{proof}
We recall that for each voter $v \in \V$, $\top(v)$ is $v$'s top candidate in $\C$. Then, by the triangle inequality, $d(v, S) \leq d(v, \top(v)) + d(\top(v), S)$. Taking the maximum over all voters $v \in \V$, we obtain that:
\begin{align}
  \Egal(S) & \leq \Egal(\C) + \Egal(\C_{\cres}, S)\,,\ \ \text{where} \label{eq:beta_good_egal1} \\
\Egal(\C) & = \max_{v\in \V} \{ d(v, \top(v)) \} = \max_{v\in \V} \{ d(v, \C)\}\ \ \text{and} \notag\\
\Egal(\C_{\cres}, S) & = \max_{v \in \V} \{ d(\top(v), S)\}\,. \notag
\end{align}
$\Egal(\C_{\cres}, S)$ is the egalitarian cost of $S$ for the candidate-restricted instance $\C_{\cres}$ induced by $\C$. We observe that 
\[  \Egal(\C_{\cres}, S) \leq \beta\,\Egal(\C_{\cres}, S^\sharp)
                         \leq \beta\,\Egal(\C_{\cres}, S^\ast)\,,\]
because $S$ is a $\beta$-approximate $k$-committee for $\C_{\cres}$. For the second inequality, we use that the $\beta$-approximation ratio of $S$ in the candidate-restricted instance $\C_{\cres}$ is established against an (unrestricted) optimal solution $S^\sharp$ that may also include inactive candidates from $\C$ not included in $\C_{\cres}$. Hence, $S^\ast$ is also a feasible alternative to $S^\sharp$ as an (unrestricted) optimal solution for $\C_{\cres}$. Therefore, $\Egal(\C_{\cres}, S^\sharp) \leq \Egal(\C_{\cres}, S^\ast)$, because $S^\sharp$ is an optimal solution for $\C_{\cres}$ wrt. the egalitarian cost, and the $\beta$-approximation ratio of $S$ in $\C_{\cres}$ also holds against $S^\ast$. 

Moreover, since $d(\top(v), S^\ast) \leq d(\top(v), v) + d(v, S^\ast)$, by the triangle inequality, we take the maximum over all voters $v$ and obtain that $\Egal(\C_{\cres}, S^\ast) \leq \Egal(\C)+\Egal(S^\ast)$. 

Combined with the observations above, \eqref{eq:beta_good_egal1} implies that:
\[ \Egal(S) \leq (1+\beta)\Egal(\C) + \beta\,\Egal(S^\ast) \leq (1+2\beta)\Egal(S^\ast)\,,\]
where we use that $\Egal(\C_{\cres}, S) \leq \beta\,\Egal(\C_{\cres}, S^\ast)$.
\qed\end{proof}


In the proof of Theorem~\ref{thm:k-queries}, we can use Theorem~\ref{thm:factor3_egal} and obtain a distortion of $5$ for the egalitarian cost (and a distortion of $5n$ for the social cost), because the $2$-approximation ratio of Algorithm~\ref{alg:k-center} (see e.g., \cite[Theorem~2.3]{WS10}) holds against an optimal solution that may also include ``inactive candidates'' (i.e., optimal candidates associated with any demand points). 

For completeness, we next show the equivalent of Theorem~\ref{thm:factor3} for the egalitarian cost (even though we do not use it anywhere). We say that a set $\C' \subseteq \C$ is an $(\ell, \beta)$-good set wrt. the egalitarian cost, if $|\C'| = \ell$ and $\Egal(\C') \leq \beta\,\Egal(S^\ast)$, where $S^\ast \subseteq \C$ is a $k$-committee with optimal egalitarian cost for the original instance. 
The following shows that computing an optimal $k$-committee $S \subseteq \C'$ for the candidate-restricted $\C'_{\cres}$ instance induced by an $(\ell, \beta)$-good set $\C'$ wrt. the egalitarian cost implies a distortion of $2+3\beta$ wrt. the egalitarian cost for the original instance. 

The crucial difference between Theorem~\ref{thm:factor3_egal} and Theorem~\ref{thm:factor5_egal} is that in Theorem~\ref{thm:factor3_egal}, the $\beta$-ap\-prox\-i\-ma\-tion
of $S$ in the candidate-restricted instance $\C_{\cres}$ holds against an optimal $k$-committee $S^\sharp$ that can include both active and inactive candidates from $\C$ (even though inactive candidates are not present in $\C_{\cres}$); while in  Theorem~\ref{thm:factor5_egal}, the optimality of $S$ for the candidate-restricted instance $\C'_{\cres}$ holds against restricted $k$-committees that can include only active candidates present in $\C'_{\cres}$.  

\begin{theorem}\label{thm:factor5_egal}
Let $(\C, \V)$ be an instance of the $k$-Committee Election, let $\C' \subseteq \C$ be an $(\ell, \beta)$-good set of candidates wrt. the egalitarian cost, let $\C'_{\cres}$ be the candidate-restricted instance induced by $\C'$ and let $S$ (resp. $S^\ast$) be an optimal $k$-committee for $\C'_{\cres}$ (resp. for $(\C, \V)$) wrt. the egalitarian cost. Then, $\Egal(S) \leq (2+3\beta)\Egal(S^\ast)$.
\end{theorem}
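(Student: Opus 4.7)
The plan is to mimic the structure of the proof of Theorem~\ref{thm:factor3_egal}, but to carefully route the comparison through an intermediate committee $T \subseteq \C'$ built from $S^\ast$. The subtlety flagged in the remark right before the theorem is that the optimality of $S$ in $\C'_{\cres}$ only provides leverage against competing committees \emph{contained in} $\C'$, so we cannot directly plug in $S^\ast$ as a benchmark inside $\C'_{\cres}$.

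First I would set up notation. For each voter $v\in\V$, let $\top'(v)\in\C'$ denote $v$'s top candidate in $\C'$, and let $c^\ast_v\in S^\ast$ denote $v$'s nearest candidate in $S^\ast$. For every candidate $c^\ast\in S^\ast$, let $\pi(c^\ast)\in\C'$ be a closest candidate to $c^\ast$ in $\C'$. Define the benchmark committee $T=\{\pi(c^\ast):c^\ast\in S^\ast\}\subseteq\C'$, so that $|T|\le k$. Then, as in the proof of Theorem~\ref{thm:factor3_egal}, the triangle inequality and a maximum over $v$ give
\begin{equation}\label{eq:egal_split}
\Egal(S)\;\le\;\Egal(\C')+\Egal(\C'_{\cres},S),
\end{equation}
where $\Egal(\C'_{\cres},S)=\max_{v\in\V}d(\top'(v),S)$ is the egalitarian cost of $S$ in the candidate-restricted instance.

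The key step is to bound $\Egal(\C'_{\cres},S)$ using the optimality of $S$ \emph{within} $\C'$. Since $T\subseteq\C'$ with $|T|\le k$, optimality of $S$ for $\C'_{\cres}$ yields
\[
\Egal(\C'_{\cres},S)\;\le\;\Egal(\C'_{\cres},T)\;=\;\max_{v\in\V}d(\top'(v),T).
\]
For any voter $v$, using $\pi(c^\ast_v)\in T$ and two applications of the triangle inequality through the points $v$ and $c^\ast_v$,
\[
d(\top'(v),T)\;\le\;d(\top'(v),\pi(c^\ast_v))\;\le\;d(\top'(v),v)+d(v,c^\ast_v)+d(c^\ast_v,\pi(c^\ast_v)).
\]
The main obstacle (and the only place where the proof differs from Theorem~\ref{thm:factor3_egal}) is bounding the last term: $\pi(c^\ast_v)$ is a candidate-to-candidate projection, while the $(\ell,\beta)$-goodness of $\C'$ only controls voter-to-candidate distances. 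I would resolve this by using $\top'(v)\in\C'$ as a witness: since $\pi(c^\ast_v)$ is the closest point of $\C'$ to $c^\ast_v$,
\[
d(c^\ast_v,\pi(c^\ast_v))\;\le\;d(c^\ast_v,\top'(v))\;\le\;d(c^\ast_v,v)+d(v,\top'(v)).
\]

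Combining the two displays,
\[
d(\top'(v),T)\;\le\;2\,d(\top'(v),v)+2\,d(v,c^\ast_v)\;\le\;2\Egal(\C')+2\Egal(S^\ast)\;\le\;2(1+\beta)\Egal(S^\ast),
\]
using $\Egal(\C')\le\beta\,\Egal(S^\ast)$ by the $(\ell,\beta)$-goodness of $\C'$. Plugging this into \eqref{eq:egal_split} together with the same bound on $\Egal(\C')$ gives
\[
\Egal(S)\;\le\;\beta\,\Egal(S^\ast)+2(1+\beta)\Egal(S^\ast)\;=\;(2+3\beta)\Egal(S^\ast),
\]
which is the desired inequality.
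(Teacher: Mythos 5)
Your proof is correct, and it is in fact more carefully argued than the paper's. The high-level structure is the same: split $\Egal(S)\le \Egal(\C')+\Egal(\C'_{\cres},S)$ and then control the second term. The difference is in how the second term is handled. The paper asserts, ``by the triangle inequality,'' that
\[
d(c_i,\top_S(c_i))\;\le\;d(c_i,S^\ast)+d(S^\ast,\top_S(c_i))
\]
for every $c_i\in\C'$, and then concludes $\Egal(\C'_{\cres},S)\le 2\,\Egal(\C'_{\cres},S^\ast)$. As written this does not follow from a single triangle inequality: the two occurrences of $S^\ast$ implicitly refer to \emph{different} nearest points (the one nearest $c_i$ and the one nearest $\top_S(c_i)$), so the chain $d(c_i,\top_S(c_i))\le d(c_i,c^\ast)+d(c^\ast,\top_S(c_i))$ cannot be simultaneously bounded term-by-term by $d(c_i,S^\ast)$ and $d(S^\ast,\top_S(c_i))$. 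Moreover, the paper's derivation of this step never invokes the optimality of $S$ over committees inside $\C'$, which is clearly necessary (the bound would fail if $S$ were an arbitrary subcommittee of $\C'$). Your version supplies exactly what is missing: you introduce the projected committee $T=\{\pi(c^\ast):c^\ast\in S^\ast\}\subseteq\C'$, use optimality of $S$ within $\C'_{\cres}$ to get $\Egal(\C'_{\cres},S)\le\Egal(\C'_{\cres},T)$, and then bound $d(\top'(v),T)$ by a careful triangle chain, with the crucial observation that $d(c^\ast_v,\pi(c^\ast_v))\le d(c^\ast_v,\top'(v))$ because $\top'(v)\in\C'$ witnesses that $\pi(c^\ast_v)$ is a closest element of $\C'$. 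Your arithmetic then yields exactly the same $3\Egal(\C')+2\Egal(S^\ast)\le(2+3\beta)\Egal(S^\ast)$ bound. In short, you took the same decomposition the paper uses, identified the gap in the paper's intermediate inequality, and repaired it with the explicit projection argument — this is the right way to prove the claim.
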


\begin{proof}
The proof is quite similar to the proof of Theorem~\ref{thm:factor3_egal}. 
For each voter $v \in \V$ (with her location $v$ as in the original instance), we let $\top'(v) \in \C'$ be $v$'s top candidate in $\C'$. By the triangle inequality, $d(v, S) \leq d(v, \top'(v)) + d(\top'(v), S)$. Taking the maximum over all voters $v \in \V$, we obtain that:
\begin{align}
  \Egal(S) & \leq \Egal(\C') + \Egal(\C'_{\cres}, S)\,,\ \ \text{where} \label{eq:beta_good_egal2} \\
\Egal(\C') & = \max_{v\in \V} \{ d(v, \top'(v)) \} 
             = \max_{v\in \V} \{ d(v, \C')\}\ \ \text{and} \notag\\
\Egal(\C'_{\cres}, S) & = \max_{v \in \V} \{ d(\top'(v), S)\}
                        = \max_{i\in[\ell]} \{ d(c_i, S) \}\,, \notag
\end{align}
where $\Egal(\C'_{\cres}, S)$ is the egalitarian cost of $S$ for the candidate-restricted instance $\C'_{\cres}$ induced by $\C'$. 

For each candidate $c_i \in \C'$, let $\top_S(c_i) \in S \subseteq \C'$ be the nearest candidate of $c_i$ in $S$. Then, by the triangle inequality, 
\[ d(c_i, \top_S(c_i)) \leq d(c_i, S^\ast) + d(S^\ast, \top_S(c_i))\,. \]
Taking the maximum over all $c_i \in \C'$, and since $\top_S(c_i) \in \C'$, we obtain that
\begin{equation}\label{eq:beta_good_egal3}
 \Egal(\C'_{\cres}, S) \leq 2\,\Egal(\C'_{\cres}, S^\ast)
\end{equation}

Since $d(\top'(v), S^\ast) \leq d(\top'(v), v) + d(v, S^\ast)$, by the triangle inequality, 
$\Egal(\C'_{\cres}, S^\ast) \leq \Egal(\C')+\Egal(S^\ast)$. 

Combined with the observations above, \eqref{eq:beta_good_egal2} implies that:
\[ \Egal(S) \leq 3\,\Egal(\C') + 2\,\Egal(S^\ast) \leq (2+3\beta)\Egal(S^\ast)\,,\]
where we first use \eqref{eq:beta_good_egal3} and then use the hypothesis that $\C'$ is an $(\ell, \beta)$-good set of candidates wrt. the egalitarian cost.
\qed\end{proof}

\section{Ignoring Inactive Candidates}
\label{s:app:active-candidates}


The following shows that in instances where \emph{every voter is collocated with her top candidate}, we can ignore inactive candidates, i.e., candidates $c$ with $\Cl(c) = \emptyset$ in the rankings profile $\vec{\succ}$, without increasing the social cost of $k$-committees. Hence, when we deal with the distortion wrt. the social cost and compute the optimal $k$-committee of a candidate-restricted instance induced by an $(\ell, \beta)$-good set of candidates, we do not need to consider inactive candidates (see Section~\ref{s:good} for the definition of $(\ell, \beta)$-good sets). Then, when we state and prove the distortion bounds in Theorem~\ref{thm:k-queries} and Theorem~\ref{thm:factor3}, we make sure that they hold against an optimal solution for the original instance, where some candidates may be inactive and candidate and voter locations may be different. 


\begin{proposition}\label{pr:active}
Let $\C$ be the set of all candidates and let $\tilde{\C} \subset \C$ be the set of active candidates. In instances where every voter is collocated with her top candidate, for every committee $S$ that includes inactive candidates, 
%
%
there exists another committee $S' \subseteq \tilde{\C}$ with $\Util(S') \le \Util(S)$.  
\end{proposition}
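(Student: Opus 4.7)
I plan to prove the statement by induction on the number of inactive candidates in $S$, showing that any single inactive $c \in S$ can be replaced by an active candidate not in $S$ without increasing the social cost. Iterating the swap until no inactive candidate remains yields the desired $S' \subseteq \tilde{\C}$ with $\Util(S') \leq \Util(S)$.

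Fix an inactive $c \in S$ and let $A = \{ v \in \V : d(v,c) = d(v,S) \}$ be the voters assigned to $c$ in $S$ (with ties broken arbitrarily). Because every voter is collocated with her top candidate, each $v \in A$ lies at the location of some active candidate $\top(v) \in \tilde{\C}$. A key observation is that $\top(v) \notin S$ for every $v \in A$: otherwise $v$ would have cost $0$ for $S$ and could not be assigned to the inactive $c$, noting that distinct candidates have distinct locations (since the rankings are strict total orders), so $d(v,c) > 0$. Hence $\{ \top(v) : v \in A \} \subseteq \tilde{\C} \setminus S$.

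Next I would take $c^{*}$ to be an active candidate located at a median of the voter positions in $A$ (with multiplicities). On the real line, the sum $\sum_{v \in A} d(v, x)$ is minimized at any median, and since every position in $A$ coincides with some $\top(v)$ for $v \in A$, I can choose $c^{*} \in \{\top(v) : v \in A\} \subseteq \tilde{\C} \setminus S$. Setting $S' = (S \setminus \{c\}) \cup \{c^{*}\}$ gives $|S'| = k$. For $v \notin A$, $d(v, S') \leq d(v, S \setminus \{c\}) = d(v, S)$ because such $v$ was not served by $c$. For $v \in A$, $d(v, S') \leq d(v, c^{*})$, and the median property gives $\sum_{v \in A} d(v, c^{*}) \leq \sum_{v \in A} d(v, c) = \sum_{v \in A} d(v, S)$. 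Summing over all voters yields $\Util(S') \leq \Util(S)$. The degenerate case $A = \emptyset$ is even easier: $c$ is redundant in $S$, and we can swap it with any element of $\tilde{\C} \setminus S$ (nonempty whenever $|\tilde{\C}| \geq k$) without affecting the cost.

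I expect the main subtlety to be verifying that $c^{*} \notin S$, so that the swap is a genuine replacement that preserves $|S'| = k$. This is precisely where the hypothesis that voters are collocated with their top candidates is essential: it forces every voter assigned to an inactive $c$ to have its (active) top choice outside $S$, providing a safe pool of active candidates to swap in. Once this is secured, the rest of the argument reduces to combining the median optimality on the real line with the triangle-like decomposition of the cost over $A$ and its complement.
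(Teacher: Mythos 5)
Your proof is correct and follows the same high-level strategy as the paper: iteratively swap out each inactive candidate $c \in S$ for an active candidate collocated with one of the voters assigned to $c$, verifying $\Util(S') \le \Util(S)$ at each step, and handle the $A = \emptyset$ case by removing $c$ and inserting an arbitrary member of $\tilde{\C} \setminus S$. The local argument differs slightly. The paper splits the voters assigned to $c$ into $\V_{\text{left}}$ and $\V_{\text{right}}$ and, depending on which side has the majority, replaces $c$ by the candidate collocated with the rightmost voter of $\V_{\text{left}}$ or the leftmost voter of $\V_{\text{right}}$; a direct counting argument then shows that shifting the facility by that amount toward the majority side cannot increase the total cost. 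You instead take the candidate $c^{*}$ at a median of the assigned voters' positions and invoke $L_1$-median optimality on the line, which gives the cleaner and slightly stronger statement that $c^{*}$ is optimal among all placements for the cluster $A$; the paper's choice need not be a median when the majority side dominates heavily, but its counting argument still suffices for the inequality. One place where your write-up is more careful than the paper's: you explicitly verify that $c^{*} \notin S$, via the observation that any voter assigned to the inactive $c$ has $d(v,c) > 0$ while $d(v, \top(v)) = 0$, so $\top(v) \in S$ would contradict $v$ being assigned to $c$. The paper implicitly relies on the same fact when it asserts that $S' = (S \setminus \{c\}) \cup \{c'\}$ is again a $k$-committee, but never spells it out. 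Both proofs share the benign edge-case assumption that $\tilde{\C} \setminus S$ is nonempty (i.e., $|\tilde{\C}| \geq k$), which you at least acknowledge.
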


\begin{proof}
Let $S$ be a $k$-committee that includes an inactive candidate $c \not\in \tilde{\C}$. If no voter is assigned to $c$, we can remove $c$ from $S$. We get a new committee with $k-1$ candidates, less inactive candidates than $S$, and $\Util(S\setminus \{c\}) \leq \Util(S)$. We may add any candidate of $\tilde{\C} \setminus S$ to $S \setminus \{c\}$, so that we obtain a $k$-committee, without increasing the social cost.

Otherwise, let $\V_{\text{left}}$ (resp. $\V_{\text{right}}$) be the set of voters on the left (resp. right) of $c$ that are assigned to $c$ under the $k$-committee $S$. By hypothesis, every voter in $\V_{\text{left}} \cup \V_{\text{right}}$ is collocated with a candidate. 
If $|\V_{\text{left}}|>|\V_{\text{right}}|$, we replace $c$ in $S$ with the candidate $c'$ collocated with the rightmost voter of $\V_{\text{left}}$. Otherwise, we replace $c$ in $S$ with the candidate $c'$ collocated with the leftmost voter of $\V_{\text{right}}$. In both cases, we obtain a new $k$-committee $S' = (S \setminus \{c \}) \cup \{ c'\}$ with less inactive candidates than $S$ and $\Util(S') \leq \Util(S)$.
We can apply the argument above repeatedly, until we obtain a $k$-committee $S' \subseteq \tilde{\C}$ with $\Util(S') \le \Util(S)$. 
\qed\end{proof}

\section{Distortion without Distance Queries}
\label{s:ordinal}

In this section, we consider deterministic rules that do not use distance queries. The results of \cite{CSV22} imply that for $k = 2$, the best possible distortion is $\Theta(n)$, where their lower bound holds for $1$-dimensional instances and their upper bound holds for general metric spaces. The upper bound is achieved by a deterministic rule that when applied to our setting with $1$-Euclidean preferences boils down to selecting the leftmost and the rightmost active candidates. Moreover, \cite{CSV22} proved that the distortion of any rule for $k$-committee election is unbounded for any $k \geq 3$. 

As a warmup and for sake of completeness, we present simple proofs of the above bounds for the special case of $1$-Euclidean preferences. We note that similar results have been obtained by \citet{AnshZ21,CSV22,Pulya22}. Our proofs carefully determine the constants involved. 

\subsection{Linear Lower Bound on the Distortion of $2$-Committee Election}

\begin{theorem}\label{thm:2fac_lower_bound}
The distortion of any deterministic rule for electing $k = 2$ (out of $m \geq 4$) candidates, in a setting where $n \geq 4$ voters have $1$-Euclidean preferences, is at least $n-1$.
\end{theorem}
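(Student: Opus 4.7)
The plan is to exhibit, for any deterministic $2$-committee rule $R$, a $1$-Euclidean ordinal profile $\vec{\succ}$ with $n$ voters and $m=4$ candidates, together with a realization $I$ consistent with $\vec{\succ}$ on which $\Util(R(\vec{\succ}))/\OPT(I) \geq n-1$. The case $m > 4$ reduces to $m=4$ by placing the extra candidates outside the span of the active ones so that they play no role in any optimal or rule-chosen committee.

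Concretely, I would place the four candidates in two tight clusters, $c_1=0,\ c_2=\epsilon,\ c_3=1,\ c_4=1+\epsilon$ for some small $\epsilon > 0$, and take $\vec{\succ}$ to be the profile in which $n-1$ voters have the preference $c_1\succ c_2\succ c_3\succ c_4$ and a single voter has the preference $c_4\succ c_3\succ c_2\succ c_1$. This profile is $1$-Euclidean, and the constraints it places on voter locations are loose: each of the $n-1$ majority voters lies somewhere in $(-\infty,\epsilon/2)$, while the singleton voter lies in $(1+\epsilon/2,+\infty)$, leaving substantial freedom for the adversary to pick positions.

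Given the rule's deterministic output $S$, the adversary picks the realization in two regimes. If $S \neq \{c_1,c_4\}$, place the $n-1$ voters at position $-\delta$ (just left of $c_1$) and the singleton voter at position $1+\epsilon+\delta$ (just right of $c_4$) for small $\delta > 0$. Then $\{c_1,c_4\}$ achieves cost $n\delta$, while $S$ misses either the leftmost cluster or the rightmost one (or covers both with a ``distant'' candidate such as $c_2$ instead of $c_1$), so $\Util(S)$ is forced to $\Omega(n \cdot \epsilon)$ or $\Omega(n)$ depending on the precise $S$; scaling $\epsilon$ relative to $\delta$ pins the ratio at $n-1$. If instead $S = \{c_1,c_4\}$, the adversary places the $n-1$ voters at the boundary $\epsilon/2-\delta$ and the singleton voter just on the wrong side of an alternative midpoint, so that an alternative committee (for example $\{c_1,c_2\}$ or $\{c_1,c_3\}$) tracks the majority cluster much more tightly and yields cost bounded by a constant while $\{c_1,c_4\}$ pays a cost of $\Theta(n\epsilon)$, again giving ratio at least $n-1$ after tuning.

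The main obstacle is that a single ordinal profile $\vec{\succ}$ must simultaneously be ``bad'' for all six possible deterministic outputs, which requires a careful case analysis and choice of the perturbation parameters $\epsilon$ and $\delta$ so that the ratio lands at $n-1$ rather than at a larger unbounded value in some cases or a mere constant in others. The factor $n-1$ arises naturally because, in the worst realization, $n-1$ majority voters each contribute a unit of distance to the rule's committee while only the lone minority voter contributes a unit to the optimum, and this counting is consistent with the matching upper bound $2(n-1)$ of Theorem~\ref{thm:2fac_upper_bound}.
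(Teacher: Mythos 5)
Your construction has a fundamental flaw that cannot be fixed by tuning $\epsilon$ and $\delta$: in your profile, $n-1$ voters all have the \emph{same} top candidate $c_1$ and the remaining voter has top candidate $c_4$. For any realization consistent with this profile, every one of the $n-1$ majority voters is closer to $c_1$ than to any other candidate, and the singleton is closer to $c_4$ than to any other candidate. Consequently, the committee $\{c_1,c_4\}$ is \emph{always} optimal --- no alternative committee can cover any voter better. Your second regime is where this breaks down concretely: with the majority at $\epsilon/2-\delta$, both $\{c_1,c_2\}$ and $\{c_1,c_3\}$ still serve each majority voter via $c_1$ at exactly the same cost as $\{c_1,c_4\}$ does, and they serve the singleton strictly worse. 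So a rule that hard-codes the output $\{c_1,c_4\}$ on your profile achieves distortion $1$, and no adversarial realization can push this above a constant. A single ordinal profile on which this particular committee is always optimal cannot establish a lower bound against \emph{all} deterministic rules.

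The paper avoids this trap by a qualitatively different profile: with $n=2t+2$ it has $t$ voters with top candidate $b$, $t$ with top candidate $c$, one with top $a$, and one with top $d$, i.e.\ the bulk of the voters are \emph{split across two distinct top candidates} rather than piled on one. The adversary then argues as follows: if the rule's committee excludes $a$, take $d(a,b)$ arbitrarily large (unbounded distortion); if the rule's committee includes $a$, it can include only one of $\{b,c,d\}$, and the adversary tunes $d(a,b)=1+\epsilon$, $d(b,c)=1$, $d(c,d)=\epsilon$ so that $\{b,c\}$ is the unique good committee, whence the omitted $t$-block pays $\Theta(1)$ each against an optimal cost of $\Theta(1)$ total, giving ratio $\Theta(t)=\Theta(n)$. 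The split of the majority between two candidates is exactly what makes the case ``rule includes $a$ but not both of $\{b,c\}$'' costly; your single-cluster majority removes this tension. If you want to salvage your approach, you would need to redesign the profile so that no committee is simultaneously near-optimal across all consistent realizations --- which, for $k=2$, essentially forces the majority to be spread over at least two distinct top candidates.
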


\begin{proof}
Wlog., we consider an instance with only $4$ candidates $a, b, c, d$, appearing in this order from left to right on the real line. We let:
\[    d(a, b) = x, \ \ d(b, c) = y\ \,\mbox{and}\, \ d(c, d) = z, \]
with $x \geq y+z$ and $y \geq z$. For simplicity, we consider $n=2t+2$ voters, with the following rankings:
\begin{itemize}
\item $t$ voters rank the candidates as $b \succ c \succ d \succ a$\,. 

\item $t$ voters rank the candidates as $c \succ d \succ b \succ a$\,. 
    
\item $1$ voter ranks the candidates as $a \succ b \succ c \succ d$\,. 
    
\item $1$ voter ranks the candidates as $d \succ c \succ b \succ a$\,.
\end{itemize}

We distinguish $6$ different cases, depending on the pair of candidates selected by the deterministic rule for $2$-committee election. Choosing the distances $x, y, z$ appropriately, we end up with a different optimal solution in each case. Taking the ratio of the rule's social cost to the optimal social cost, we lower bound the distortion in each case. The theorem follows by taking the minimum of these lower bounds. 

In the following, we consider all possible candidate pairs that can be elected by the voting rule: 
\begin{description}
\item[$\{a, b\}$ or $\{a, c\}$ or $\{a, d\}$:] Let $x=1+\epsilon$, $y=1$ and $z=\epsilon$, for some small $\epsilon > 0$. Thus, the candidates $c$ and $d$ are almost collocated. Suppose that the voter with profile $a \succ b \succ c \succ d$ is close to the middle of $[a,b]$, at $1/2$, whereas all other voters are collocated with their top choice. The optimal solution is to elect $\{b, c\}$. If $\{a, b\}$ is elected by the voting rule, then the distortion becomes $2t+3$ as $\epsilon \rightarrow 0$. If $\{a, c\}$ or $\{a, d\}$ is elected by the voting rule, then the distortion becomes $2t+1$ as $\epsilon \rightarrow 0$.  

\item[$\{b, c\}$:] Suppose that every voter is collocated with her top candidate. The rule's social cost is $x+z$. The optimal solution is to elect $\{a, c\}$ and has social cost $ty+z$. Then, $x$ can become arbitrarily large (compared against $t$, $y$ and $z$), which implies that the distortion cannot be bounded by any function of $t$ (or $n$).

\item[$\{b, d\}$ or $\{c, d\}$:] Suppose that every voter is collocated with her top candidate. The rule's social cost is at least $x+ty$. The optimal solution is to elect $\{a, c\}$ and has social cost $ty+z$. As before, $x$ can become arbitrarily large (compared against $t$, $y$ and $z$), which implies that the distortion that cannot be bounded by any function of $t$ (or $n$).
\end{description}

Hence, we get a distortion of at least $2t+1=n-1$. \qed\end{proof}

An interesting corollary of the proof of Theorem~\ref{thm:2fac_lower_bound} is that the distortion can be unbounded, unless we select the top candidate of the most distant extreme voter. Since in general we cannot identify the most distant extreme voter without resorting to distance queries, a safe choice is to elect the leftmost and the rightmost active candidates. 

\subsection{The Distortion of Electing the Two Extreme Candidates}

\begin{theorem}\label{thm:2fac_upper_bound}
For any number $n \geq 3$ of voters with $1$-Euclidean preferences, the $2$-committee rule that elects the leftmost and the rightmost active candidates achieves a distortion of at most $2n-2$. 
\end{theorem}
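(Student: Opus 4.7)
Write $S=\{c_L,c_R\}$ for the elected committee, where $c_L$ (resp.\ $c_R$) is the leftmost (resp.\ rightmost) active candidate, and let $S^\ast=\{s_1,s_2\}$ with $s_1\le s_2$ be an optimal $2$-committee. By the non-degeneracy assumption every candidate lies in $[c_L,c_R]$, so $c_L\le s_1\le s_2\le c_R$; set $\alpha:=s_1-c_L$ and $\beta:=c_R-s_2$. The plan is to split $\V$ into $\Cl(c_L)$, $\Cl(c_R)$, and the remaining ``middle'' voters $V_{\mathrm{mid}}:=\V\setminus(\Cl(c_L)\cup\Cl(c_R))$; show that voters in the first two groups are not hurt when $S^\ast$ is replaced by $S$; bound the extra cost suffered by each middle voter under $S$ by $\max(\alpha,\beta)$; and finally show $\max(\alpha,\beta)\le 2\,\OPT$ by analysing the two extreme clusters.

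For $v\in\Cl(c_L)$, the candidate $c_L$ is strictly closer to $v$ than any other candidate, so in particular $d(v,c_L)\le d(v,s)$ for every $s\in S^\ast$ and also $d(v,c_L)\le d(v,c_R)$; hence $\cost_v(S)=d(v,c_L)\le\cost_v(S^\ast)$, and symmetrically for $v\in\Cl(c_R)$. For $v\in V_{\mathrm{mid}}$, let $s_v^\ast\in S^\ast$ be $v$'s nearest optimum; the triangle inequality applied to both $c_L$ and $c_R$ yields
\[ \cost_v(S)=\min\{|v-c_L|,|v-c_R|\}\le\cost_v(S^\ast)+\min\{|s_v^\ast-c_L|,|s_v^\ast-c_R|\}. \]
Because $|s_1-c_L|=\alpha$ and $|s_2-c_R|=\beta$, the last term is at most $\alpha$ when $s_v^\ast=s_1$ and at most $\beta$ when $s_v^\ast=s_2$, hence at most $\max(\alpha,\beta)$ in either case.

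The crux is the lower bound on $\OPT$. Pick arbitrary voters $v_L\in\Cl(c_L)$ and $v_R\in\Cl(c_R)$; both clusters are nonempty by activeness, and the two voters are distinct. Assuming $s_1>c_L$, the fact that $c_L$ is strictly closer to $v_L$ than $s_1$ forces $v_L$ to lie strictly to the left of the midpoint $c_L+\alpha/2$; combined with $s_1\le s_2$ this makes $s_1$ the nearer optimum to $v_L$ and gives $\cost_{v_L}(S^\ast)=s_1-v_L>\alpha/2$ (the case $s_1=c_L$ gives $\alpha=0$ and is trivial). A symmetric argument yields $\cost_{v_R}(S^\ast)\ge\beta/2$, so $\OPT\ge\cost_{v_L}(S^\ast)+\cost_{v_R}(S^\ast)\ge(\alpha+\beta)/2$, and in particular $\max(\alpha,\beta)\le 2\,\OPT$. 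Combining with the previous paragraph and using $|V_{\mathrm{mid}}|\le n-2$,
\[ \Util(S)\le\OPT+|V_{\mathrm{mid}}|\cdot\max(\alpha,\beta)\le\OPT+2(n-2)\,\OPT=(2n-3)\,\OPT\le(2n-2)\,\OPT. \]

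The main obstacle is that per voter one cannot control $\cost_v(S)/\cost_v(S^\ast)$: a middle voter almost collocated with $s_1$ or $s_2$ pays essentially zero under $S^\ast$ but of order $\alpha$ or $\beta$ under $S$, so a voter-by-voter argument has no hope. What rescues the aggregate bound is precisely the crux step---the activeness of $c_L$ and $c_R$ forces the optimum to charge $\Omega(\alpha+\beta)$ on the two extreme voters alone, and $\alpha+\beta$ is exactly the currency that pays for the $(n-2)$ middle-voter detours. Everything else is routine triangle-inequality bookkeeping.
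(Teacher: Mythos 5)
Your proof is correct, and it follows a genuinely different decomposition from the paper's. The paper partitions the voters into the two clusters $C_1,C_2$ induced by the optimal committee, then bounds each cluster's cost under $\{c_L,c_R\}$ against its own optimal cost via a chain of triangle inequalities, charging a detour of $(n_i-1)\,d(c,c_l)\le 2(n_i-1)\,\OPT_i$ to cluster $i$. You instead partition the voters by their \emph{top candidate}: those in $\Cl(c_L)\cup\Cl(c_R)$ pay no more under $S=\{c_L,c_R\}$ than under $S^\ast$, and each of the $\le n-2$ remaining voters pays an extra detour of at most $\max(\alpha,\beta)$, where $\alpha=s_1-c_L$ and $\beta=c_R-s_2$. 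The key trade is that you reduce the whole bound to a single global inequality $\max(\alpha,\beta)\le 2\,\OPT$, proved from just two witnesses (one in each extreme cluster), rather than per-cluster amortization. A pleasant side-effect is that your argument yields the sharper constant $2n-3$ (the paper states $2n-2$, although it acknowledges in its tightness discussion that the true worst case for this rule is $2n-3$, which your analysis matches exactly). One small point to keep explicit: the chain $\OPT\ge\cost_{v_L}(S^\ast)+\cost_{v_R}(S^\ast)$ needs $v_L\ne v_R$, which you correctly justify via $c_L\ne c_R$; when $c_L=c_R$ (a single active candidate) every voter's cost under $S$ equals her optimal cost, so that degenerate case is trivial. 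Otherwise the argument is complete and verifies.
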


\begin{proof}
We next show that electing the leftmost active candidate and the rightmost active candidate achieves a distortion at most $2n-2$, where $n \geq 3$ is the number of voters (we note that for $n = 2$ voters, we can elect their top candidates, getting an optimal committee). 

Let $\{c, c'\}$, with $c < c'$, be the optimal committee, and let $C_1$ (resp., $C_2$) be the set of voters preferring $c$ to $c'$ (resp., $c'$ to $c$). We analyze each optimal cluster of voters, $C_1$ and $C_2$, separately. 

We let $v$ (resp. $v'$) be the leftmost (resp. rightmost) voter in $C_1$ (resp. $C_2$), and let $c_l$ be $v$'s (resp. $c_r$ be $v'$'s) top candidate, which is the leftmost (resp. rightmost) candidate appearing as a top candidate of some voter in $C_1$ (resp. $C_2$). Hence, $c_l$ is the leftmost active candidate and $c_r$ is the rightmost active candidate, and thus, $\{ c_l, c_r \}$ is the committee elected by our voting rule. 

Let $|C_1| = n_1$, let $\OPT_1 = \sum_{x \in C_1} d(c, x)$ be the social cost of the voter cluster $C_1$ in the optimal committee. We let $\A_1 = \sum_{x \in C_1} d(c_l, x)$ be an upper bound on the social cost of the voter cluster $C_1$ for committee $\{ c_l, c_r \}$. Then, observing that $d(v, c_l) \leq d(v, c)$, we get that:
\begin{align*}
\A_1 & = d(v, c_l) + \sum_{x \in C_1 \setminus \{ v \}} d(x, c_l) \\
     & \leq d(v, c_l) + \sum_{x \in C_1 \setminus \{ v \}} \big( d(x, c) + d(c, c_l) \big) \\
     & \leq \OPT_1 + (n_1-1) d(c, c_l) \\
     & \leq \OPT_1 + (n_1-1) \big(d(c, v) + d(v, c_l)\big) \\
     & \leq \OPT_1 + 2(n_1-1) d(c, v) \\ 
     & \leq (2n_1-1) \OPT_1 \,.
\end{align*}
The first and the third inequalities follow from the triangle inequality. The fourth inequality holds because $d(v, c_l) \leq d(v, c)$. The last inequality follows from the fact that $\OPT_1 \geq d(v, c)$. The second inequality holds because 
\[ \OPT_1 = \sum_{x \in C_1} d(x, c) \geq d(v, c_l) + \sum_{x \in C_1 \setminus \{ v \}} d(x, c)\,,\]
since $d(v, c_l) \leq d(v, c)$. 

Similarly, we let $|C_2| = n_2$ and let $\OPT_2 = \sum_{x \in C_2} d(c', x)$ denote the social cost of the voter cluster $C_2$ in the optimal committee. We let $\A_2 = \sum_{x \in C_2} d(c_r, x)$ be an upper bound on the social cost of the voter cluster $C_2$ for committee $\{ c_l, c_r \}$. Applying the same analysis for the voters in $C_2$, we get that $\A_2 \leq (2n_2-1)\OPT_2$\,.

Summing everything up, we get that the social cost of our voting rule is at most:
\begin{align*}
\A_1 + \A_2 
   & \leq (2n_1-1)\OPT_1+(2n_2-1)\OPT_2 \\
   & \leq \big(2(n_1+n_2)-2\big)\OPT \\
   & = 2(n-1)\OPT\,.
\end{align*} 
The second inequality above follows from $\OPT = \OPT_1+\OPT_2 \geq \max\{\OPT_1, \OPT_2\}$. For the equality, we use that $n = n_1+n_2$.
\qed\end{proof}

The analysis of Theorem~\ref{thm:2fac_upper_bound} is practically tight. This is shown by an instance with $3$ candidates $a, b, c$ located at $0$, $2$ and $x$, respectively, where $x \gg 2$, and $n$ voters, one voter located at $1-\epsilon$, $n-2$ voters located at $2$ and one voter located at $x$. The optimal committee is to elect $\{b, c\}$ and has social cost $1+\epsilon$. Our voting rule elects $\{a, c\}$ and has social cost $2(n-2)+(1-\epsilon) = 2n-3-\epsilon$. Hence, we get a distortion of $2n-3$, as $\epsilon \rightarrow 0$. In fact, for $\OPT_2 = 0$ and $n_2 = 1$, which hold for this instance, the analysis of Theorem~\ref{thm:2fac_upper_bound} gives an upper bound of $2n-3$ on the distortion.

\subsection{The Distortion of Voter Clustering Based on the Two Extreme Candidates}

We next show that clustering the voters based on the two extreme candidates and selecting the top candidate of the median voter in each clusters achieves a distortion of at most $n+1$. Specifically, we let $c_l$ (resp. $c_r$) denote the leftmost (resp. rightmost) active candidate in $\C$, and let $A_1 = \{ v \in \V : c_l \succ_v c_r \}$ and $A_2 = \{ v \in \V : c_r \succ_v c_l \}$ be the sets of voters preferring $c_l$ to $c_r$ and $c_r$ to $c_l$, respectively. Namely, $A_1$ and $A_2$ are the voter clusters induced by the two extreme active candidates. We also let $v^m_1$ and $v^m_2$ denote the median voters of $A_1$ and $A_2$ respectively. Then, we elect the top candidates $a_1 = \top(v_1^m)$ and $a_2 = \top(v_2^m)$ of the two median voters $v_1^m$ and $v_2^m$ ($a_1$ and $a_2$ can be computed using the candidate ordering on their axis, provided by the algorithm of \cite{EF14}, and the cardinality of the candidate clusters).

\begin{theorem}\label{thm:2fac_upper_bound_improved}
For any number $n \geq 3$ of voters with $1$-Euclidean preferences, electing the top candidates $a_1 = \top(v_1^m)$ and $a_2 = \top(v_2^m)$ of the two median voters $v_1^m$ and $v_2^m$ in the clusters $A_1$ and $A_2$, induced by the two extreme active candidates, achieves a distortion of at most $n+1$. 
\end{theorem}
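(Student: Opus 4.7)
My plan is to bound the algorithm's social cost $\A = \Util(\{a_1, a_2\})$ by a triangle-inequality decomposition through the median voters. Since each voter $v \in A_i$ has cost at most $d(v, a_i)$ under the committee $\{a_1, a_2\}$, I start with the suboptimal-assignment upper bound
\begin{equation*}
\A \le \sum_{v \in A_1} d(v, a_1) + \sum_{v \in A_2} d(v, a_2),
\end{equation*}
and, for each cluster, apply the triangle inequality through $v_i^m$:
\begin{equation*}
\sum_{v \in A_i} d(v, a_i) \le \sum_{v \in A_i} d(v, v_i^m) + |A_i| \cdot d(v_i^m, a_i).
\end{equation*}
I will then bound the resulting ``median'' and ``shift'' terms separately.

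For the shift term, I exploit that $a_i = \top(v_i^m)$ is $v_i^m$'s nearest candidate, so $d(v_i^m, a_i) \le d(v_i^m, c)$ for every $c \in S^* = \{c^*_L, c^*_R\}$; in particular $d(v_i^m, a_i) \le \OPT_{v_i^m}$, where $\OPT_{v_i^m}$ denotes $v_i^m$'s cost in the optimum. Because $c_l$ and $c_r$ are active, both clusters are non-empty, so $|A_1|, |A_2| \le n-1$; and since $v_1^m, v_2^m$ are two distinct voters, $\OPT_{v_1^m} + \OPT_{v_2^m} \le \OPT$. These two observations combine to give
\begin{equation*}
|A_1| \cdot d(v_1^m, a_1) + |A_2| \cdot d(v_2^m, a_2) \le (n-1)\bigl(\OPT_{v_1^m} + \OPT_{v_2^m}\bigr) \le (n-1)\OPT.
\end{equation*}

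For the median term, I would invoke the $1$-dimensional median-minimization property: $\sum_{v \in A_i} d(v, v_i^m) \le \sum_{v \in A_i} d(v, x_i)$ for every reference point $x_i \in \reals$. Taking $x_1 = c^*_L$ and $x_2 = c^*_R$, the right-hand side $\sum_{v \in A_1} d(v, c^*_L) + \sum_{v \in A_2} d(v, c^*_R)$ equals $\OPT$ plus the total \emph{misassignment} cost from voters whose $\{A_1, A_2\}$-cluster differs from their optimal cluster. The main obstacle is to show that this misassignment cost is bounded by $2\OPT$ using the geometry of the line: the misassigned voters all lie in the single interval between the midpoints $(c^*_L + c^*_R)/2$ and $(c_l + c_r)/2$, and the per-voter extra cost is at most $(c_r - c^*_R) + (c^*_L - c_l)$, which is in turn charged against the optimal costs of the voters whose top candidates are $c_l$ and $c_r$ (since these voters necessarily pay, in $\OPT$, for the gap between $\{c_l, c_r\}$ and $\{c^*_L, c^*_R\}$). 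Making this charging tight enough is the delicate part of the argument; once it is established, I conclude
\begin{equation*}
\A \;\le\; 2\OPT + (n-1)\OPT \;=\; (n+1)\OPT,
\end{equation*}
as claimed.
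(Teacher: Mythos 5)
Your triangle-inequality decomposition through the median voter is reasonable as a starting point, and the shift-term bound (using $d(v_i^m, a_i) \le \OPT_{v_i^m}$ and $|A_i| \le n-1$) is correct. However, the median-term bound has a genuine gap: the claim that $\sum_{v \in A_1} d(v, c^*_L) + \sum_{v \in A_2} d(v, c^*_R) \le 3\,\OPT$ (so that the misassignment cost is $\le 2\,\OPT$) is simply false, and indeed fails badly precisely on the paper's own tight example. Take $m=3$ candidates at $0$, $1+\eps$, $2+\eps$, and voters at $1/2$, at $1$, plus $n/2-1$ at $1+\eps$ and $n/2-1$ at $2+\eps$. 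Here $c_l = 0$, $c_r = 2+\eps$, $A_2$ consists of all $n-2$ voters at $1+\eps$ and $2+\eps$, and the optimal committee is $\{1+\eps, 2+\eps\}$ with $\OPT \approx 1/2$. The misassigned voters are the $n/2-1$ voters at $1+\eps$ (they are in $A_2$ but optimally served by $c^*_L = 1+\eps$), and each contributes an extra $d(v,c^*_R)-d(v,c^*_L) = 1$ to your median-term bound, for a total misassignment of $n/2-1 \gg 2\,\OPT$. Your per-voter charge against the gap $(c^*_L - c_l) + (c_r - c^*_R)$ does not control this: there can be $\Theta(n)$ misassigned voters, each paying a constant, while the voters whose tops are $c_l, c_r$ pay only $O(\OPT)$ altogether.

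The deeper problem is that in this example your decomposition also misallocates the $\Theta(n)$ dependence: the median term $\sum_{v\in A_2} d(v, v_2^m)$ is already $\approx n/2 - 1 \approx (n-1)\,\OPT$ (all of it comes from within-cluster spread), while your shift term is only $O(\OPT)$. So you cannot bound the median term by a constant multiple of $\OPT$; the $\Theta(n)$ factor genuinely lives there, which is incompatible with your plan of charging $(n-1)\,\OPT$ to the shift term and $O(\OPT)$ to the median term. The paper's proof avoids this by treating the two clusters asymmetrically: assuming WLOG $A_1 \subset C_1 \subset A_1 \cup (A_2 \cap C_1)$, it applies the single-winner distortion-$3$ bound for $A_1$ (which is contained in one optimal cluster), and for $A_2$ it partitions the voters into three sets $L$, $M$, $R$ around $v_2^m$ and bounds the contributions separately, obtaining $\A_2 \le 2\sum_{v\in A_2} d^*_v + n_2\,\OPT$. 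The additive $n_2\,\OPT$ term there is tied to the displacement of $a_2$ from the optimal candidates (controlled by lower bounds like $\OPT \ge d(v_2^l, c_l)/2 > d(v_2^l, c_2)/2$), not to the within-cluster spread around the median. To repair your proof you would need to replace the fixed reference points $c^*_L, c^*_R$ with something that tracks each voter's actual nearest optimal candidate, but median-minimality only gives you a bound against a single reference point per cluster, so a substantially different argument along the lines of the paper's $L/M/R$ charging is required.
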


\begin{proof}
Let $\{c_1, c_2\}$, with $c_1 < c_2$, be the optimal committee, and let $C_1$ (resp., $C_2$) be the set of voters preferring $c_1$ to $c_2$ (resp., $c_2$ to $c_1$); i.e., $C_1$ and $C_2$ are the optimal voter clusters. Wlog., we assume that $A_1 \subset C_1$ and $C_2 \subset A_2$ (if $A_1 = C_1$ and $A_2 = C_2$, the distortion is at most $3$, as shown e.g., in \cite[Section~3.3]{FG22}, while the case where $C_1 \subset A_1$ and $A_2 \subset C_2$ is symmetric). We let  $d^\ast_v = d(v, \{c_1, c_2\})$ denote each voter's $v$ contribution to the optimal social cost, let $\OPT = \sum_{v \in \V} d^\ast_v$ be the optimal social cost, and let $n_1 = |A_1|$ and $n_2 = |A_2|$ be the number of voters in each algorithm's cluster. 
%
%
For locations $x, y \in \reals$, we let $(x-y)_+ = \max\{x-y, 0\}$.

In our setting with $1$-Euclidean preferences, if we consider single-winner elections (as in the case where we elect a single candidate to serve the voters in $A_1$ based on their ordinal preferences), selecting the top candidate of the median voter has a distortion of at most $3$ (see also e.g. the discussion in \cite[Section~3.3]{FG22}). Therefore, using that the optimal social cost of voters in $A_1 \subset C_1$ cannot be larger than their social cost wrt. $c_1$, we obtain that:
\begin{equation}\label{eq:alg1_cost}
 \A_1 := \sum_{v \in A_1} d(v, a_1) \leq 3 \sum_{v \in A_1} d^\ast_v \,,
\end{equation}

To analyze the distortion of $a_2$ for the voters in $A_2$ compared against their social cost in the optimal 2-committee $\{c_1, c_2\}$, we partition the voters of $A_2$ into the following sets: 
\begin{align*}
 L & = \{ v \in A_2 : v < v_2^m \mbox{ and } d^\ast_v < d(v, a_2) \} \\
 M & = \{ v \in A_2 : d(v, a_2) \leq d^\ast_v \} \\
 R & = \{ v \in A_2 : v > v_2^m \mbox{ and } d^\ast_v < d(v, a_2) \}
\end{align*}
Namely, $L$ (resp. $R$) is the set of voters in $A_2$ on the left (resp. on the right) of $v_2^m$ that are closer to the optimal committee $\{ c_1, c_2 \}$ than to $a_2$. We note that for every voter $v$ collocated with $v_2^m$, $d(v, a_2) \leq d^\ast_v$, because $a_2$ is the top candidate of $v_2^m$ (and thus, of $v$). Hence, 
the sets $L$, $M$ and $R$ are mutually disjoint and have $L \cup M \cup R = A_2$. We also note that $|L|, |R| \leq n_2/2$, because $v_2^m$ is the median voter of $A_2$ and the voters in $L$ (resp. $R$) lie strictly on the left (resp. on the right) of $v^m_2$, and that $L$ or $R$ may be empty. 


In the following, we let $v^l_2 \leq v^m_2$ be the leftmost voter in $A_2$. We next show that:

\begin{equation}\label{eq:alg2}
 \A_2 := \sum_{v \in A_2} d(v, a_2) \leq 
 \left\{\begin{array}{ll}
   \sum_{v \in A_2} d^\ast_v 
 + \frac{n_2}{2} d(v^l_2, c_2) & \mbox{if $v^l_2 \leq a_2 \leq c_2$} \\
   \sum_{v \in A_2} d^\ast_v 
 + \frac{n_2}{2} \big( d(a_2, v^l_2) + d(v^l_2, c_2) \big) \ \ \ \ & 
    \mbox{if $a_2 \leq c_2$ and $a_2 < v^l_2$} \\
 \sum_{v \in A_2} d^\ast_v 
 + \frac{n_2}{2} (a_2 - v^l_2)_+ & 
    \mbox{if $c_2 < a_2$}
\end{array}\right.
\end{equation}

\begin{proof}[of \eqref{eq:alg2}] 
To upper bound the contribution of voters in $M$ to $\A_2$, we use that:
\begin{equation}\label{eq:cost_middle}
 \sum_{v \in M} d(v, a_2) \leq \sum_{v \in M} d^\ast_v 
\end{equation}

To upper bound the contribution of voters in $L$ to $\A_2$, we observe that every $v \in L$ lies on the left of $a_2$, i.e., every $v \in L$ has $v < a_2$. To justify this claim, we observe that if $v^m_2 > a_2$, since $a_2$ is the top candidate of $v^m_2$, i.e. since $d(v^m_2, a_2) < d(v^m_2, \C \setminus \{a_2\})$, there is no candidate other than $a_2$ in $[a_2, v^m_2 + d(v^m_2, a_2)]$. Therefore, every voter $v$ with $a_2 \leq v < v^m_2$ must have $a_2$ as their top candidate and must be included in $M$. Therefore, either $a_2 \leq v^l_2$, in which case $L = \emptyset$, or $v^l_2 < a_2$, in which case every $v \in L$ has $v < a_2$ and $d(v, a_2) \leq d(v^l_2, a_2) = (a_2 - v^l_2)_+$. Since $|L| \leq n_2/2$, we obtain that:
\begin{equation}\label{eq:cost_left}
 \sum_{v \in L} d(v, a_2) \leq \frac{n_2}{2}\,(a_2 - v^l_2)_+ 
\end{equation}

To upper bound the contribution of voters in $R$ to $\A_2$, we first observe that every $v \in R$ lies on the right of $a_2$, i.e., $v$ has $a_2 < v$, which follows from the symmetric argument of that about the voters in $L$. Then, we distinguish between the case where $a_2 \leq c_2$ and the case where $c_2 < a_2$. 

We start with the case where $a_2 \leq c_2$. In this case, every $v \in R$ has $d(v, a_2) \leq d^\ast_v + d(a_2, c_2)$. More specifically, since $v \in R$, $v > a_2$. Then, either $a_2 < v \leq c_2$, in which case $d(v, a_2) \leq d(a_2, c_2)$, or $a_2 \leq c_2 < v$, in which case $d(v, a_2) \leq d(v, c_2) + d(a_2, c_2) = d^\ast_v + d(a_2, c_2)$. 

We proceed by further distinguishing two subcases depending on whether $v^l_2 \leq a_2$ or $a_2 < v^l_2$. We start with the case where $v^l_2 \leq a_2 \leq c_2$. Using that for every $v \in R$, $d(v, a_2) \leq d^\ast_v + d(a_2, c_2)$, and that $|R| \leq n_2/2$, we obtain the following upper bound about the voters in $R$: 
\begin{equation}\label{eq:cost_right} 
 \sum_{v \in R} d(v, a_2) \leq \sum_{v \in R} d^\ast_v + \frac{n_2}{2} d(a_2, c_2) 
\end{equation}
Using that $d(v^l_2, a_2) + d(a_2, c_2) = d(v^l_2, c_2)$, because of our assumption that $v^l_2 \leq a_2 \leq c_2$, and combining \eqref{eq:cost_middle}, \eqref{eq:cost_left} and \eqref{eq:cost_right}, we obtain the first case of \eqref{eq:alg2}.

We next consider the case where $a_2 \leq c_2$ and $a_2 < v^l_2$. The latter implies that $L = \emptyset$ in this case. Moreover, we observe that in this case, for every $v \in R$, 
\begin{align*}
d(v, a_2) \leq d^\ast_v + d(a_2, c_2) 
          \leq d^\ast_v + d(a_2, v^l_2) + d(v^l_2, c_2)
\end{align*} 
%
Using that $|R| \leq n_2/2$, and combining the inequality above with \eqref{eq:cost_middle} and the fact that $L = \emptyset$, we obtain the second case of \eqref{eq:alg2}.  

Finally, we consider the case where $c_2 < a_2$, which since $d(v^m_2, a_2) < d(v^m_2, c_2)$, implies that $c_2 < v^m_2$. In this case, $R = \emptyset$, because for every $v \in A_2$ with $v > v^m_2$, we have that $d(v, a_2) < d(v, c_2) = d^\ast_v$ (and hence $v$ is not included in $R$). To justify the last inequality, we observe that if $c_2 < a_2 \leq v^m_2$, then $a_2$ lies between $v$ and $c_2$, while if $c_2 < v^m_2 < a_2$, then
\[ d(v, a_2) < d(v^m_2, a_2) < d(v^m_2, c_2) < d(v, c_2) \,. \] 
%
%
Then, combining \eqref{eq:cost_middle} with \eqref{eq:cost_left} and with the fact that $R = \emptyset$, we obtain the third case of \eqref{eq:alg2}.
\qed\end{proof}

We next prove a sequence of useful lower bounds on $\OPT$. We first show that $\OPT \geq d(v^l_2, c_l)/2$. To this end, we let $v^l_1$ be the leftmost voter in $A_1$. Then, $c_l = \top(v^l_1)$ and $\OPT \geq d(c_l, v^l_1)$. Moreover, $\OPT \geq d(v^l_1, v^l_2) \geq d(v^l_2, c_l) - d(c_l, v^l_1)$, because in the optimal solution, $v^l_1, v^l_2 \in C_1$ and are both assigned to $c_1$. Therefore, we conclude that $\OPT \geq d(v^l_2, c_l)/2$. Then, we observe that $d(v^l_2, c_r) < d(v^l_2, c_l)$, because $v^l_2$ prefers $c_r$ to $c_l$ and is included in $A_2$. Observing that $c_2 \leq c_r$ and putting everything together, we obtain the following sequence of lower bounds on $\OPT$:
\begin{equation}\label{eq:opt_lower_bound}
\OPT \geq \frac{d(v^l_2, c_l)}{2} > \frac{d(v^l_2, c_r)}{2} \geq \frac{d(v^l_2, c_2)}{2}\,. 
\end{equation}

Using \eqref{eq:opt_lower_bound}, we next show that in all cases of \eqref{eq:alg2}, 
\begin{equation}\label{eq:cost_final}
 \A_2 \leq 2 \sum_{v \in A_2} d^\ast_v + n_2\,\OPT  
\end{equation}

In the first case of \eqref{eq:alg2}, we use that $\OPT \geq d(v^l_2, c_2) / 2$, from \eqref{eq:opt_lower_bound}, and obtain \eqref{eq:cost_final}.

In the third case of \eqref{eq:alg2}, we observe that $(a_2 - v^l_2)_+ \leq d(v^l_2, c_r)$, because (i) $c_r$ is the leftmost active candidate, and thus $a_2 \leq c_r$; and (ii) $a_2 - v^l_2$ is positive only if $v^l_2 < a_2$, in which case $a_2 - v^l_2 \leq d(v^l_2, c_r)$. Therefore, we use that $\OPT \geq d(v^l_2, c_r) / 2$, from \eqref{eq:opt_lower_bound}, and obtain \eqref{eq:cost_final}.

In the second case of \eqref{eq:alg2}, we use $\OPT \geq d(v^l_2, c_2) / 2$, from \eqref{eq:opt_lower_bound}, and obtain that $n_2 d(v_2^l, c_2) / 2 \leq n_2\,\OPT$. To upper bound $n_2 d(a_2, v_2^l) / 2$, we observe that:
\begin{equation}\label{eq:cost_third_term}
  \frac{n_2}{2} d(v^l_2, a_2) \leq 
  \sum_{v \in M} d(v, a_2) \leq
  \sum_{v \in M} d^\ast_v
\end{equation} 
The first inequality in \eqref{eq:cost_third_term} holds because (i) $|M| \geq n_2/2$, since $|R| \leq n_2/2$ and in this case $L = \emptyset$; and (ii) $a_2 < v^l_2 \leq v$, for every $v \in M$. The last inequality is \eqref{eq:cost_middle}. Applying the upper bounds above to the second case of \eqref{eq:alg2}, we obtain 
\eqref{eq:cost_final}.

If $n_1 = 1$ and $n_2 = n-1$, \eqref{eq:alg1_cost} is simply $\A_1 = d(v_1, a_1) \leq d^\ast_{v_1}$, where $v_1$ denotes the single voter in $A_1$. Then, using \eqref{eq:cost_final}, we conclude that
\[ \A_1 + \A_2 \leq 2\,\OPT + (n-1)\OPT = (n+1)\OPT \]
If $n_1 \geq 2$ and $n_2 \leq n-2$, we combine \eqref{eq:alg1_cost} and \eqref{eq:cost_final} and obtain that:
\[ \A_1 + \A_2 \leq 3\,\OPT + (n-2)\OPT = (n+1)\,\OPT \]
Hence, in all cases, the distortion is at most $n+1$, which concludes the proof of Theorem~\ref{thm:2fac_upper_bound_improved}.
\qed\end{proof}

The analysis of Theorem~\ref{thm:2fac_upper_bound_improved} is tight. This is shown by an instance with $m=3$ candidates $a, b$ and $c$ located at $0$, $1+\epsilon$ and $2+\epsilon$, respectively, for an appropriately small $\epsilon > 0$, and $n \geq 4$ voters, where one voter is on $1/2$, a second voter is on $1$, $n/2-1$ voters are on $1+\epsilon$ and $n/2-1$ voters are on $2+\epsilon$. Then, cluster $A_1$ consists of the two voters on $1/2$ and $1$ and elects $a_1 = a$ (assuming that ties for the median voter are broken in favor of the leftmost median voter for the left cluster). Cluster $A_2$ consists of the $n/2-1$ voters on $1+\eps$ and the $n/2-1$ voters on $2+\eps$ and elects $a_2 = c$ (assuming that ties for the median voter are broken in favor of the rightmost median voter for the right cluster). The social cost of committee $\{ a, c \}$ is $1/2+1+n/2-1 = (n+1)/2$. The optimal $2$-committee is to elect $\{b, c\}$. The optimal social cost is $1/2+2\eps$. Thus we get a distortion of $n+1$, as $\epsilon \rightarrow 0$. If we insist on consistent tie-breaking in favor of the leftmost median voter in both clusters, a variant of this example, where we have $n/2-2$ voters on $1+\eps$ and $n/2$ voters on $2+\eps$, gives a lower bound of $n-1$ on the distortion. Hence, even though the analysis of Theorem~\ref{thm:2fac_upper_bound_improved} is tight (in the sense that $n+1$ is the best possible distortion of this rule, if we do not assume anything about tie-breaking), it is open if consistent tie-breaking among the two clusters could lead to an improved distortion that matches the lower bound of Theorem~\ref{thm:2fac_lower_bound}. 

\subsection{The Distortion of Electing $2$ out of $3$ Candidates}

\begin{proposition}\label{pr:2-out-of-3}
For any number of voters $n \geq 3$ with $1$-Euclidean preferences, the distortion of electing $k = 2$ out of $m = 3$ candidates on the real line is $3$. 
\end{proposition}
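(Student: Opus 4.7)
The proposition asserts matching lower and upper bounds of $3$ on the distortion, so the proof splits into two parts.

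For the lower bound of $3$, I would adapt the classical $3$-distortion lower bound of \cite{ABEPS18} for metric $1$-out-of-$2$ single-winner elections to the three-candidate setting. The construction embeds a worst-case $1$-out-of-$2$ instance by placing the middle candidate $b$ close to one extreme, say $a$, so that the ordinal profile over $\{a, b, c\}$ essentially forces a choice between ``grouping $a$ with $b$'' and ``grouping $a$ with $c$''. By exhibiting two metric realizations consistent with the same ordinal profile but with different optimal $2$-committees, any deterministic rule must commit to a single output, and its distortion on one of the realizations is at least the $1$-out-of-$2$ lower bound of $3$.

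For the upper bound of $3$, I would analyze a rule that always includes one of the extreme active candidates---say $a$---and selects the second committee member between $b$ and $c$ by the majority preference of the voters ranking $a$ last. The key observation is that, by the $1$-Euclidean structure of the preferences, voters ranking $a$ last are exactly the voters located past the midpoint $(a+c)/2$, and they strictly prefer both $b$ and $c$ to $a$. Hence their pairwise decision between $b$ and $c$ reduces to a $1$-out-of-$2$ single-winner sub-election, to which the classical $3$-distortion bound of \cite{ABEPS18} applies. For the voters on the other side of the midpoint, who rank $a$ either first or second, the rule's committee contains $a$, and their cost can be bounded against the optimum by a straightforward triangle inequality argument, since $a$ is at least as close to them as whichever non-selected extreme is.

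The main obstacle is the case where the global optimum $S^*$ is $\{b, c\}$ and does not include the opened extreme $a$. Here one must show that substituting $a$ into the committee inflates the total cost by at most a factor of $3$. This requires a careful choice (from ordinal information alone) of which extreme to open---e.g., the extreme opposite to the majority of voters relative to $(a+c)/2$---combined with a case analysis on the structure of $S^*$ and on the outcome of the majority sub-election. Combining the $1$-out-of-$2$ bound for the deciding voters with the triangle-inequality bound for the other voters then yields the overall distortion of $3$.
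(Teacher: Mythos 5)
Your lower-bound sketch is in the same spirit as the paper's: the paper uses a three-voter profile with rankings $a\succ b\succ c$, $b\succ c\succ a$, $c\succ b\succ a$ and a case analysis over which candidate the rule drops, showing that dropping $a$ gives unbounded distortion and dropping $b$ or $c$ gives distortion arbitrarily close to $3$. Your ``embed a $1$-out-of-$2$ instance'' idea is a reasonable route to the same conclusion.

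The upper bound, however, has a genuine gap: the rule you propose does not achieve distortion $3$, and neither does your suggested fix. Committing to an extreme candidate up front and running a single $1$-out-of-$2$ sub-election on the other side cannot work, because the best $2$-committee need not contain any extreme. For the unfixed rule (always open $a$): take $a=0$, $b=1$, $c=10$, with $n-1$ voters at $b$ (ranking $b\succ a\succ c$) and one voter at $c$; the unique voter ranking $a$ last prefers $c$, so the rule returns $\{a,c\}$ with cost $n-1$, while $\{b,c\}$ has cost $0$. For the fixed rule (open the extreme opposite to the majority side of $(a+c)/2$): take $a=0$, $b=50-\eps$, $c=100$, with $n-1$ voters at $a$ and one voter at $50+\eps$ (ranking $b\succ c\succ a$); the majority is on the left, so you open $c$, the left majority picks $a$, and you return $\{a,c\}$ with cost $50-\eps$, whereas the optimum $\{a,b\}$ has cost $2\eps$, so the distortion tends to infinity as $\eps\to 0$. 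The ``straightforward triangle inequality'' step you appeal to for the side served by the opened extreme is exactly where the argument breaks: voters on that side can be served much more cheaply by the middle candidate $b$, which your rule may fail to select.

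The paper's rule is structurally different and does not commit to an extreme. It partitions the voters into those who rank $c$ last and those who rank $a$ last, runs a plurality sub-election among $\{a,b\}$ over the former group and among $\{b,c\}$ over the latter, and outputs the two winners (adding $a$ in the degenerate case where both winners are $b$). Crucially, this rule can output $\{a,b\}$ or $\{b,c\}$ when the profile calls for it---in both counterexamples above it returns the optimal committee. The $1$-out-of-$2$ bound is then applied to each group against the best single candidate from the pair that serves that group, which is a comparison that actually closes. If you want to rescue the ``open one extreme'' picture, you essentially end up rediscovering this two-sub-election rule.
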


\begin{proof}
For the proof, we consider $3$ active candidates $a$, $b$, $c$, which are arranged as $a < b < c$ on the line from left to right. 

\medskip\noindent\textbf{Lower Bound.} We consider $3$ voters with preferences $a \succ b \succ c$, $b \succ c \succ a$, and $c \succ b \succ a$, respectively. 
If we do not elect $a$, the distortion can be unbounded when $d(a,b) \gg d(b,c)$ and each voter is collocated with her top candidate. 
If we do not elect $b$, the distortion can be arbitrary close to $3$, when $d(a,b)=d(b,c)=2$, voter 1 is on $a$, voter 2 is on $b$, and voter 3 is on $(b+c)/2+\eps$, for some small $\eps > 0$. The social cost of $\{a, c\}$ is $3-\eps$, while the social cost of the optimal $2$-committee $\{ a, b \}$ is $1+\eps$. 
If we do not elect $c$, the distortion can be arbitrary close to $3$, when $d(a,b)=d(b,c)=2$, voter 1 is on $a$, voter 2 is on $(b+c)/2-\eps$, for some small $\eps > 0$, and voter 3 is on $c$. The social cost of $\{a, b\}$ is $3-\eps$, while the social cost of the optimal $2$-committee $\{ a, c \}$ is $1+\eps$. 
Hence for any deterministic $2$-committee election rule, there is a $3$-voter ranking profile for which the distortion is arbitrarily close to $3$. 

\medskip\noindent\textbf{Upper Bound.} We assume that all $3$ candidates are active (otherwise, it is optimal to elect the two active candidates). We observe that there are only $4$ possible rankings $a \succ b \succ c$, $b \succ a \succ c$, $b \succ c \succ a$, and $c \succ b \succ a$, which are consistent with the candidate axis $a < b < c$. 
A deterministic voting rule with distortion $3$ is the following:
\begin{itemize}
\item If the number of voters with ranking $a \succ b \succ c$ is at least the number of voters with ranking $b \succ a \succ c$, we elect $a$; otherwise, we elect $b$. 

\item If the number of voters with ranking $b \succ c \succ a$ is at least the number of voters with profile $c \succ b \succ a$, we elect $b$, otherwise, we elect $c$. 

\item If $b$ is elected in both steps, also elect $a$.  
\end{itemize}
Namely, for each pair of consecutive candidates on the axis, we elect the candidate with largest support, if we restrict our attention to the subset of voters with those two candidates as their top two preferences. 

To show that an upper bound of $3$ on the distortion, we partition the voters into those with $c$ as their last candidate and those with $a$ as their last candidate. For both groups, we use the fact that the distortion is at most $3$ (see e.g., \cite{AnshelevichBEPS18}), when we elect $1$ out of $2$ candidates (elect one candidate out of $a$ and $b$ for the former group of voters and elect one candidate out of $b$ and $c$ for the latter group). Therefore, the distortion for the entire set of voters is at most $3$. 
\qed\end{proof}

\subsection{Unbounded Distortion of Election $k \geq 3$ Candidates}

\begin{theorem}\label{thm:3fac_lower_bound}
The distortion of any deterministic rule for electing $k \geq 3$ out of $m \geq 4$ candidates arranged on the real line cannot be bounded by any function of $n$ and $m$.  
\end{theorem}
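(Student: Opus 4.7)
I propose to establish Theorem~\ref{thm:3fac_lower_bound} by exhibiting, for each $k \geq 3$ and $m \geq k+1$, a fixed $1$-Euclidean ranking profile $\vec{\succ}$ together with a one-parameter family of realizations consistent with $\vec{\succ}$, such that every $k$-committee admits some realization on which its ratio to the optimum is arbitrarily large. Since any deterministic rule returns a single committee on $\vec{\succ}$, this immediately yields unbounded distortion.

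Focus first on the core case $k = 3$, $m = 4$, $n = 4$. I would place four voters collocated with four candidates at positions $c_1 = 0$, $c_2 = 1$, $c_3 = M$, $c_4 = M + L$, with parameters $M > 2$ and $0 < L < M - 1$. A direct computation of pairwise distances shows that the induced ordinal profile is
\begin{align*}
v_1 &: c_1 \succ c_2 \succ c_3 \succ c_4, & v_2 &: c_2 \succ c_1 \succ c_3 \succ c_4, \\
v_3 &: c_3 \succ c_4 \succ c_2 \succ c_1, & v_4 &: c_4 \succ c_3 \succ c_2 \succ c_1,
\end{align*}
and this profile is the same for every $(M, L)$ in the allowed range. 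Enumerating the four possible $3$-committees, the committees $\{c_2, c_3, c_4\}$ and $\{c_1, c_3, c_4\}$ each have social cost $c_2 - c_1 = 1$ (paid by $v_1$ or $v_2$, respectively), while $\{c_1, c_2, c_4\}$ and $\{c_1, c_2, c_3\}$ each have social cost $c_4 - c_3 = L$ (paid by $v_3$ or $v_4$). Hence the optimal cost is $\min(1, L)$. For any deterministic rule $R$ with output $S$ on $\vec{\succ}$, the adversary then chooses $L$ accordingly: if $S$ excludes one of $c_1, c_2$, take $L \to 0^+$ to get ratio $1/L \to \infty$; if $S$ excludes one of $c_3, c_4$, take $L = M - 2$ with $M \to \infty$ to get ratio $L \to \infty$.

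To extend to general $k \geq 3$ and $m \geq k+1$, I append $k - 3$ ``anchor'' candidates $d_1 < \cdots < d_{k-3}$, each with a dedicated collocated voter, all placed pairwise far apart and far to the right of the core block (with separation $D$ much larger than any fixed upper bound $L_{\max}$ on $L$ and than $M$). Any committee that omits some anchor $d_j$ incurs a cost of order $D$ from $d_j$'s isolated voter, which the adversary can make arbitrarily larger than the core optimum; hence any rule with bounded distortion must elect all anchors, leaving exactly three of the four core candidates to be picked, reducing to the core analysis above. If $m > k + 1$, the extra $m - k - 1$ dummy candidates are inserted between the core block and the anchors (preserving non-degeneracy); since dummies have no voters, electing a dummy in place of a core or anchor candidate again forces unbounded cost and can be excluded.

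The main technical point is ensuring that $\vec{\succ}$ remains invariant across the entire family of instances, which amounts to checking that each strict pairwise comparison appearing in $\vec{\succ}$ stays strict and in the same direction as $L$ varies. For the core this follows from $M > 2$ and $0 < L < M - 1$, which respectively keep $v_2$'s preference $c_1 \succ c_3$ and $v_3$'s preference $c_4 \succ c_2$ strict; for the padded construction, fixing $L_{\max}$ in advance and choosing the anchor-separation $D$ sufficiently large ensures that the relative rankings among core, dummy, and anchor candidates do not flip as $L$ ranges over $(0, L_{\max})$.
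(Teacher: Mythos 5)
Your core construction for $k=3$, $m=4$ is in essence the same as the paper's (four candidates, four collocated voters, the same ranking profile, and the same per-committee cost analysis). Your parameterization $c_1=0,\ c_2=1,\ c_3=M,\ c_4=M+L$ with $M>2$, $0<L<M-1$ is actually a bit cleaner than the paper's: the paper sets $d(a,b)=d(b,c)=B$, which technically produces a tie for the voter at $b$ between $a$ and $c$, in tension with the strict-preference assumption; your constraints avoid this. For the extension to general $(k,m)$ you take a genuinely different route: the paper pads by placing additional candidates (each with its own voter) essentially collocated with $a,b,c,d$ and observes that the argument carries over when $k=m-1$, while you add far-separated ``anchor'' candidates and inactive dummies, which lets you handle all $(k,m)$ with $k\geq 3$ and $m\geq k+1$ uniformly. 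That is a real gain in generality over what the paper spells out.

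There is, however, a subtle gap in the padded construction as you have written it. You say ``fixing $L_{\max}$ in advance and choosing $D$ sufficiently large ensures that the rankings do not flip as $L$ ranges over $(0,L_{\max})$.'' But then, for that single fixed profile, if the rule drops $c_3$ or $c_4$ the adversary's best ratio is $\sup_{0<L<L_{\max}} L/\min(1,L)=L_{\max}$, a finite constant --- the unbounded branch $L\to\infty$ is no longer available, so this one profile does not certify unbounded distortion for every possible committee. Two fixes are available. Either argue with a family of profiles: for each target $T$ set $L_{\max}=T$, and note the distortion on the $L_{\max}$-profile is at least $L_{\max}$, hence the supremum over profiles is unbounded. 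Or, cleaner and closer in spirit to your unpadded argument, make the anchor (and dummy) positions scale with the core width, e.g.\ $d_j = 10^j(M+L)$: a short check shows the profile is then the same for all $M>2$ and all $L\in(0,M-1)$, so a single profile certifies unbounded distortion in all exclusion cases. As written the proposal is correct in spirit but needs one of these patches to close the argument.
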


\begin{proof}
The proof is a simplified version of our more general lower bound in Theorem~\ref{thm:query-lower-bound}. For the proof, we consider $m = 4$ candidates $a$, $b$, $c$ and $d$, which are arranged as $a < b < c < d$ on the line from left to right, and $n = 4$ voters, each collocated with one of the candidates. 

The first voter is collocated with candidate $a$ and has ranking $a \succ_a b \succ_a c \succ_a d$, the second voter is collocated with $b$ and has ranking $b \succ_b a \succ_b c \succ_b d$, the third voter is collocated with $c$ and has ranking $c \succ_c d \succ_c b \succ_c a$, and the fourth voter is collocated with $d$ and has ranking $d \succ_d c \succ_d b \succ_d a$. 

Since we elect $k = 3$ candidates out of $m = 4$, we just need to identify which one is dropped. If either $a$ or $b$ is dropped (the case where either $c$ or $d$ is dropped is symmetric), then we let $d(a,b)=d(b,c)=B \gg 1$ and $d(c,d)=1$. The optimal choice is to drop $d$, resulting in a distortion of $B$, which can become arbitrarily large. The number of active candidates can be increased to any $m \geq 4$ by introducing $m-4$ additional candidates (and $m-4$ additional voters, each collocated with a different new candidate). We divide the new candidates among the $4$ original ones, each placed in a different location essentially collocated with $a$, $b$, $c$ or $d$. Clearly, any of them can be selected in place of the corresponding original candidate, so this modification does not affect the lower bound construction. 
\qed\end{proof}

We also observe that the proof of Theorem~\ref{thm:3fac_lower_bound} 
can be extended to the case where we want to elect $k = m-1$ candidates out of $m$, for any $m \geq 4$.

\section{Simulation among Different Query Types}
\label{s:app:queries}

We first show how to obtain the answer to a voter distance query by combining the answers of two regular queries.

\begin{proposition}[Voter Query Simulation]\label{pr:voter_queries}
The distance between any two voters $v$ and $v'$ can be obtained from two regular cardinal queries, one for $v$ and one for $v'$. 
\end{proposition}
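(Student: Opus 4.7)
The plan is to pick a common reference candidate $c^* \in \C \setminus \{\top(v), \top(v')\}$ and ask the two regular queries $d(v, c^*)$ and $d(v', c^*)$, one per voter as required. Since $m \geq k+1 \geq 3$, the set $\C \setminus \{\top(v), \top(v')\}$ is always nonempty, so such a $c^*$ exists regardless of whether the two voters share a top choice.

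The main step is to show that, whenever $c^* \neq \top(v)$, the ordinal information alone (together with the candidate axis, obtained from $\vec{\succ}$ via \cite{EF14}) determines on which side of $c^*$ the voter $v$ lies. If $\top(v) = c_j \neq c^*$, then $v$ belongs to the Voronoi cell of $c_j$ with respect to $\C$, i.e., to the interval $[\mu_{j-1}, \mu_j]$, where $\mu_i = (c_i + c_{i+1})/2$ (with $\mu_0 = -\infty$ and $\mu_m = +\infty$). If $c_j$ lies to the left of $c^*$ on the candidate axis, then $c_{j+1} \leq c^*$ and hence $v \leq \mu_j \leq (c_j + c^*)/2 < c^*$; symmetrically, if $c_j$ lies to the right of $c^*$, then $v > c^*$. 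The same argument applies verbatim to $v'$. Thus the side of $c^*$ on which $v$ (resp.\ $v'$) lies is simply the side on which $\top(v)$ (resp.\ $\top(v')$) lies on the candidate axis, which is immediate from the ordinal profile.

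To finish, I would compute $|v - v'| = |d(v, c^*) - d(v', c^*)|$ when $v$ and $v'$ lie on the same side of $c^*$, and $|v - v'| = d(v, c^*) + d(v', c^*)$ when they lie on opposite sides; in both cases the two regular queries $d(v, c^*)$ and $d(v', c^*)$ are all the cardinal information needed. The only real obstacle is the side-of-$c^*$ argument for each voter, and it is handled precisely by insisting that $c^*$ is neither $\top(v)$ nor $\top(v')$, so that $v$ and $v'$ each sit strictly in a Voronoi cell disjoint from $c^*$; the assumption $m \geq 3$ guarantees that such a reference candidate is available.
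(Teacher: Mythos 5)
Your proof is correct and takes essentially the same route as the paper's: choose a reference candidate $c^*$ different from both voters' top choices, determine from the candidate axis which side of $c^*$ each voter lies on, and then combine $d(v,c^*)$ and $d(v',c^*)$ by subtraction or addition accordingly. The paper's version arrives at the same conclusion via a three-case analysis (same top, adjacent tops, non-adjacent tops), each time picking a specific reference candidate; your unified Voronoi-cell argument handles all three cases at once and is a modest simplification of the same idea.
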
	

\begin{proof}
Let $y = d(v, v')$ be the distance of $v$ to $v'$. We distinguish three cases depending on the candidate clusters to which $v$ and $v'$ belong. 

\begin{enumerate}
\item If $v$ and $v'$ have the same top candidate $c$ (i.e., $v, v' \in \Cl(c)$), let $c'$ be any candidate different from $c$ (i.e., $c'$ may be the first candidate to the right of $c$ or the first candidate to the left of $c$). Then, $y = | d(v, c') - d(v', c')|$. 

\item If $v$'s top candidate is $c$ and $v'$'s top candidate is $c' \neq c$, and $c$ and $c'$ are consecutive on the line, let $c''$ be any candidate different from both $c$ and $c'$ (i.e., $c''$ may be the first candidate to the right of the rightmost of $c$ and $c'$ or the first candidate to the left of the leftmost of $c$ and $c'$). Then, $y = | d(v, c'') - d(v', c'')|$. 

\item If $v$'s top candidate is $c$ and $v'$'s top candidate is $c' \neq c$, and $c$ and $c'$ are not consecutive on the line, let $c'' \not\in \{c, c'\}$ be any candidate lying between $c$ and $c'$. Then, $y = d(v, c'') + d(v', c'')$. 
\end{enumerate}

In all cases, we can obtain the value of $y$ from the responses to two regular queries, one about the distance of $v$ to a carefully chosen candidate and another about the distance of $v'$ to the same candidate. 
\qed\end{proof}

We next show how to obtain the answer to a candidate  query by combining the answers of at most six regular queries.

\begin{proposition}[Candidate Query Simulation]\label{pr:candidate_queries}
The distance $d(c, c')$ of any two candidates $c$ and $c'$ can be obtained from 
at most six regular queries, at most three for a voter in $\Cl(c)$ and at most three for a voter in $\Cl(c')$.   
\end{proposition}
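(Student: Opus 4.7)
The plan is to decompose $d(c, c')$ additively along the real line using a pivot candidate whose position is pinned down, on both sides, by the Voronoi property of the top-choice relation. I first extract the candidate axis from $\vec{\succ}$ via the algorithm of~\cite{EF14} and WLOG assume $c < c'$. Pick arbitrary $v \in \Cl(c)$ and $v' \in \Cl(c')$ (both non-empty since $\C$ consists of active candidates). Since $c \succ_v c'$ and $c' \succ_{v'} c$, the Voronoi inequality already gives $v < (c+c')/2 < v'$.

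I would then split into two cases depending on whether any candidate lies strictly between $c$ and $c'$ on the axis.

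\emph{Case 1 (a candidate $c^*$ exists with $c < c^* < c'$).} Applying the Voronoi inequality twice more, $c \succ_v c^*$ forces $v < (c+c^*)/2 < c^*$, and $c' \succ_{v'} c^*$ forces $v' > (c^*+c')/2 > c^*$. The plan is to query $d(v, c^*)$ and $d(v, c')$, which by the above equal $c^*-v$ and $c'-v$, so their difference is $c'-c^*$; symmetrically the queries $d(v', c^*)$ and $d(v', c)$ equal $v'-c^*$ and $v'-c$, with difference $c^*-c$. Summing the two unambiguous pieces yields $d(c, c') = (c'-c^*)+(c^*-c)$ using $2$ queries per voter.

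\emph{Case 2 ($c$ and $c'$ are adjacent on the axis).} Since $m \geq 3$, at least one of $c$ or $c'$ has a neighbor outside $[c,c']$; WLOG $c$ has a left neighbor $c_L$ on the axis (otherwise $c'$ has a right neighbor and the situation is symmetric). The Voronoi inequality now gives $c_L < v < c'$ and $c_L < c < v'$, so the two queries on $v$ satisfy $d(v, c_L)+d(v, c') = (v-c_L)+(c'-v) = c'-c_L$, while the two queries on $v'$ satisfy $d(v', c_L)-d(v', c) = (v'-c_L)-(v'-c) = c-c_L$. Subtracting yields $d(c,c') = (c'-c_L)-(c-c_L)$, again with $2$ queries per voter.

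In both cases the total is $4$ regular queries, comfortably within the claimed bound of $3$ per voter. The only real subtlety is ensuring that every queried distance decomposes into candidate gaps with a \emph{fixed} sign, rather than with an ambiguous $\pm$; this is exactly what the Voronoi inequalities $d(v,c) < d(v,c^{**})$ (and their analogue for $v'$) afford once the pivot $c^{**}$ is chosen strictly between $c$ and $c'$ in Case 1, or strictly outside $[c,c']$ in Case 2. The main obstacle the plan must dispose of is the boundary situation where no such outer pivot exists, which would force $c$ leftmost, $c'$ rightmost, and adjacent; but that configuration requires $m=2$ and is ruled out by $m \geq k+1 \geq 3$.
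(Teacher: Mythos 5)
Your proof is correct and takes a genuinely different route from the paper's. The paper queries the four ``natural'' distances $d(v,c)$, $d(v,c')$, $d(v',c')$, $d(v',c)$ and then disambiguates the sign pattern (whether $v$ is left or right of $c$, and whether $v'$ is left or right of $c'$) by comparing the queried values across the four possible orderings; when $d(v,c')=d(v',c)$ and $d(v,c)=d(v',c')$, two of the cases are indistinguishable, and the paper falls back on an extra voter-to-voter query (costing two more regular queries), which is what pushes the bound to six. You sidestep the ambiguity entirely by introducing a \emph{pivot} candidate whose side relative to both $v$ and $v'$ is forced by the Voronoi inequalities $d(v,\top(v)) < d(v,\cdot)$: in Case 1 you pick the pivot strictly between $c$ and $c'$ so that from $v$ you only query candidates on $v$'s right and from $v'$ only candidates on $v'$'s left, and in Case 2 you move the pivot outside $[c,c']$ and again every queried $|\cdot-\cdot|$ resolves with a fixed sign. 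This yields the distance in $2+2=4$ regular queries in every case, a strictly sharper bound than the claimed six, and it eliminates the ambiguous case rather than patching it. Both arguments do need $m \geq 3$ (you for the existence of the pivot, the paper for its voter-query fallback) and both rely on the candidate axis being available from the profile, so no additional structural assumption is being smuggled in.
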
	

\begin{proof}
We let $v$ be any voter in $\Cl(c)$ and $v'$ be any voter in $\Cl(c')$ (in fact $v$ and $v'$ can be any voters). For brevity, we let $a = d(v, c)$, $b = d(v, c')$, $x = d(v', c')$ and $y = d(v', c)$ throughout this proof. We let $z = d(c, c')$ be the information we aim to extract from $a$, $b$, $x$ and $y$. By symmetry, we can assume that $c < c'$. Then, due to fact that the agents are located on the real line, we have that $\min\{ v, c \} \leq \max\{ v, c \} < \min\{ v', c' \} \leq \max\{ v', c'\}$. We distinguish four cases about how $z$ can be obtained from $a$, $b$, $c$ and $d$, depending on the relative order of $v$, $c$, $v'$ and $c$ on the real line. 

\begin{enumerate}
\item $v \leq c < c' \leq v'$. Then, $z = b - a = y - x$. 
\item $v \leq c < v' \leq c'$. Then, $z = b - a = y + x$. 
\item $c \leq v < c' \leq v'$. Then, $z = b + a = y - x$. 
\item $c \leq v < v' \leq c'$. Then, $z = b + a = y + x$. 
\end{enumerate}

Since we do not have ties, $b > 0$ and $y > 0$. We first consider the case where both $a > 0$ and $x > 0$. Then, if $b \neq y$, the four cases above are mutually exclusive. More specifically, if $b > y$, then:
\begin{enumerate}
\item either $b = a+x+y$, we are in case (2) and output $z = b-a$,
\item or $b \neq a+x+y$ and $a > x$, we are in case (1) and output $z = b-a$,
\item or $b \neq a+x+y$ and $a < x$, we are in case (4) and output $z = b+a$.
\end{enumerate}
Symmetrically, if $b < y$, then:
\begin{enumerate}
\item either $y = a+b+x$, we are in case (3) and output $z = b+a$, 
\item or $y \neq a+b+x$ and $a > x$, we are in case (4) and output $z = b+a$,
\item or $y \neq a+b+x$ and $a < x$, we are in case (1) and output $z = b-a$.
\end{enumerate}
If $b = y$, we can be neither in case (2) (because then $b = a+x+y > y$) nor in case (3) (because then $y = a+b+x > b$). Hence, if $b = y$, we can be either in case (1) or in case (4). In both cases, we also have $a = x$, which implies that we cannot distinguish between them. To resolve the ambiguity, we query $w = d(v, v')$, which can be extracted from one regular query for $v$ and one regular query for $v'$, as described in the proof of Proposition~\ref{pr:voter_queries}. Then, 
\begin{enumerate}
\item either $w = y+a$, we are in case (1) and output $z = b-a$,
\item or $w = y-a$, we are in case (4) and output $z = b+a$.
\end{enumerate}

If both $a = 0$ and $x = 0$, we output $z = b = y$. If $a = 0$ and $x > 0$, we output $z = b$ (in this case, either $c' < v'$ and $z = b = y - x$, or $v' < c'$ and $z = b = y+x$). If $a > 0$ and $x = 0$, we output $z = y$ (in this case, either $c < v$ and $z = y = b - a$, or $v < c$ and $z = y = b+a$). 

In all cases, we obtain the value of $z$ from the responses of at most three regular queries for a voter $v \in \Cl(c)$ and at most three regular queries for a voter $v' \in \Cl(c')$. 
\qed\end{proof}


\bibliography{references_merged}

\end{document}